\renewcommand{\thesubfigure}{\thefigure.\arabic{subfigure}}
\renewcommand{\p@subfigure}{}
\renewcommand{\@thesubfigure}{\thesubfigure:\hskip\subfiglabelskip}
\newcommand{\Int}{\mbox{int}}
\newcommand{\bdy}{\mbox{bdy}}
\newcommand{\Nrv}{\mbox{Nrv}}
\newcommand{\near}{\delta} 
\newcommand{\dnear}{\delta_{\Phi}} 
\newcommand{\assign}{\mathrel{\mathop :}=}
\newcommand{\sk}{\mbox{sk}}
\newcommand{\sn}{\mathop{\delta}\limits^{\doublewedge}} 
\newcommand{\eot}{\qquad \textcolor{blue}{\Squaresteel}}
\newtheorem{example}{Example}
\newtheorem{remark}{Remark}
\newtheorem{definition}{Definition}
\newtheorem{lemma}{Lemma}
\newtheorem{theorem}{Theorem}
\begin{document}

\title{Geodesics of Triangulated Image Object Shapes.\\
Approximating Image Shapes via Rectilinear and Curvilinear Triangulations}


\author[M.Z. Ahmad]{M.Z. Ahmad$^{\alpha}$}
\email{ahmadmz@myumanitoba.ca}
\address{\llap{$^{\alpha}$\,}
Computational Intelligence Laboratory,
University of Manitoba, WPG, MB, R3T 5V6, Canada}

\author[J.F. Peters]{J.F. Peters$^{\beta}$}
\email{James.Peters3@umanitoba.ca}
\address{\llap{$^{\beta}$\,}
Computational Intelligence Laboratory,
University of Manitoba, WPG, MB, R3T 5V6, Canada and
Department of Mathematics, Faculty of Arts and Sciences, Ad\.{i}yaman University, 02040 Ad\.{i}yaman, Turkey}
\thanks{The research has been supported by the Natural Sciences \&
Engineering Research Council of Canada (NSERC) discovery grant 185986 
and Instituto Nazionale di Alta Matematica (INdAM) Francesco Severi, Gruppo Nazionale per le Strutture Algebriche, Geometriche e Loro Applicazioni grant 9 920160 000362, n.prot U 2016/000036.}

\subjclass[2010]{Primary 54E05 (Proximity); Secondary 68U05 (Computational Geometry)}

\date{}

\dedicatory{Dedicated to P. Alexandroff and Som Naimpally}

\begin{abstract}
This paper introduces the geodesics of triangulated image object shapes. 
Both rectilinear and curvilinear triangulations
of shapes are considered. 
The triangulation of image object shapes leads to collections of what are known as nerve complexes that provide a workable
basis for the study of shape geometry.
A nerve complex is a collection of filled triangles with a common vertex. 
Each nerve complex triangle has an extension called a spoke, which provides an effective means of covering shape interiors. 
This leads to a geodesic-based metric for shape approximation which offers a straightforward means of assessing, comparing and classifying the shapes of image objects with high acuity. 
\end{abstract}

\keywords{Geodesics, Image Object Space, Shape Geometry, Triangulation}

\maketitle

\section{Introduction}
Digital image object shapes are considered in,{\em e.g.},~\cite{latecki2000shape}\cite{iwata2002shape}\cite{mebatsion2013automatic}\cite{fatemi2016shapes}\cite{muse2004definition} and more recently in~\cite[\S 5.4]{Peters2016CP},\cite[\S 7.3ff]{Peters2017ComputerVision},~\cite{peters2017proximal,Ahmad2017aXivDeltaComplexes,Peters2017arXivPlanarShapes}.  
In this paper, the basic approach in detecting shapes in 2D images is to decompose them based on selected vertices (pixel locations), into collections of filled triangles called simplicial complexes (briefly, complexes) that cover image object shapes. 
As a result, we can consider the geodesics (locally length-minimizing curves) embedded in complexes. 
This provides us with a means of assessing the acuity of such complexes in approximating the shapes of image objects.

Briefly, a \emph{geometric simplicial complex} 
is the convex hull of a set of vertices $S$, {\em i.e.}, the smallest convex set containing $S$.  Geometric simplexes in this paper are restricted to vertices (0-simplexes), line segments (1-simplexes) and filled triangles (2-simplexes) in the Euclidean plane, since our main interest is in the extraction of features of simplexes superimposed on shapes in planar digital images.

An important form of simplicial complex is a collection of simplexes called a nerve, introduced by P. Alexandroff~\cite[\S 33, p. 39]{Alexandroff1932elementaryConcepts}.  A planar complex $K$ is a \emph{nerve}, provided the simplexes in $K$ have nonempty intersection equal to a vertex (called the nucleus of the nerve).    An Alexandroff nerve of a complex $K$ (denoted by $\Nrv K$) is a collection of 2-simplexes $\blacktriangle$ in the triangulation of a plane region, defined by
\[
\Nrv K = \left\{\blacktriangle\in K: \bigcap \blacktriangle\neq \emptyset\right\}\ \mbox{(Nerve complex)}.
\]
The \emph{nucleus} of a nerve complex $K$ is a vertex $p$ common to the 2-simplexes in a nerve 
. A collection of 2-simplexes $\blacktriangle$ in a nerve with nucleus $p$ have nonempty intersection, since they share the nucleus.

The Alexandroff nerve complex has recently been extended with collections of triangles called $k$-spokes.  

\begin{figure}
\centering
\begin{subfigure}[Rectilinear triangulation]
{\includegraphics[width=30mm]{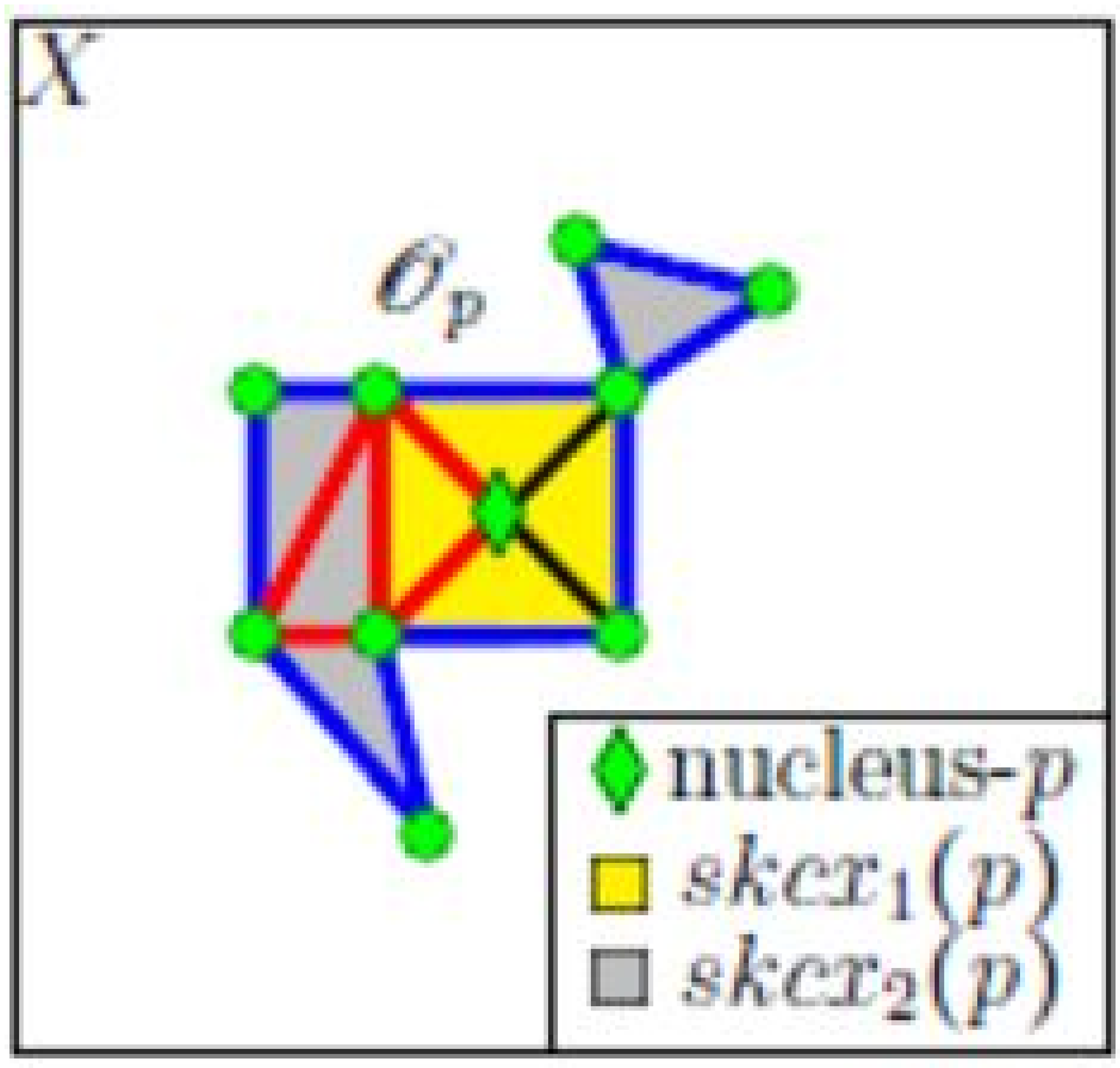}
\label{subfig:object_space_rect}
}
\end{subfigure}
\begin{subfigure}[Curvilinear triangulation]
{\includegraphics[width=30mm]{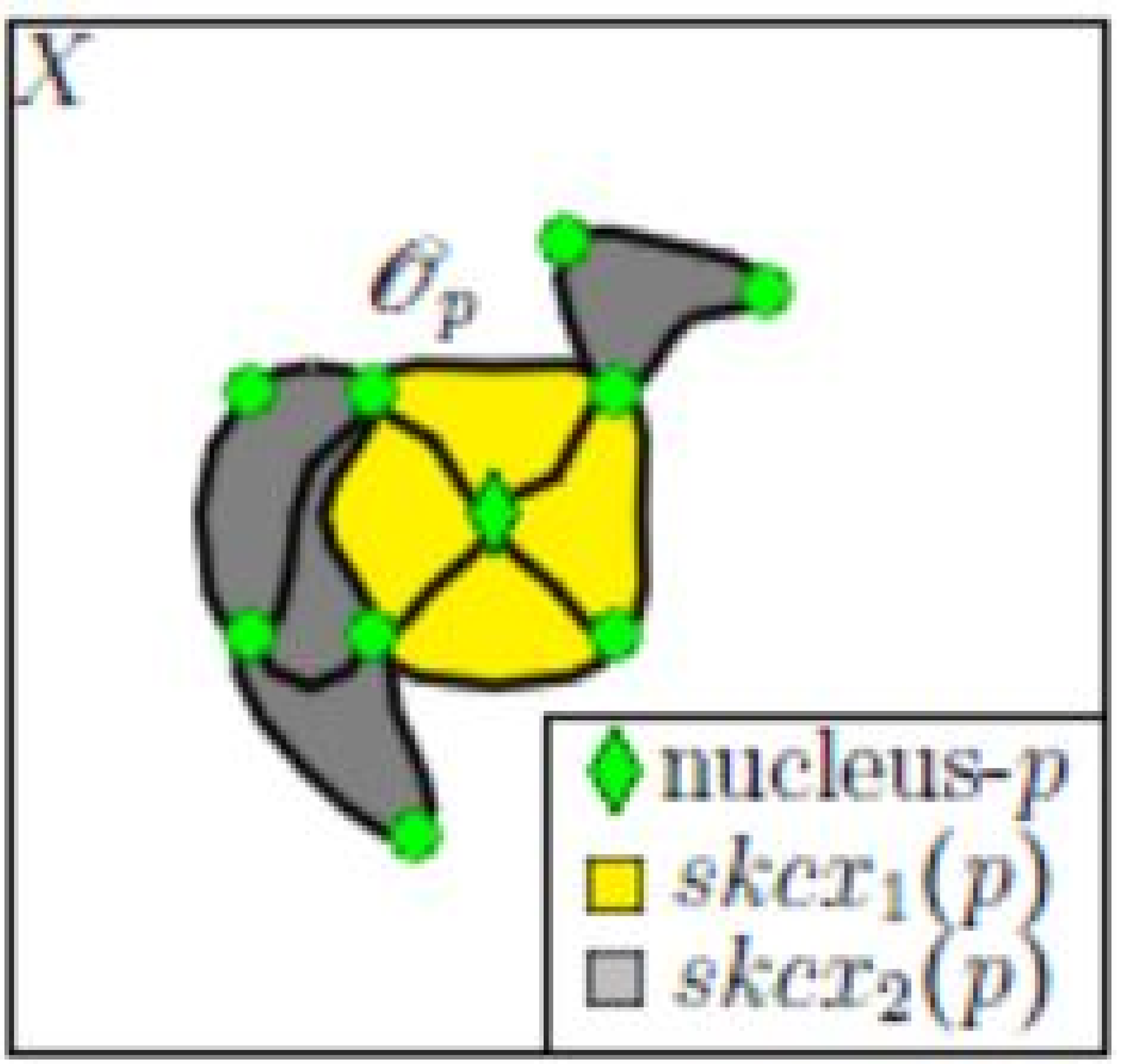}
\label{subfig:object_space_curv}
}
\end{subfigure}
\caption{Sample planar rectilinear and curvilinear triangulations}
\label{fig:object_space_curv_rect}
\end{figure}

\begin{definition}\label{def:k-spoke}\cite[def.~8]{Ahmad2017aXivDeltaComplexes}
A $k$-spoke denoted by $sk_k$, $k\geq 0$ and $k \in  \mathbb{Z}$ is a topological structure which generalizes the notion of a nerve complex. 
A $sk_k$ in a simplicial complex $K$ is a simplex (or a Delta($\Delta$)-set) that has a non empty intersection with a simplex (or a Delta($\Delta$)-set) in the $sk_{k-1}$. 
This is a recursive definition with the base case $sk_0$ equal to the nerve complex nucleus. 
This can be formally defined as each element of the set $\{\Delta \subseteq K \backslash \{\bigcup sk_{k-1}\}\mid \Delta \bigcap \{\bigcup sk_{k-1}\} \neq \phi \}$ for $k>1$ and for $k=0$ it is equal to the nucleus.
\eot
\end{definition}
\noindent Using this notion of a $k$-spoke, we can further define a $k$-spoke complex 

\begin{definition}\cite[def.~9]{Ahmad2017aXivDeltaComplexes}\label{def:spoke_complex}
A $k$-spoke complex ($\mbox{skcx}_k$) is the union of all the $k$-spokes ($sk_k$) in an image. 
\eot
\end{definition}

\begin{example}
Sample rectilinear and curvilinear spokes complexes are shown in Fig.~\ref{fig:object_space_curv_rect}.
\qquad \textcolor{blue}{\Squaresteel}
\end{example}

\noindent \noindent Spoke complexes provide the backbone of triangulated image object spaces.
\begin{definition}\cite[def.~13]{Ahmad2017aXivDeltaComplexes}\label{def:object_space}
An  object space, $\mathscr{O}_p$, is defined as the union of all the k-spoke complexes with a vertex $p$ as the nucleus, {\em i.e.}, $\mathscr{O}_p \assign \bigcup_k skcx_k K(p)$.
\eot
\end{definition}

\noindent The notion of $k$-spokes leads to useful geometric structures called $k$-spoke chains.

\begin{definition}\label{def:kspoke_chain}
The $k$-spoke chain is denoted by $skchain_k$. It is defined as, $\{\bigcup_{j=0}^kA_j \in \mbox{skcx}_j\mid A_i \cap A_{i+1} \neq \phi \}$. 
\qquad \textcolor{blue}{\Squaresteel}
\end{definition} 

\section{Preliminaries}
This section briefly introduces structures useful in characterizing image object shapes.
\subsection{Object Boundary and Interior Spokes}
In this paper we introduce a boundary operator $\bdy(.)$ which extracts the boundary of a triangulated topological space.
\begin{definition}\label{def:bdy_object_space}
Let $X$ be a topological space that has been triangulated, and $\mathscr{O}_p \subseteq X$ be the object space. 
Then, the boundary operator $\bdy(\mathscr{O}_p)=\{\bigcup \Delta^1 C \neq \Delta^2 A \cap \Delta^2 B\}$, where $\Delta^2 A, \Delta^2 B, \Delta^1 C \in \mathscr{O}_p$. 
Here the $\Delta^n$ are the $n$-simplices.
\eot
\end{definition}
Let us talk about the regularity of the boundary of the object space ($\mathscr{O}_p$). 
For this purpose we define boundary spokes($\bdy sk$).
\begin{definition}\label{def:bdy_sk}
Each of the $\Delta^1 \in \bdy(\mathscr{O}_p)$ is a face of a $Delta^2 \in \mathscr{O}_p$. 
Each of these $2$-simplexes is called the boundary spoke $\bdy sk$.
\eot
\end{definition}
Here an important observation should be noted.
\begin{remark}
It is possible that the boundary spokes $\bdy sk$ belong to $skcx_k$ with different values of $k$. 
This leads to an observation about the regularity of an object space $\mathscr{O}_p$.  Building on this observation we define the regularity of the Object space $\mathscr{O}_p$.
\eot
\end{remark}

\begin{definition}\label{def:regular_object_space}{\bf Object Space Regularity Condition}.\\
For an object space $\mathscr{O}_p$, if all $\bdy sk(\mathscr{O}_p) \in skcx_k(p)$ for the same $k$, then $\bdy(\mathscr{O}_p)$ is said to be regular.
\eot
\end{definition}
If the condition in the Def.~\ref{def:regular_object_space} is not satisfied, the object space is termed irregular. 
Since, we have theorized that the boundary spokes ($\bdy sk(\mathscr{O}_p)$) can be in different $k$-spoke complexes ($skcx_k(p)$), a different topological structure is required to characterize them.
\begin{definition}\label{def:bdy_skcx}
For the boundary spokes $\bdy sk(\mathscr{O}_p)$,  the boundary spoke complex is defined as, $\bdy skcx(\mathscr{O}_p):= \bigcup \bdy sk(\mathscr{O}_p)$.
\eot
\end{definition}

In addition to the notion of the boundary, a closely related notion of the interior follows. 

\begin{definition}\label{def:int_object_space}
The interior of the object space is defined as $\Int(\mathscr{O}_p):=\mathscr{O}_p \backslash \bdy(\mathscr{O}_p)$, \em i.e., $\Int(\mathscr{O}_p)$ is the object space $\mathscr{O}_p$ without its boundary.
\eot
\end{definition}
We draw a parallel by extending the notion of the boundary spoke complex to interior spoke complex.
\begin{definition}\label{def:int_skcx}
The interior spoke complex(denoted by intskcx(Op)) is defined by $\Int skcx(\mathscr{O}_p):=$ $\bigcup \Delta^2 \not\in \bdy skcx(\mathscr{O}_p)$, where $\Delta^2$ are the $2$-simplices in $\mathscr{O}_p$.
\eot
\end{definition}

\begin{example}\label{ex:boundaryAndInterior}{\bf Object Space Boundary and Interior}.\\
For this example, consider the topological space in Fig.~\ref{subfig:object_space_rect}. 
Here, the object space $\mathscr{O}_p$ is a subset of a topological space $X$, which is
the Euclidean plane. 
The object space has been triangulated by selecting vertices (all the green points in the figure) from the space. 
In Fig.~\ref{subfig:object_space_rect}, the point denoted by a green diamond is the nucleus(point $p$) of the maximum nuclear cluster. 
Next, consider $k$-spokes, introduced in Def.~\ref{def:k-spoke}. 
There are $4$ $sk_1$ spokes which have the nucleus in common and are represented by yellow triangles, and $4$ $sk_2$ spokes represented as gray triangles. 
Similarly, there are two spoke complexes ({\em cf.} Def.~\ref{def:spoke_complex}) with $4$ triangles each,  namely, $skcx_1(p)$ (union of yellow $\triangle$s) and the $skcx_2(p)$ (union of gray $\triangle$s).
The object space ($\mathscr{O}_p$) is the union of these spoke complexes as per Def.~\ref{def:object_space}. 

Now consider the boundary of the object space $\mathscr{O}_p$.
To clarify the concept of the boundary of the object space $\bdy(\mathscr{O_p})$ as defined in the Def.~\ref{def:bdy_object_space}, we identify all the $1$-simplices(lines) that cannot be expressed as the intersection of $2$-simplices in $\mathscr{O}_p$. 
All such $1$-simplices are coloured blue in Fig.~\ref{subfig:object_space_rect}. 
The union of all these simplices is the boundary of the object space($\bdy(\mathscr{O}_p)$). 
Next, the boundary spoke complexes $\bdy skcx(\mathscr{O}_p)$, introduced in Def.~\ref{def:bdy_skcx}. 
We can observe that each of the $1$-simplices that form the boundary ($\bdy (\mathscr{O}_p)$) are a face of one of the $2$-simplices of the object space $\mathscr{O}_p$. 
Each of these $2$-simplices is a boundary spoke ($\bdy sk(\mathscr{O}_p)$) and the union of these spokes gives us the boundary spoke complex of the object space($\bdy skcx(\mathscr{O}_p)$). 
In the figure, the $\bdy skcx(\mathscr{O}_p)$ is the union of all the triangles except the ones bounded by red lines.
We can see that the triangles in the boundary spoke complex ($\bdy skcx(\mathscr{O}_p)$) are components of different spoke complexes. Hence, from Def.~\ref{def:regular_object_space}, this object space is irregular.

Finally, consider the notion of interior (\Int) and the interior spoke complex ($\Int skcx$) of the object space $\mathscr{O}_p$. 
The interior of the object space($\Int(\mathscr{O}_p)$), Def.~\ref{def:int_object_space}, is the whole object space except the boundary, which is the blue line. 
Moreover, the union of the triangles with red boundary form the interior spoke complex ($\Int skcx(\mathscr{O}_p)$) as defined in Def.~\ref{def:int_skcx}. 

Similar observations for the boundary and interior of $\mathscr{O}_p$ in the curvilinear triangulation in Fig.~\ref{subfig:object_space_curv}.
\eot 
\end{example}

\subsection{B-splines and Non-Uniform Rational B-splines (NURBS)}
Example~\ref{ex:boundaryAndInterior} illustrates some topological structures useful in image object shape analysis using rectilinear triangulation of an object space. 
These structures are built on the notion of sets and thus directly carry over to curvilinear triangulation, achieved using B-splines. 

\begin{definition}\label{def:bspline} {\bf B-Splines}.\\
Let a $m+1$-dimensional vector be defined as $T=\{t_0,t_1,\cdots,t_m\}$, where $T$ is a non-decreasing sequence with $t_i \in [0,1]$. 
The vector $T$ is called a knot vector. A set of control points are also defined as $P_0,\cdots,P_n$. 
The degree of the spline is defined as $p=m-n-1.$ The basis functions are defined as:

\begin{align*}
N_{i,0}(t) &= \begin{cases}
    1 & \text{if $t_i \leq t <t_{i+1}, t_i <t_{i+1}$,}\\
    0 & \text{otherwise.}
   \end{cases}\\
N_{i,j}(t) &= \frac{t-t_i}{t_{i+1}-t_i}N_{i,j-1}(t)+\frac{t_{i+j+1}-t}{t_{i+j+1}-t_{i+1}}N_{i+1,j-1},\mbox{where}\\
           & j=1,2,\cdots, p.
\end{align*}					
The curve defined by $C(t)=\sum_{i=0}^{n}P_iN_{i,p}(t)$ is a B-spline.
\eot
\end{definition}

In this paper, B-Splines are extended with weights to obtain a useful form of NURBS (NonUniform Rational B-Spline). 
The following definition of NURBS has been adapted from \cite[def.~4.3 p. 130]{rogers2000introduction}.

\begin{definition}\label{def:nurbs}{\bf NURBS}.\\
Let $P_0,\cdots, P_n$ be a set of control points, $h_0,\cdots, h_n \geq 0$ be a set of weights and $N_{i,p}$ be the basis functions as per Def.~\ref{def:bspline}, where the degree of the spline is $p$. 
Further, let $t$ be a parameter in the range $[0,1]$. Then a \emph{non-uniform rational B-spline} ({\bf NURBS}) is defined as:
\[
C(t)=\frac{\sum_{i=0}^{n}P_ih_iN_{i,p}(t)}{\sum_{i=0}^{n}h_iN_{i,p}(t)}=\sum_{i=0}^{n}P_iR_{i,p}(t).
\]
In addition, rational B-spline basis functions are defined as follows:
\[
R_{i,p}(t)=\frac{h_iN_{i,p}(t)}{\sum_{i=0}^{n}h_iN_{i,k}(t)}.
\] 
 \eot
\end{definition}

\begin{remark}
The B-splines (Def.~\ref{def:bspline}) are a special case of NURBS (Def.~\ref{def:nurbs}) for the weights $h_0=\cdots=h_n=1$.
This is evident when we substitute these weights in Def.~\ref{def:nurbs} and use the property: $\sum_{i=0}^{n}N_{i,p}=1$. 
 \eot
\end{remark}

NURBS can be used to represent straight lines and conic sections\cite[\S~4.5]{rogers2000introduction}.
Recall that a convex hull of a set of points is defined as follows.
\begin{definition}\label{def:convex_hull}
For a set of points $x_i \in S$ and a weight $\alpha_i$ for each point, the convex hull of the set is defined as:
$conv(S)=\{\sum_{i=1}^{|S|}\alpha_i x_i \,  s.t. \, (\forall i\, \alpha_i \geq 0),\, \sum_{i=1}^{|S|} \alpha_i =1 \}$
 \eot
\end{definition}

Next, consider a few important properties of NURBS.
\begin{theorem}\label{thm:nurbs_properties}\cite[p.131]{rogers2000introduction}
Let $C(t)$ be NURBS curve.Then,
\begin{compactenum}[1$^o$]
\item \textbf{Strong Convex Hull Property: }For $h_i>0$, $C(t)$ lies within the union of convex hulls formed by $p+1$ successive control points $P_i$, where $p$ is the degree of the curve.
\item \textbf{Projective Invariance:} $C(t)$ is invariant under a projective transformation, $\pi(x):X \rightarrow Y$, i.e. applying $\pi$ to the control points applies it to $C(t)$. 
\end{compactenum}
\end{theorem}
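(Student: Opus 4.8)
The plan is to prove the two properties separately, each reducing to elementary facts about the B-spline basis functions $N_{i,p}$ of Def.~\ref{def:bspline} and the rational basis functions $R_{i,p}$ of Def.~\ref{def:nurbs}.

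For the \textbf{Strong Convex Hull Property}, first I would record the three facts about $N_{i,p}$ that drive everything: (i) non-negativity, $N_{i,p}(t)\geq 0$; (ii) the partition-of-unity identity $\sum_{i=0}^n N_{i,p}(t)=1$ already invoked in the Remark following Def.~\ref{def:nurbs}; and (iii) \emph{local support}, namely that $N_{i,p}$ vanishes outside the knot interval $[t_i,t_{i+p+1})$, so that on any single knot span $[t_j,t_{j+1})$ at most the $p+1$ consecutive functions $N_{j-p,p},\dots,N_{j,p}$ are nonzero. Facts (i) and (iii) are read off the recursion in Def.~\ref{def:bspline} by induction on the degree, while (ii) is the assumed identity. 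I would then transfer these to the rational basis: since $h_i>0$, the weighting preserves non-negativity, so $R_{i,p}(t)\geq 0$; the common denominator makes $\sum_i R_{i,p}(t)=1$ immediate (this does not even require (ii)); and $R_{i,p}$ inherits the support of $N_{i,p}$.

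With these in hand the convex-hull statement is essentially a rewording of Def.~\ref{def:convex_hull}. Fix $t$ in a knot span $[t_j,t_{j+1})$. Then $C(t)=\sum_i P_i R_{i,p}(t)$ collapses to a sum over the $p+1$ indices $j-p,\dots,j$, with coefficients $R_{i,p}(t)$ that are non-negative and sum to $1$. By Def.~\ref{def:convex_hull} this exhibits $C(t)$ as a point of $conv\{P_{j-p},\dots,P_j\}$. Letting $t$ range over all knot spans and taking the union of these hulls yields the claim.

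For \textbf{Projective Invariance}, the idea is the standard homogeneous-coordinate lift. I would associate to each control point $P_i$ the weighted homogeneous point $\widetilde P_i=(h_iP_i,h_i)$ and form the \emph{non-rational} B-spline $\widetilde C(t)=\sum_i \widetilde P_i N_{i,p}(t)$ one dimension up; dividing by its final coordinate recovers $C(t)$ exactly as written in Def.~\ref{def:nurbs}. A projective map $\pi$ is by definition linear on homogeneous coordinates, represented by a matrix $M$, so applying $\pi$ to the control data replaces $\widetilde P_i$ by $M\widetilde P_i$. Linearity of $N_{i,p}$ in the coefficients gives $\sum_i (M\widetilde P_i)N_{i,p}(t)=M\,\widetilde C(t)$, and projecting back commutes with $M$ up to the defining rescaling, so the transformed control net produces exactly $\pi(C(t))$. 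The routine part is the induction establishing non-negativity and local support of $N_{i,p}$; the genuinely delicate step is this last argument, where one must keep careful track of how the weights $h_i$ enter the homogeneous lift and verify that the perspective division is compatible with the linear action of $M$ — it is precisely the rational (weighted) structure, absent in plain B-splines, that upgrades affine to full projective invariance.
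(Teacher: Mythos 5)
Your argument is correct, but there is nothing in the paper to compare it against: Theorem~\ref{thm:nurbs_properties} is stated as an imported result, cited to \cite[p.131]{rogers2000introduction}, and the paper supplies no proof of its own (only the illustrative Example~\ref{exm:example_2}). What you have written is the standard textbook derivation, essentially the one in the cited reference: non-negativity, partition of unity, and local support of the $N_{i,p}$ transfer to the rational basis $R_{i,p}$ when $h_i>0$, so that on each knot span $C(t)$ is a convex combination of $p+1$ consecutive control points in the sense of Def.~\ref{def:convex_hull}; and the homogeneous lift $\widetilde P_i=(h_iP_i,h_i)$ turns the NURBS into a polynomial B-spline one dimension up, on which a projective map acts linearly, with the perspective division recovering $\pi(C(t))$ together with the transformed weights. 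The only point worth making explicit is that the identity $\sum_i R_{i,p}(t)=1$ presupposes the denominator $\sum_i h_iN_{i,p}(t)$ is strictly positive, which does use $h_i>0$ together with the fact that at least one $N_{i,p}(t)$ is positive on the valid parameter range; with that noted, your proof is complete and fills a gap the paper leaves to the reference.
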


\begin{figure}
\centering
\begin{subfigure}[Theorem~\ref{thm:nurbs_properties}.1: Strong Convex Hull Property]{
\includegraphics[width=0.35\textwidth]{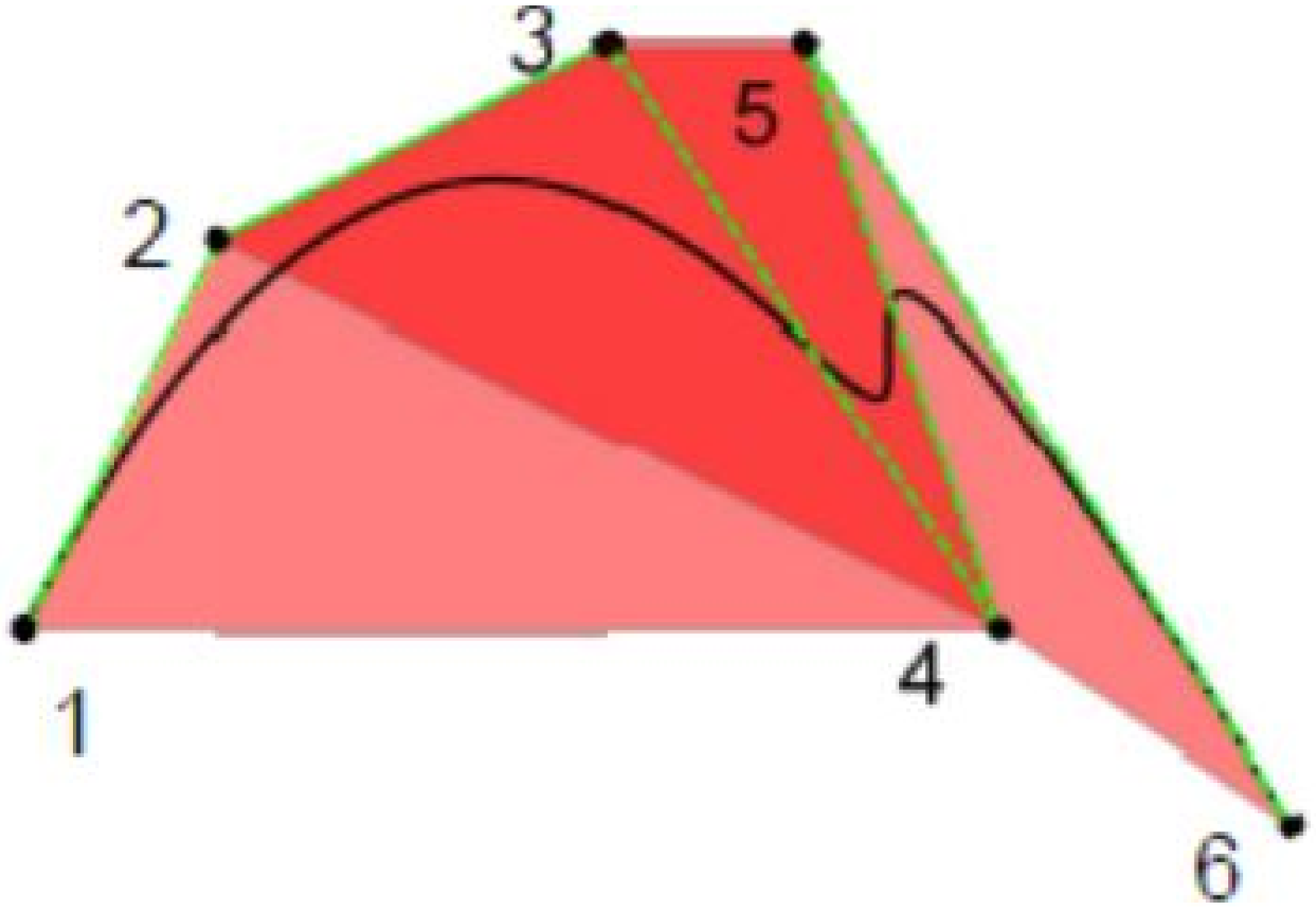}
\label{subfig:strong_convex_hull_property}}
\end{subfigure}
\begin{subfigure}[Theorem~\ref{thm:nurbs_properties}.2: Projection Invariance of NURBS]{
\includegraphics[width=0.35\textwidth]{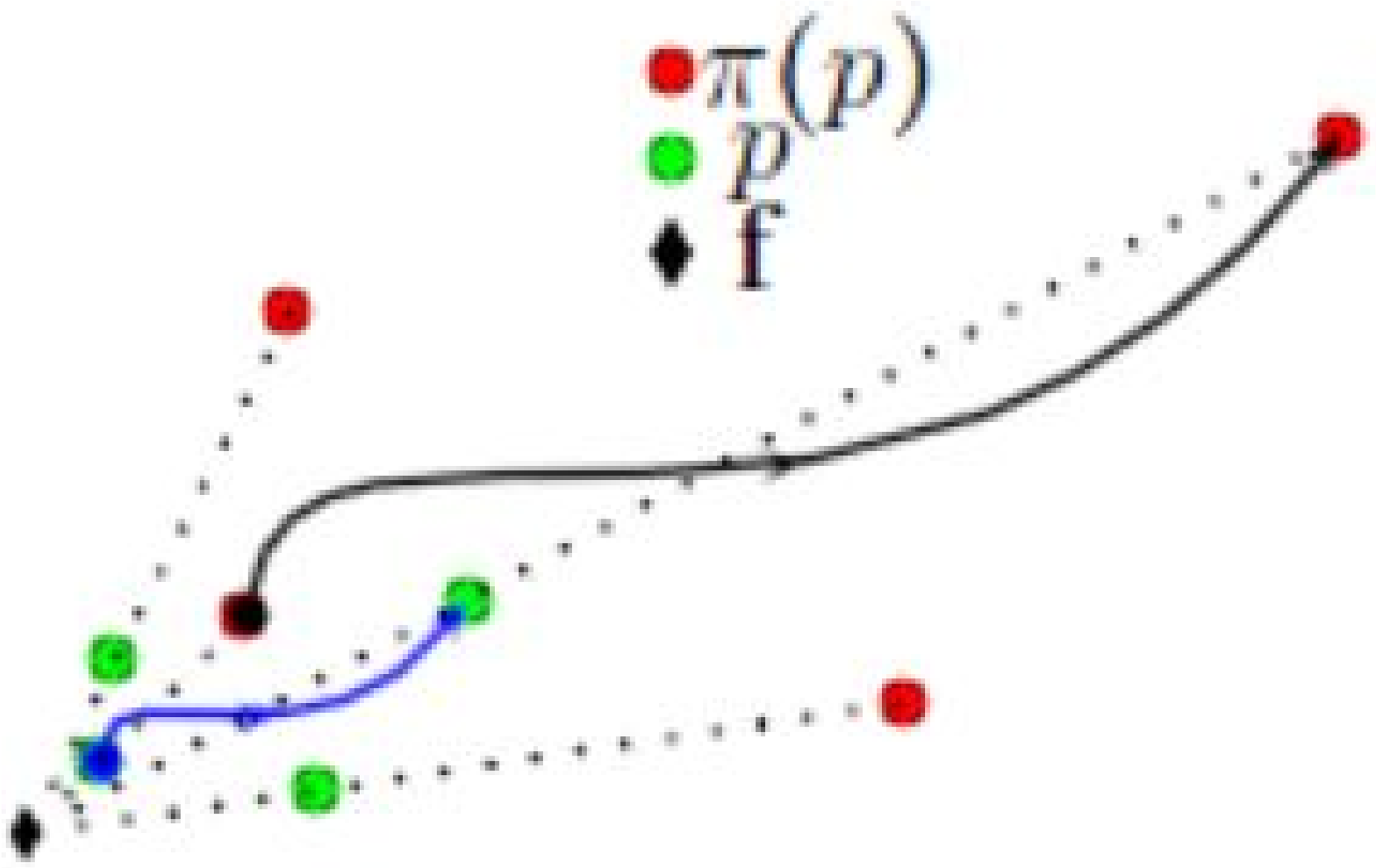}
\label{subfig:projection_invariance}}
\end{subfigure}
\caption{Fig.~\ref{subfig:strong_convex_hull_property} illustrates the strong convex hull property and the Fig.~\ref{subfig:projection_invariance} depicts the projection invariance of the NURBS. 
Both of these properties have been presented in Thm.~\ref{thm:nurbs_properties}.}
\label{fig:depiction_theorem2}
\end{figure}

\begin{example}\label{exm:example_2}{\bf Two Properties of NURBS}.\\
Theorem \ref{thm:nurbs_properties} Properties are illustrated with the following examples.
\begin{compactenum}[1$^o$]
\item The strong convex hull property of the NURBS is used extensively in this work. 
This property states that a NURBS of degree $p$ lies in the convex hull of $p+1$ successive control points. 
We can see in the Fig.~\ref{subfig:strong_convex_hull_property}, that a NURBS lies in the convex hull of control points $1,2,3,4$, then in the convex hull of $2,3,4,5$ and finally in the convex hull of $3,4,5,6$. 

This illustrates the strong convex hull property for NURBS shown of degree $3$. 
Thus this NURBS lies in the convex hulls of $4$ successive control points. 
\item The second property that we will be discussing in this work is the invariance of NURBS under projective transformations. 
This means that applying the projection to the NURBS curve would yield the same result as applying it to the control points alone. 
The weights would also be transformed as a result of theorem \ref{thm:nurbs_properties}. 
To aid the description of this property, we use Fig.~\ref{subfig:projection_invariance}. 
Let us look at the NURBS(in blue) defined by the $4$ control points shown in green. 
The black diamond marks the optical center of the projective transform. 
The projected NURBS(in black) is defined by the $4$ control points shown in red. 
It must be noted that instead of transforming the whole blue NURBS into the black NURBS by the projective transformation $\pi$, we can just transform the green control points to the red control points with the same transformation. 
Then we can just construct the NURBS(using Def.~\ref{def:nurbs}) using modified weights. 
The whole projective transformation can be seen as analogous to film projection by a light source.  
\end{compactenum}
\eot
\end{example}

\subsection{Object Space Equipped with Proximities}
Let $\mathscr{O}_p$ be an object space with the vertex $p$ as the nucleus. 
We will equip this object space with different proximity relations, namely Lodato($\near$), strong($\sn$) and descriptive($\dnear$) proximities. 
Moreover, we define a few more terminologies. 
Let $\triangle(X)$ denote a list of filled triangles in the topological space $X$. 
The image function ($img(X)$) defines color intensity of the image at each point in the topological space $X$, which is an euclidean plane for the case of planar . 
The operator $\nabla(\triangle s)$ is a set of complex numbers($\mathbb{C}^1=re^{i\theta}$) representing the gradient magnitude and orientation of the image function ($img(X)$) at each of the vertices, $\|edge(\triangle s)\|$ is the set of edge lengths and the $Area(\triangle s)$ is the set containing areas of the filled triangles. 

Consider the notion of a filled triangle, which is the intersection of closed half planes~\cite{peters2017proximal}. 
A filled triangle can be represented as a planar region defined by three inequalities. 

\begin{definition}\label{def:closed_filled_triangle} {\bf Filled Triangle}.\\
A filled triangle ($\blacktriangle$) is defined as intersection of planar regions specified by following inequalities:
\begin{align*}
ax+by &\leq c, \\
dx+ey &\leq f, \\
gx+hy &\leq i,\\
\end{align*}
where all $a,b,c,d,e,f,g,h,i \in \mathbb{R}$.
\eot
\end{definition}
From Def.~\ref{def:closed_filled_triangle}, filled triangles are closed sets. 
This is illustrated in Fig.~\ref{fig:filled_triangles}. 

We can consider topological structures on such closed sets (\cite{requicha1978mathematical}). However, we can not talk about strong nearness($\sn$) between adjacent filled triangles. 
This is obvious from the axiom \textbf{(snN4)} in \cite[\S~2.3 p. 5]{peters2017proximal}, which states that 
\[
\Int A \cap \Int B \Rightarrow A \sn B\ \mbox{(Strongly Near Axiom {\bf snN4})}.
\]
From the definition of $k$-spokes(\ref{def:k-spoke}) and $k$-spoke complex, it is obvious that the intersection between  two $k$-spoke complexes with different values of $k$ can only have a vertex or an edge as an intersection. 
Both the edge and the vertex are elements of the boundary of the spokes involved. 
Suppose $skcx_{k}$ and $skcx_{\tilde{k}}$ are two adjacent spoke complexes in the same object space $\mathscr{O}_p$. 
Then, $skcx_{k} \cap skcx_{\tilde{k}} = \bdy (skcx_k) \cap \bdy (skcx_{\tilde{k}}) $. 
This leads to $\Int (skcx_k) \cap \Int (skcx_{\tilde{k}}) = \phi$ and thus we can only say that $skcx_k\ \near\ skcx_{\tilde{k}}$ from axiom \textbf{(P3)}\cite[\S~2.3 p.~5]{peters2017proximal} but we cannot conclude, that $skcx_k \sn skcx_{\tilde{k}}$. 
Hence, we can say that two adjacent spoke complexes in an object space($\mathscr{O}_p$) are near but not strongly near, since the interiors of adjacent spokes do not overlap.

\begin{wrapfigure}{L}{0.4\textwidth}
\captionsetup{justification=raggedright, singlelinecheck=false}
\begin{minipage}{0.4\textwidth}
\begin{pspicture}(-0.7,-0.7)(4,3.4)
\psframe[linecolor=black](-0.5,-0.5)(4,3.3)
\pspolygon[linewidth=1.5pt,fillstyle=solid,fillcolor=red](0,0)(1,0)(0.5,0.75)
\pspolygon[linewidth=1.5pt,fillstyle=solid,fillcolor=red, linestyle=dashed](0,1.2)(1,1.2)(0.5,1.95)
\pspolygon[linewidth=1.5pt,fillstyle=solid,fillcolor=red, linecolor=red](-0.1,2.1)(1.1,2.1)(0.5,3.05)
\pspolygon[linewidth=1.5pt,fillstyle=solid,fillcolor=red, linestyle=dashed](0,2.2)(1,2.2)(0.5,2.95)
\rput(2.5,0.5){$ax+by \leq c$}
\rput(2.5,0.2){$dx+ey \leq f$}
\rput(2.5,-0.1){$gx+hy \leq i$}
\rput(1.5,0.2){$\bigg\{$}
\rput(2.5,1.7){$ax+by < c$}
\rput(2.5,1.4){$dx+ey < f$}
\rput(2.5,1.1){$gx+hy < i$}
\rput(1.5,1.4){$\bigg\{$}
\rput(2.5,2.8){$ax+by < c+\sigma$}
\rput(2.5,2.5){$dx+ey < f+\sigma$}
\rput(2.5,2.2){$gx+hy < i+\sigma$}
\rput(1.3,2.5){$\bigg\{$}
\end{pspicture}
\caption[]{open and closed filled triangles}
\label{fig:filled_triangles}
\end{minipage}
\end{wrapfigure}
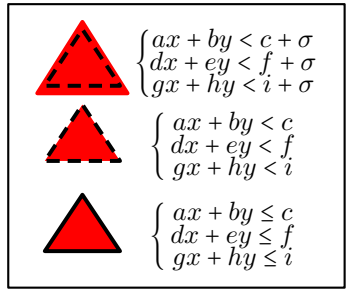

In practice in the triangulation of digital images, filled triangles are sub-regions of an image and neighbouring image regions are strongly near. 
To see this, observe that adjacent pixels have overlapping receptive fields\cite{irani1991improving}. 
Thus we must extend the notion of closed filled triangles defined in Def.~\ref{def:closed_filled_triangle} to the notion of open filled triangles. 
To do this, it is necessary to formulate a notion of open triangles which can ensure that the adjacent filled triangles have intersecting interiors and are thus strongly near.

A straightforward strategy is to replace the $\leq(\geq)$ with $<(>)$ in Def.~\ref{def:closed_filled_triangle}. 
This strategy is also displayed in Fig.~\ref{fig:filled_triangles}. 
This leads to a difficulty that the adjacent filled triangles formed as a result would not have a intersection. 
This is due to the fact the close filled triangles had either a vertex or an edge as intersection, which were boundary elements. 
The open filled triangles defined in this way result in intersecting vertices and edges to be excluded from both the triangles. 
This problem can be addressed by adding an arbitrarily small constant to each of the inequalities and then changing the $\leq(\geq)$ to $<(>)$. 
This will construct the open triangles with non-empty intersections and thus the resulting adjacent spoke complexes will be strongly near. 
Open filled triangles are defined as follows.

\begin{definition}\label{def:open_filled_triangles} {\bf Open Filled Triangles}.\\
An open filled triangle is defined as the region in the intersection prescribed by the following three inequalities with appropriate signs:
\begin{align*}
ax+by &< c+\sigma, \\
dx+ey &< f+\sigma, \\
gx+hy &< i+\sigma,\\
\end{align*}
where all $a,b,c,d,e,f,g,h,i \in \mathbb{R}$ and $\sigma$ is an arbitrarily small constant.
\eot
\end{definition}
 This type of an open filled triangle is shown in Fig.~\ref{fig:filled_triangles}.

Now extending this to the case of curvilinear triangles($\blacktriangle_{curv}$). 
Observe that the curvilinear triangles ($\blacktriangle_{curv}$) are a generalization of the rectilinear triangles($\blacktriangle_{rect}$). 
We can consider $\pi: \blacktriangle_{rect} \rightarrow \blacktriangle_{curv}$. 
Here $\pi$ is a projection. 
By looking at both the Figs.~\ref{subfig:object_space_rect} and \ref{subfig:object_space_curv} we can see that the combinatorial properties (number faces of degree $0,1$ and $2$) of both the $\blacktriangle_{rect}$ and $\blacktriangle_{curv}$ are the same. 
Moreover, each  of the straight lines in the $\mathscr{O}_p^{rect}$ is replaced by a curved line to give $\mathscr{O}_p^{curv}$. 
Thus, the intersection properties of the adjacent triangles are the same for both the $\blacktriangle_{rect}$ and $\blacktriangle_{curv}$. 
Moreover, we can generalize the concept of open and closed filled triangles to include the concept of $\blacktriangle_{curv}$. 
We replace the system of linear inequalities  with a system of nonlinear inequalities. 

\begin{figure}
\centering
\begin{subfigure}[Proximity relations in rectilinear spoke complexes($skcx_p$)]
{\includegraphics[width=0.35\textwidth]{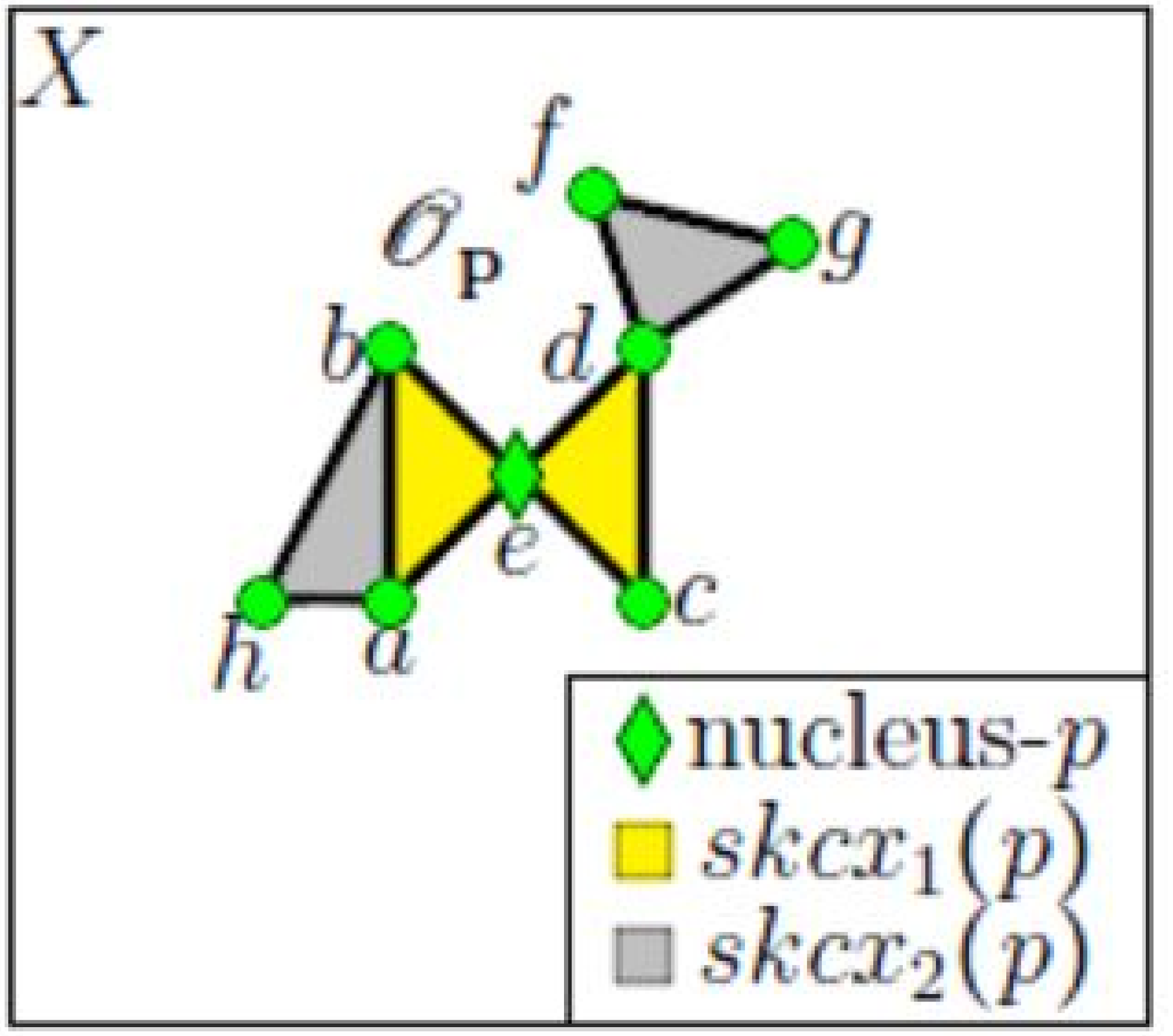}
\label{subfig:object_space_rect_ex4}}
\end{subfigure}
\begin{subfigure}[Proximity relations in curvilinear spoke complexes($skcx_p$)]
{\includegraphics[width=0.35\textwidth]{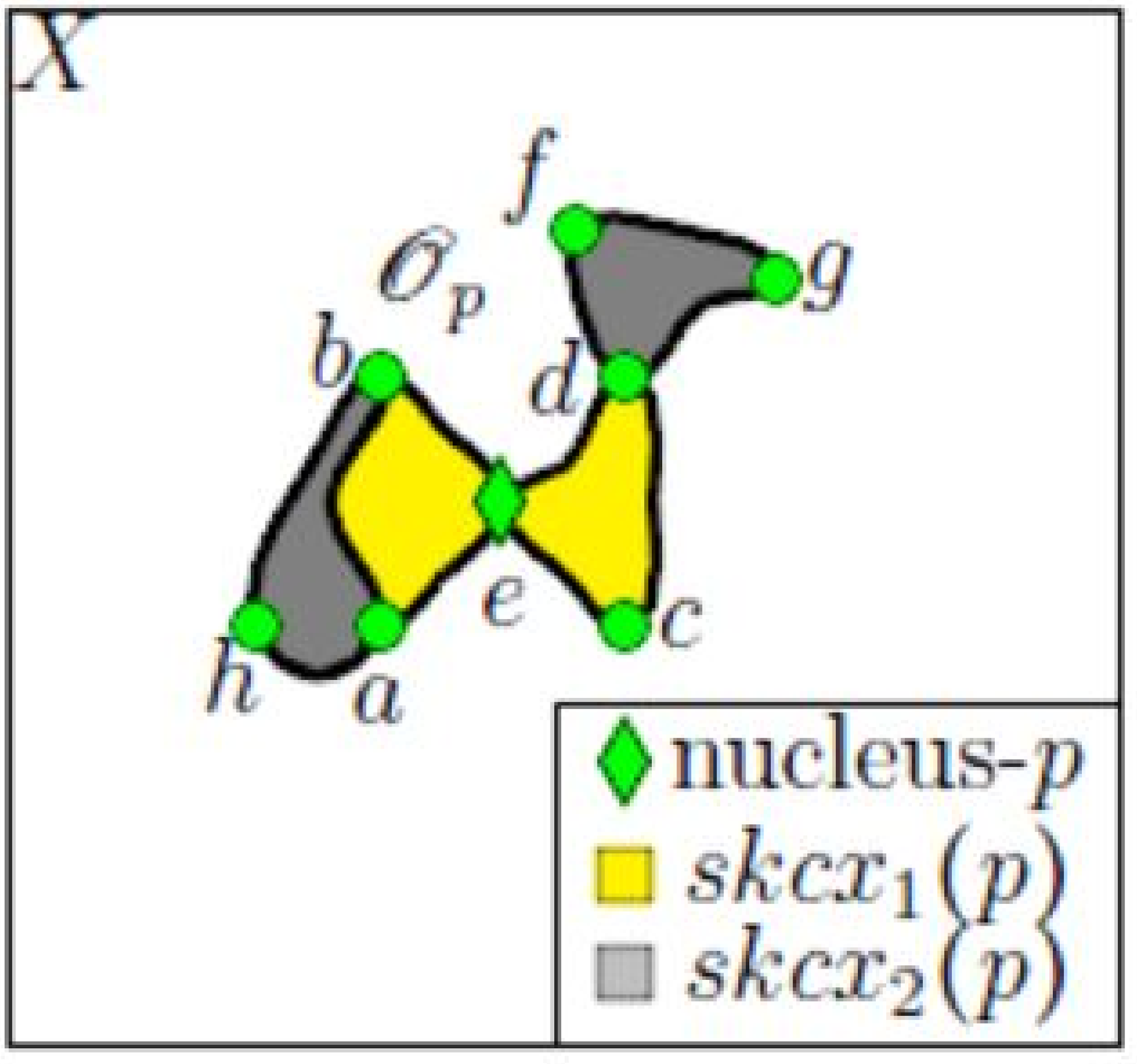}
\label{subfig:for_exmp6}}
\end{subfigure}
\caption{This figure illustrates descriptive($\dnear$) and strong proximities($\sn$) between spoke complexes($skcx_p$) in rectilinear and curvilinear object spaces}
\label{fig:proximity_skcx_object_spaces}
\end{figure}
\begin{definition}\label{def:opec_filled_triangle_curv}
A curvilinear open filled triangle is defined as the region in the intersection  prescribed by the following three inequalities with appropriate signs:
\begin{align*}
f(x,y) &< a+\sigma, \\
f(x,y) &< b+\sigma, \\
f(x,y) &< c+\sigma,\\
\end{align*}
where all $a,b,c \in \mathbb{R}$ and $\sigma$ is an arbitrarily small constant. $f(x,y)$ is a continuous function.
\eot
\end{definition}
\begin{remark}
A rectilinear open filled triangle is a special case of the curvilinear open filled triangle for $f(x,y)=ax+by$.
\end{remark}
Based on this remark, we can drop the word curvilinear or rectilinear for open filled triangles. 
Hence, from here on we mention both cases as filled open triangles and the curvilinear or rectilinear nature of the triangle will be dictated by the context of the discussion.

\subsection{Homotopy Equivalence}
Now, we discuss the preliminaries of the homotopy theory. We present the notion of homotopy equivalence.
\begin{definition}\label{def:homotpoy_equiv}
A map $f:X \longrightarrow Y$ is called a homotopy equivalence between the two spaces, $X$ and $Y$,  provided there is a map $g:Y \longrightarrow X$, such that $f \circ g = \mathbf{1}_{Y}=id_Y$ and $g \circ f = \mathbf{1}_{X}=id_X$. 
The $a \circ b$ is the composition of two maps.
\end{definition}
Let us discuss when the two continuous maps are homotopic.
\begin{definition}\label{def:homotopic_functions}
Suppose there are two continuous maps $f,g:X \longrightarrow Y$. 
These two maps are considered to be homotopic ($f \simeq g$), if there exists a family of continuous maps $f_t:X \longrightarrow Y$, continuously depending on the parameter $t \in [0,1]$ such that, $f_0:f$ and $f_t:g$.
\end{definition}
\begin{remark}
The family of functions $f_t$ in $F(X,Y)$ (the space of all the functions from space $X$ to space $Y$) is a path from function $f$ to function $g$.
\end{remark}
\begin{example}
Let us look at the two NURBS shown in Fig.~\ref{subfig:projection_invariance}. 
Here, we can can consider both the NURBS as separate topological spaces. 
The blue curve is $A$ and the black curve is $B$, both of which exist in a $2$D euclidean space. 
Here we can see that a simple scaling operation can be used to project $A$ on to $B$. 
Thus the map $f:A \rightarrow B$ can be represented as a matrix:
$\begin{bmatrix}
a & 0\\
0 & b
\end{bmatrix}$.
Moreover, we can also represent a projection from $B$ onto $A$ as $g:B \rightarrow A$. 
The map $g$ can be represented as a matrix:
$\begin{bmatrix}
\frac{1}{a}& 0\\
0&\frac{1}{b}
\end{bmatrix}$.
We can see that $f \circ g =id_B$ and $g \circ f=id_A$ as the center of projection for both the maps is the same. 
Thus, as per Def.~\ref{def:homotpoy_equiv}, $A$ is homotopically equivalent to $B$. 
\end{example}
Let us now define the concept of an inclusion and a retract.
\begin{definition}\label{def:inclusion_retract}
Let $A \subset X$ then a map $i:A \hookrightarrow X$ is the embedding of the space $A$ in $X$ is called an inclusion. 
Further, let us define a map $r:X \rightarrow A$ such that $r \circ i = id_A$. 
Then, the map $r$ is called the retract.
\end{definition}
Let us clarify this concept with the help of an example.
\begin{example}
Suppose there are two sets $A$ and $B$ in a topological space $X$. 
Then the union of $A \cup B\,=C$ is the inclusion operator $i: A \hookrightarrow C$. 
Then we can define a thresholding function, $f_{th}$ on the set $C$, such that the values of set $A$ remain unchanged and the rest all goes to zero. 
Then, it can be seen that $f_{th} \circ i=id_A$. 
Thus, the function $f_{th}$ is an example of a retract by the Def.~\ref{def:inclusion_retract}.  
\end{example}

\subsection{Useful Results}\label{sec:main_results}
In this section we present results regarding rectilinear($\mathscr{O}_p^{rect}$) and curvilinear ($\mathscr{O}_p^{curv}$) object spaces.
\begin{theorem}\label{thm:curv_subset_rect}
$\mathscr{O}_p^{curv} \subseteq \mathscr{O}_p^{rect}$.
\end{theorem}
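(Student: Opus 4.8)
The plan is to reduce the set inclusion to a triangle-by-triangle comparison and then invoke the Strong Convex Hull Property of NURBS (Theorem~\ref{thm:nurbs_properties}.1). By Definition~\ref{def:object_space}, each object space is a union of spokes, i.e. of filled triangles, and the discussion preceding the statement records that $\mathscr{O}_p^{rect}$ and $\mathscr{O}_p^{curv}$ share the same vertex set and the same combinatorial (face) structure, the curvilinear edges being obtained from the rectilinear ones under the projection $\pi$. Hence there is a bijection between the filled triangles $\blacktriangle_{rect}$ of $\mathscr{O}_p^{rect}$ and the filled triangles $\blacktriangle_{curv}$ of $\mathscr{O}_p^{curv}$ that fixes the three vertices of each triangle. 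First I would make this correspondence explicit, so that it suffices to prove $\blacktriangle_{curv} \subseteq \blacktriangle_{rect}$ for a single matched pair; taking unions over all spokes then delivers the claim.

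For a fixed matched pair, note that $\blacktriangle_{rect}$ is, by Definition~\ref{def:closed_filled_triangle} together with Definition~\ref{def:convex_hull}, exactly the convex hull $\mathrm{conv}\{v_0,v_1,v_2\}$ of its three vertices (an intersection of three half-planes is precisely the convex hull of the triangle vertices). The boundary of $\blacktriangle_{curv}$ consists of three NURBS arcs, each generated via Definition~\ref{def:nurbs} by control points drawn from the common vertex set and, for the matched triangle, lying in $\{v_0,v_1,v_2\}$. By the Strong Convex Hull Property, each such arc lies in the convex hull of its control points, and since those control points belong to $\{v_0,v_1,v_2\}$ we have $\mathrm{conv}(\text{control points}) \subseteq \mathrm{conv}\{v_0,v_1,v_2\} = \blacktriangle_{rect}$. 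Therefore the entire curvilinear boundary $\bdy(\blacktriangle_{curv})$ is contained in $\blacktriangle_{rect}$.

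It then remains to pass from the boundary to the filled region, and here I would use convexity. The closed curve $\bdy(\blacktriangle_{curv})$ lies inside the bounded convex set $\blacktriangle_{rect}$, whose complement is connected and unbounded; hence no point enclosed by $\bdy(\blacktriangle_{curv})$ can lie outside $\blacktriangle_{rect}$, since any path escaping to infinity from such a point would have to cross the boundary curve, which is impossible as that curve stays in $\blacktriangle_{rect}$. Consequently the filled curvilinear triangle satisfies $\blacktriangle_{curv} \subseteq \blacktriangle_{rect}$, and the union over all spokes gives $\mathscr{O}_p^{curv} \subseteq \mathscr{O}_p^{rect}$.

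The main obstacle is the middle step: one must be certain that the NURBS control points of each curvilinear edge actually lie inside the corresponding rectilinear triangle (equivalently, that the curves bulge inward rather than outward). If a control point were permitted outside $\{v_0,v_1,v_2\}$, the convex-hull containment could fail and the inclusion would break, so the argument hinges on the modelling convention that $\pi$ reshapes each edge using only the shared triangle vertices as control data. Once that is pinned down, the Strong Convex Hull Property does the essential work, and the reduction to a single triangle together with the final union step is routine.
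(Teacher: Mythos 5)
Your engine is the same as the paper's --- trap the curvilinear pieces in convex hulls of their NURBS control points via Theorem~\ref{thm:nurbs_properties}.1, note that the rectilinear pieces \emph{are} those convex hulls, and conclude containment --- and your Jordan-curve step for passing from a contained boundary to a contained filled region is sound. But the obstacle you flag in your last paragraph is not a modelling convention to be pinned down; it genuinely fails for interior edges, and this breaks your reduction. In the construction the paper relies on (Algorithm~2 of the cited reference, invoked in the paper's own proof), an edge shared by two adjacent triangles is replaced by a NURBS whose control points are drawn from the vertices of \emph{both} incident triangles, so its convex hull straddles the shared rectilinear edge and the curve may bulge out of either triangle into its neighbour. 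Hence the per-triangle claim $\blacktriangle_{curv}\subseteq\blacktriangle_{rect}$ is false in general for interior triangles; it is strictly stronger than the theorem and stronger than what the construction supports. The paper's proof explicitly notes that the control points of a boundary NURBS are confined to a single triangle's vertices ``due to the fact that boundary is only included in one triangle,'' which is precisely the property that does not hold in the interior.

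The repair --- and this is what the paper actually does --- is to run your convex-hull argument only on the boundary spokes. An edge of $\bdy(\mathscr{O}_p)$ lies in exactly one triangle (Definitions~\ref{def:bdy_object_space} and~\ref{def:bdy_sk}), so the NURBS replacing it has only that triangle's vertices as control points and, by the strong convex hull property, stays inside the rectilinear boundary spoke, i.e.\ bends toward the interior of $\mathscr{O}_p^{rect}$. Interior curvilinear edges may cross their rectilinear counterparts, but only into adjacent rectilinear triangles, which are still inside $\mathscr{O}_p^{rect}$; only the outer boundary governs containment of the union. Your enclosure argument then goes through with $\blacktriangle_{rect}$ replaced by $\mathscr{O}_p^{rect}$: convexity is not needed, only that $\mathbb{R}^2\setminus\mathscr{O}_p^{rect}$ is connected and unbounded, so no point enclosed by $\bdy(\mathscr{O}_p^{curv})$ can escape to infinity without crossing that curve. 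With that restriction your write-up is actually a more careful version of the paper's argument, which asserts the final containment rather informally from ``the boundary NURBS curves toward the interior.''
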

\begin{proof}
Suppose that $\mathscr{O}_p^{rect}$ and the $\mathscr{O}_p^{curv}$ are the rectilinear and the curvilinear object space associated with the nucleus $p$. 
From the Def.~\ref{def:bdy_sk} at least one of the faces  of a $bdysk$ is in the boundary. 
Moreover, the edge that forms the boundary is just included in one triangle. 
There are three possibilities for a $bdsk$. 
It can have $1$, $2$ or all of its $3$ edges in the boundary. 
The following argument is true irrespective of the number of edges a $bdsk$ contributes to the boundary. 
From the Alg.2\cite{Ahmad2017aXivDeltaComplexes} and the definition of boundary spoke(Def.~\ref{def:bdy_sk})the NURBS that forms the boundary of the $\mathscr{O}_p^{curv}$ has just the vertices of the $bdsk(\mathscr{O}_p^{curv})$ as the control points. 
This is due to the fact that boundary is only included in one triangle. 
From theorem~\ref{thm:nurbs_properties} we know that the NURBS lies in the convex hull of these control points. 
Moreover, it is obvious that the rectilinear boundary spoke $bdysk(\mathscr{O}_p^{rect})$ is the convex hull of its vertices. 
Thus the $bdy(\mathscr{O}_p^{rect})$ is the union of the straight edges contributed by $bdysk(\mathscr{O}_p^{rect})$. 
From this we can conclude that the NURBS that forms $\bdy(\mathscr{O}_p^{curv})$ curves towards the interior of the boundary spoke. 
This results in the conclusion that $\mathscr{O}_p^{curv} \subseteq \mathscr{O}_p^{rect}$. \qquad\qquad $\Box$
\end{proof}


\begin{theorem}\label{thm:object_space_projective_invariance}
$\mathscr{O}_p^{curv}$ is invariant under a projective transformation $\pi:X \rightarrow Y$ applied to the digital image.
\end{theorem}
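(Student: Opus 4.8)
The plan is to reduce the invariance of the whole object space to the projective invariance of each individual NURBS boundary curve, already recorded in Theorem~\ref{thm:nurbs_properties}.2, together with the fact that a projective transformation preserves the incidence data that organize the triangles into spokes. First I would unwind the construction of $\mathscr{O}_p^{curv}$ from the bottom up: by Definition~\ref{def:object_space} it is the union $\bigcup_k skcx_k K(p)$ of curvilinear spoke complexes, each of which (Definitions~\ref{def:spoke_complex} and~\ref{def:k-spoke}) is a union of curvilinear filled triangles $\blacktriangle_{curv}$ whose three bounding edges are NURBS curves determined by the triangle's vertices as control points. Hence the entire object space is determined by two pieces of data: the finite vertex set (pixel locations, including the nucleus $p$) and the incidence relations recording which vertices span a common edge or triangle.

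Next I would apply $\pi$ to this data. Since $\pi$ is a projective transformation it is a bijection on its domain, so it carries the vertex set to a new vertex set and, crucially, preserves all incidences: a vertex shared by two triangles maps to a vertex shared by their images, and the nucleus $p$ maps to $\pi(p)$, which remains the common vertex of the transformed $sk_1$ triangles. Thus the recursive spoke structure of Definition~\ref{def:k-spoke} and the nerve condition $\bigcap\blacktriangle\neq\emptyset$ are reproduced verbatim on the image side, so $\pi(\mathscr{O}_p^{curv})$ is again a curvilinear object space with nucleus $\pi(p)$.

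The geometric heart of the argument is then a single application of Theorem~\ref{thm:nurbs_properties}.2 to each boundary curve: applying $\pi$ to a NURBS is the same as applying $\pi$ to its control points and re-forming the curve with appropriately transformed weights, as illustrated in Example~\ref{exm:example_2}.2. Because each curvilinear edge has exactly the triangle vertices as its control points, the image under $\pi$ of each $\blacktriangle_{curv}$ is bounded by the NURBS built on the transformed vertices, i.e. the curved triangle over the image vertex set. Assembling these triangle-by-triangle identities over all spoke complexes shows that transforming the object space and then reading off its curvilinear triangles yields the same result as transforming the vertices first and building the curvilinear object space afterward; this is precisely the asserted invariance, namely that $\pi$ commutes with the construction $p\mapsto\mathscr{O}_p^{curv}$.

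I expect the main obstacle to be the bookkeeping of the NURBS weights. Theorem~\ref{thm:nurbs_properties}.2 guarantees a transformed curve, but to keep the resulting object a legitimate curvilinear open filled triangle (Definition~\ref{def:opec_filled_triangle_curv}) one must verify that the transformed weights stay positive, so that the strong convex hull property of Theorem~\ref{thm:nurbs_properties}.1 still confines each edge to the convex hull of the image vertices; otherwise the curved edges could fail to bound a well-defined triangle. I would therefore insert a short remark that, for vertices in general position off the optical center of $\pi$, the homogeneous denominators do not vanish, keeping all weights strictly positive and the curvilinear triangle structure genuinely preserved. The remaining steps, preservation of incidences and the union over $k$, are routine once this point is settled.
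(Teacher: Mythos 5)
Your proposal follows essentially the same route as the paper's proof: decompose $\mathscr{O}_p^{curv}$ into curvilinear triangles bounded by NURBS whose control points lie in the vertex set, invoke the projective invariance of NURBS (Theorem~\ref{thm:nurbs_properties}.2) to transform each edge via its control points and weights, and conclude that the union of the transformed triangles is the transformed object space. Your added attention to incidence preservation and to the positivity of the transformed weights goes beyond what the paper's (terser) proof records, but it strengthens rather than alters the argument.
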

\begin{proof}
Let $X$ be a triangulated topological space with a $S \in X$ as a set of key points. 
The basic building block of this triangulation and the resulting object space,$\mathscr{O}_p^{curv}$, are NURBS which define the open filled curvilinear triangles($\blacktriangle_{curv}$) as per Def.~\ref{def:opec_filled_triangle_curv}. 
It is obvious from the construction of the curvilinear triangulation(Alg. 2 ) that control points for all the NURBS are in the set $S$. 
When $X$ is mapped to $Y$ by a projective transformation $\pi$, the set $S$ is also mapped to $\tilde{S}$ under the same projection. 
From theorem~\ref{thm:nurbs_properties}, it is clear that applying a projective transformation to the NURBS is equivalent to applying it to the control points and the weight vectors. 
Thus all the NURBS that form the $\pi(\mathscr{O}_p^{curv})$ result from the projected control points $\tilde{S}$ and the weight vectors $\tilde{h}$(where the $h$ were the weight vectors of the original NURBS). 
As a result the open filled triangles ($\blacktriangle_curv$) defined by the NURBS are also projective invariant. 
As a consequence it can be concluded that the object space which is the union of these $\blacktriangle_{curv}$ is also invariant under projective transformations.   \qquad\qquad $\Box$
\end{proof}

\begin{figure}
\centering
\begin{subfigure}[Illustration of Lemma~\ref{lm:connected_objectspace}]
{\includegraphics[width=0.25\textwidth]{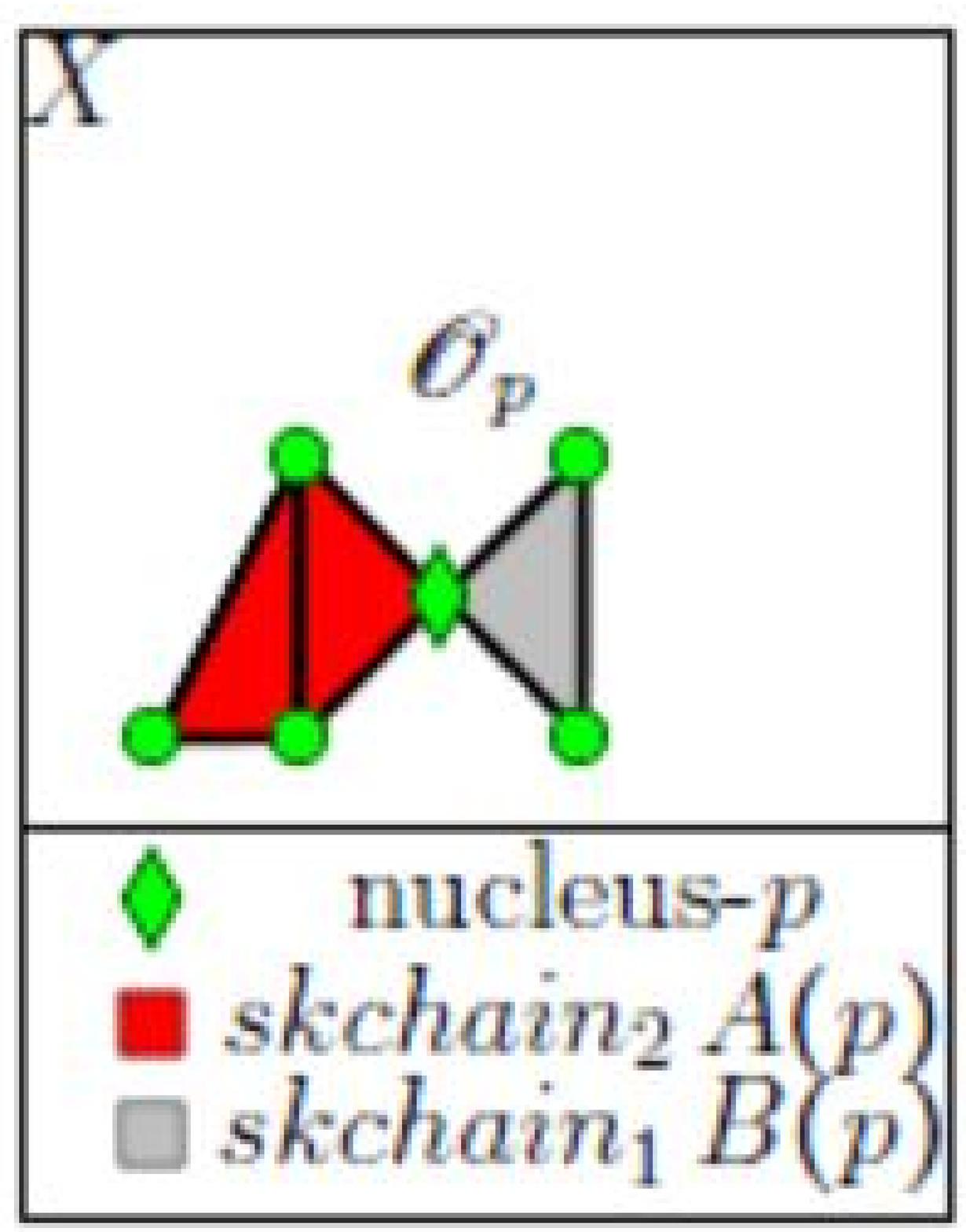}
\label{subfig:for_lemma2}}
\end{subfigure}
\begin{subfigure}[Object space($\mathscr{O}_p$) as a union of spoke chains($skchain_p$)]
{\includegraphics[width=0.35\textwidth]{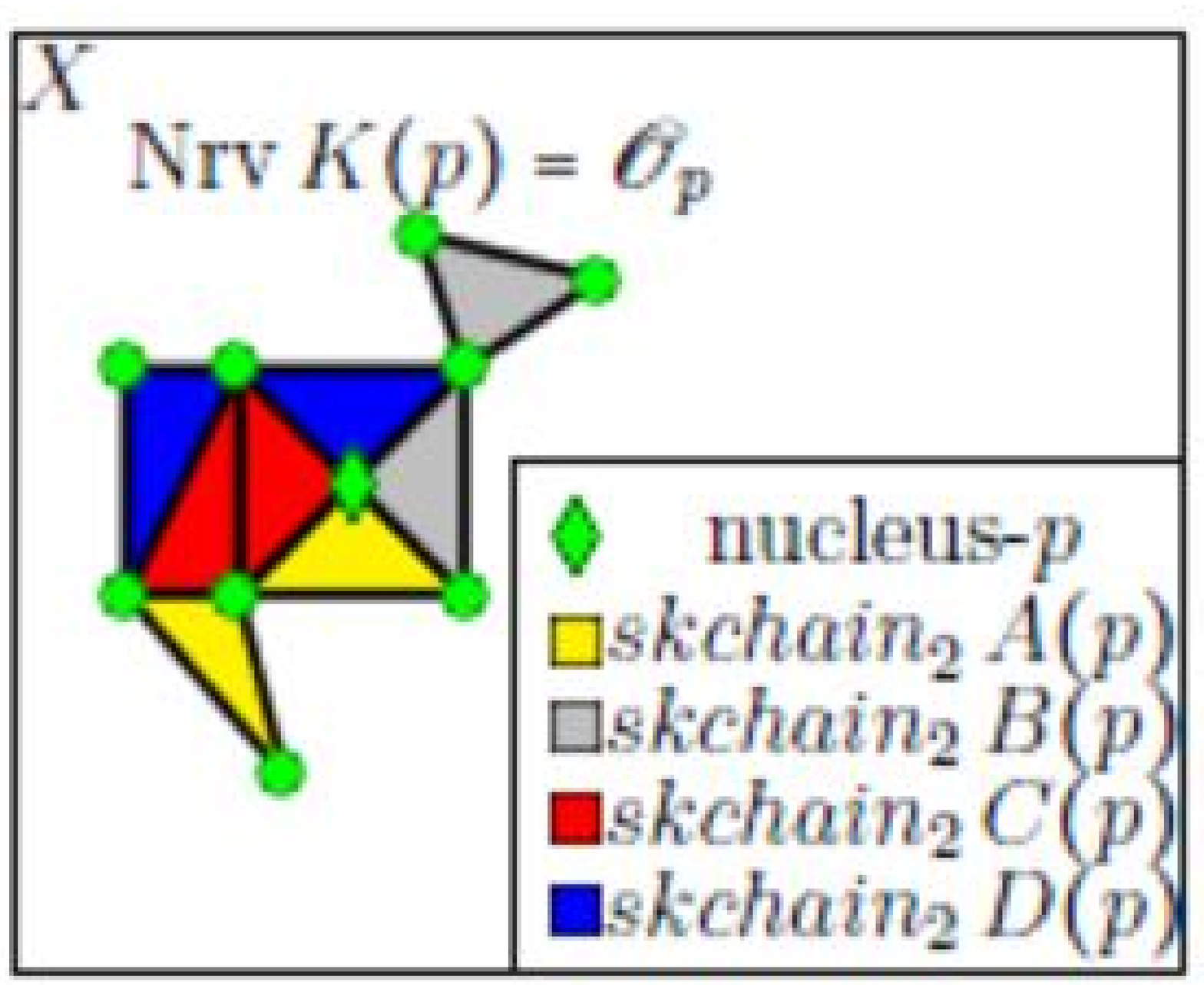}
\label{subfig:object_space_as_nrv}}
\end{subfigure}
\caption{This figure illustrates the various properties of an object space. 
Fig.\ref{subfig:for_lemma2} is used to illustrate Lemma \ref{lm:connected_objectspace} in example \ref{exm:example_10}. 
Fig.\ref{subfig:object_space_as_nrv} is used to illustrate that the object space($\mathscr{O}_p$) is a union of its constituent spoke chains($skchain_p$).}
\label{fig:object_space_as_union_skcx}
\end{figure}

 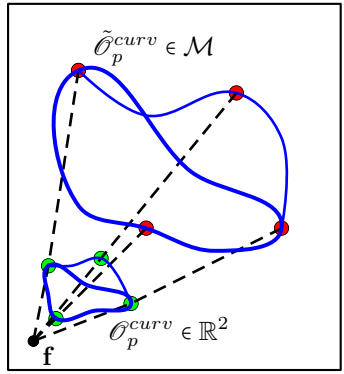
\begin{figure}[!ht]
	\centering
\begin{pspicture}(-1,-1)(4.5,4.5)
\psframe[linecolor=black](-0.95,-1)(3.5,3.9)
\psdots[dotstyle=o,dotsize=0.15,fillstyle=solid,fillcolor=black](-0.6,-0.6)
\psdots[dotstyle=o,dotsize=0.2,fillstyle=solid,fillcolor=green](-0.3,-0.3)(0.7,-0.1)(-0.4,0.4)(0.3,0.5)
\psdots[dotstyle=o,dotsize=0.2,fillstyle=solid,fillcolor=red](0.9,0.9)(2.7,0.9)(0,3)(2.1,2.7)
\psline[linewidth=1pt,linecolor=black,linestyle=dashed](-0.6,-0.6)(-0.3,-0.3)(0.9,0.9)
\psline[linewidth=1pt,linecolor=black,linestyle=dashed](-0.6,-0.6)(0.7,-0.1)(2.7,0.9)
\psline[linewidth=1pt,linecolor=black,linestyle=dashed](-0.6,-0.6)(-0.4,0.4)(0,3)
\psline[linewidth=1pt,linecolor=black,linestyle=dashed](-0.6,-0.6)(0.3,0.5)(2.1,2.7)
\psccurve[linewidth=1.5pt,linecolor=blue](-0.3,-0.3)(0.2,-0.2)(0.7,-0.1)(0.1,0.15)(-0.4,0.4)(-0.4,0)
\pscurve[linewidth=1pt,linecolor=blue](0.7,-0.1)(0.5,0.3)(0.3,0.5)(-0.1,0.4)(-0.4,0.4)
\psccurve[linewidth=1.5pt,linecolor=blue](0.9,0.9)(1.8,0.6)(2.7,0.9)(1.5,1.65)(0,3)(0,1.2)
\pscurve[linewidth=1pt,linecolor=blue](2.7,0.9)(2.7,2.1)(2.1,2.7)(0.9,2.4)(0,3)
\rput(1.2,-0.5){$\mathscr{O}_p^{curv} \in \mathbb{R}^2$}
\rput(1,3.3){$\tilde{\mathscr{O}}_p^{curv} \in \mathcal{M}$}
\rput(-0.4,-0.8){\textbf{f}}
\end{pspicture}
\caption[]{Projection invariance of $\mathscr{O}_p^{curv}$}
\label{fig:proj_invar_obj_curv}
\end{figure}

Next, we briefly illustrate the invariance of the curvilinear object space $\mathscr{O}_p^{curv}$ under projective transformations. 

\begin{example}
In this example, we look at a simplified case of projection represented in Fig.~\ref{fig:proj_invar_obj_curv}, where the original object space $\mathscr{O}_p^{curv}$ is considered to lie on an Euclidean plane($\mathbb{R}^2$). 
The projection is assumed to lie on a manifold $\mathcal{M}$. 
This assumption gives us flexibility in the sense that a manifold can be a curved surface that is locally euclidean in a small neighborhood. 
Examples of such surfaces include the surface of ball, a vase, curved screens etc. 
In this picture the point \textbf{f}, is the center of projection for the projective transformation. 
This can be seen as projecting a film( a euclidean plane $\mathbb{R}^2$) onto the screen(which can be curved $\mathcal{M}$), by a light source at point \textbf{f}. 
It follows from the theorem \ref{thm:object_space_projective_invariance}, that instead of projecting the whole object space $\mathscr{O}_p^{curv} \in \mathbb{R}^2$ we can just project the vertices in this space and then construct the new object space $\tilde{\mathscr{O}}_p^{curv} in \mathcal{M}$. 
This construction is specified in the \cite[Alg.~$2$]{Ahmad2017aXivDeltaComplexes}. 
It must be noted that the weight vectors associated with the NURBS would also change with the projection. 
This can be seen from the proof of the theorem~\ref{thm:nurbs_properties}.
\eot
\end{example}

We comment on the occurrence of object spaces,$\mathscr{O}_p$, in the triangulated topological space $X$. 
These comments apply to both the rectilinear($\mathscr{O}_p^{rect}$) and curvilinear($\mathscr{O}_p^{curv}$) object spaces. 
All the triangles in the space $X$ are a part of an object space $\mathscr{O}_p$. 
Maximal nuclear cluster is a nerve in $X$, that has the maximal number of sets. 
The common intersection of the maximal nuclear cluster is called its nucleus. 
There is an object space associated with each such nucleus. 
In this study we restrict ourselves to the case where a single maximal nuclear cluster exists. 
Based on this we formulate a lemma.

\begin{lemma}\label{lm:connected_objectspace}
Every triangle in a triangulated space $X$, with more than $1$ triangles is a part of a spoke chain.
\eot
\end{lemma}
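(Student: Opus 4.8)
The plan is to induct on the spoke-complex level $k$ and to unwind the recursive structure of $k$-spokes from Definition~\ref{def:k-spoke}. The argument rests on two observations. First, under the standing assumption of a single maximal nuclear cluster with nucleus $p$, every triangle of $X$ belongs to the object space $\mathscr{O}_p = \bigcup_k skcx_k K(p)$, hence every triangle lies in some spoke complex $skcx_k$ for a least such index $k$. Second, a spoke chain $skchain_k$ is, by Definition~\ref{def:kspoke_chain}, nothing but a sequence $A_0, A_1, \ldots, A_k$ with $A_j \in skcx_j$ and $A_i \cap A_{i+1} \neq \emptyset$; so it suffices to produce, for an arbitrary triangle $T$, such a sequence terminating in $T$.

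First I would fix a triangle $T$ and let $k$ be the least index with $T \in skcx_k$, then argue by induction on $k$. For the base case $k = 1$, the triangle $T$ lies in $sk_1$ and therefore meets the nucleus $sk_0 = \{p\}$; taking $A_0 = p$ and $A_1 = T$ gives the chain $\{p, T\}$, which is a $skchain_1$ containing $T$. For the inductive step, suppose the claim holds for every triangle in $skcx_{k-1}$. Since $T \in sk_k$, Definition~\ref{def:k-spoke} gives $T \cap (\bigcup sk_{k-1}) \neq \emptyset$; because a union of triangles meets $T$ exactly when some constituent triangle does, there is a triangle $A_{k-1} \in sk_{k-1} \subseteq skcx_{k-1}$ with $T \cap A_{k-1} \neq \emptyset$. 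By the induction hypothesis, $A_{k-1}$ lies on a chain $A_0 = p, A_1, \ldots, A_{k-1}$; appending $A_k = T$ yields a $skchain_k$ whose last two entries intersect, which completes the step and exhibits $T$ as part of a spoke chain.

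The step I expect to be the main obstacle is the exhaustiveness claim that every triangle really does receive a finite spoke-level $k$, i.e. that the recursion reaches every triangle so that none is left at level $\infty$. This is precisely where the hypothesis of more than one triangle and of a single maximal nuclear cluster is essential: connectedness of the triangulation guarantees that from any triangle one can walk, through successively intersecting triangles, back to the nerve triangles surrounding $p$. I would make this precise by noting that the sets $skcx_1, skcx_2, \ldots$ are defined to absorb, at each stage, exactly the triangles meeting what has already been built; since $X$ carries finitely many triangles and is covered by $\mathscr{O}_p$, the level assignment terminates and partitions all triangles. This both validates the induction and supplies the intersecting predecessor $A_{k-1}$ invoked above.
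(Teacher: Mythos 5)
Your proof is correct and follows essentially the same route as the paper's: both arguments rest on connectivity of the triangulation (the paper cites the Delaunay-path theorem of Dobkin et al.) to guarantee that every triangle is absorbed into some spoke level $k$, and then read a chain off the recursive definition of $k$-spokes. Your explicit induction on $k$, building the chain $A_0 = p, A_1, \ldots, A_k = T$ link by link, supplies the detail that the paper compresses into the single assertion that a $k$-spoke lies in a $k$-spoke chain ``from definition.''
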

\begin{proof}
Suppose $S \in X$ is a set of points and $DT(S)$ is the resulting triangulation. 
Then for any two points $a,\, b \in S$, there exists a Delaunay triangulation path between them. 
This means that we can go from $a$ to $b$ by traversing the edges of the triangles in $DT(S)$. 
This is evident from the theorem in \cite{dobkin1987delaunay}. 
This means that each triangle has a neighbor with which it shares either a vertex or an edge. 
Which by definition is a nerve. 
The presence of a nerve necessitates the existence of a maximal nuclear cluster. 
Different topological structures associated with the nucleus of the maximal nuclear cluster. 
The existence of a maximal nuclear cluster by definition(\cite[Defs.$6$, $7$]{Ahmad2017aXivDeltaComplexes}) requires the existence of a $1$-spoke($k$-spoke). 
Any triangle in $DT(S)$ that has the nucleus in common is $1$-spoke and the remaining triangles that do not have the nucleus as a vertex will be a $k$-spoke,for $k>1$. 
This follows directly from definition~\ref{def:k-spoke}. 
A $k$-spoke, $k \geq 1$ lies in a $k$-spoke chain($skchain_k$) from definition~\ref{def:kspoke_chain}.  \qquad\qquad $\Box$
\end{proof}

\begin{example}\label{exm:example_10}
Let us consider an object space $\mathscr{O}_p$ in a triangulated topological space $X$. 
A similar situation is depicted in the Fig.~\ref{subfig:for_lemma2}. 
According to our assumption that there is only one maximal nuclear cluster and its associated object space. 
As per theorem in \cite{dobkin1987delaunay}, in a triangulated space a Delaunay path exists between any two points in a set $S \in X$. 
The set $S$ is used to triangulate $X$, and the $DT(S)$ is the triangulation. 
This means that we can go from any point in $S$ to an other point in $S$ by traversing the edges of the triangulation $DT(S)$. 
We can see that the set $S$ is the set of the vertices of the triangles in $DT(S)$. 
Thus, we can conclude that no triangle, or a group of them is isolated in $DT(S)$. 
The whole space is connected by Delaunay paths. 
Thus, between any two points in $S$ there is an edge of a triangle. 
Thus by the arguments in the proof of lemma \ref{lm:connected_objectspace}, it can be concluded that every triangle in $DT(S)$ that contains more than $1$ triangles is a part of a spoke chain. 
In Fig.~\ref{subfig:for_lemma2}, we can see that there are three triangles, two of which are in the $skchain_2A(p)$(in red color) and the other one is in $skchain_1B(p)$(in gray color). 
If we add any more triangles or remove any so as to keep the total number of triangles more than 1, each triangle would be in a spoke chain. 
The number of spoke chains can change based on the location of triangles added or removed. 
The basic assumption is that there is only one maximal nuclear cluster. 
If we remove the gray triangle we still have two red triangles in $skchain_2A(p)$. 
Moreover, we can obtain the Fig.~\ref{subfig:object_space_as_nrv} by adding triangles to Fig.~\ref{subfig:for_lemma2}. 
Here, again it can be observed that all the triangles are a part of one of the spoke chains, $skchain_2 A(p)$(in yellow), $skchain_2 B(p)$(in gray), $skchain_2$ $C(p)$(in red), and $skchain_2D(p)$(in blue). 
These arguments also hold for the curvilinear object space $\mathscr{O}_p^{curv}$.
\eot
\end{example}
\begin{lemma}\label{lm:skcmplx_skchain_equiv}
Suppose $\mathscr{O}_p$ is an object space in the triangulated topological space $X$. 
Then,
\begin{align*}
\bigcup skcx_k \Leftrightarrow \bigcup skchain_k
\end{align*} 
\eot
\end{lemma}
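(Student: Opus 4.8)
The plan is to read the relation $\Leftrightarrow$ as an equality of the two unions — equivalently, as the assertion that $\bigcup_k skcx_k$ and $\bigcup_k skchain_k$ describe one and the same collection of filled triangles, namely the object space $\mathscr{O}_p$ of Def.~\ref{def:object_space}. Accordingly I would establish the statement by proving the two inclusions separately and then combining them.

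First I would dispatch the inclusion $\bigcup skchain_k \subseteq \bigcup skcx_k$, which is essentially definitional. By Def.~\ref{def:kspoke_chain} every $skchain_k$ is a union $\bigcup_{j=0}^{k} A_j$ whose constituents $A_j$ are themselves drawn from $skcx_j$. Hence every simplex appearing in any spoke chain already belongs to some spoke complex, and passing to the union over all chains keeps us inside $\bigcup_k skcx_k$. No further machinery is needed here.

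For the reverse inclusion $\bigcup skcx_k \subseteq \bigcup skchain_k$ I would show that each spoke can be placed inside a spoke chain by descending through the recursive structure of Def.~\ref{def:k-spoke}. Given a $2$-simplex $\Delta$ lying in some $sk_k \subseteq skcx_k$, the defining clause of a $k$-spoke guarantees that $sk_k$ meets $\bigcup sk_{k-1}$; selecting a spoke $A_{k-1} \in skcx_{k-1}$ that it meets and iterating produces a sequence $A_0, A_1, \dots, A_k$ of spokes, with $A_0$ the nucleus (the base case $sk_0$) and with $A_{j}\cap A_{j-1}\neq\emptyset$ at every step. By Def.~\ref{def:kspoke_chain} this is precisely a $skchain_k$ containing $\Delta$, so $\Delta \in \bigcup skchain_k$. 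More economically, this inclusion is already delivered by Lemma~\ref{lm:connected_objectspace}, which asserts that every triangle of $X$ lies in a spoke chain; I would cite that lemma and retain the recursive argument only as the explanatory backbone.

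The step I expect to be the main obstacle is making the reverse inclusion rigorous — specifically, verifying that the recursive ``meets $\bigcup sk_{k-1}$'' clause in Def.~\ref{def:k-spoke} can always be refined to a single spoke $A_{k-1}$ with which $A_k$ shares a vertex or an edge, so that the consecutive-intersection condition $A_i\cap A_{i+1}\neq\emptyset$ of a spoke chain is genuinely met at each level. This demands that the intersection of a spoke with the union at the previous level be witnessed by an individual spoke there, which is exactly where Lemma~\ref{lm:connected_objectspace} and the Delaunay-path connectivity underlying it do the real work; once that refinement is secured, combining the two inclusions yields $\bigcup skcx_k = \bigcup skchain_k$.
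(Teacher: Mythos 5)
Your proposal is correct and follows essentially the same route as the paper: both read the biconditional as a two-way containment and prove each inclusion separately, with the direction $\bigcup skchain_k \subseteq \bigcup skcx_k$ being definitional and the reverse requiring that every spoke be placed in some chain. In fact your recursive-descent argument for the reverse inclusion (explicitly witnessing $A_{j}\cap A_{j-1}\neq\emptyset$ at each level via Def.~\ref{def:k-spoke}) is more careful than the paper's, which merely asserts that all chains arise by ``picking one spoke from each of the $k$-spoke complexes'' without verifying the consecutive-intersection condition.
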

\begin{proof}
$\Rightarrow:$
From Def.~\ref{def:spoke_complex}, $k$-spoke complexes for each value of $k \geq 0$ are mutually disjoint sets. 
From Def.~\ref{def:kspoke_chain}, it is clear that $k$-spoke chains($skchain_k$) are constructed by picking one spoke from each of the $k$-spoke complexes($skcx_k$). 
Thus we can construct all the possible $k$-spoke chains($skchain_k$) in the triangulated space $X$, by picking an element from each of the $k$-spoke complexes. 
Thus, all the elements in the union of the $skcx_k$ would lie in the union of all the $skchain_k$. 
Thus, $\bigcup skcx_k \Rightarrow \bigcup skchain_k$.

$\Leftarrow:$ It is evident from the definition~\ref{def:kspoke_chain} that a $skchain_k$ contains $k$ triangles and the triangle at level $k$ is a $k$-spoke. 
From definition~\ref{def:spoke_complex} this $k$-spoke is a part of the $k$-spoke complex. 
Suppose that $A$ is the set of all such $skchain_k \in X$, then the union of all the triangles in these spokes at level $k$ would be the union of all the $k$-spokes in $X$, which by definition is the $k$-spoke complex. 
Thus it can be concluded that all the triangles in the union of the $skchain_k$ are in the union of all the $skcx_k$. 
Thus, $\bigcup skchain_k \Rightarrow \bigcup skcx_k$.   \qquad\qquad $\Box$

\end{proof}

\begin{lemma}\label{lm:objectspace_new_def}
Suppose $X$ is a triangulated topological space then an object space $\mathscr{O}_p$ can be defined as follows.
\begin{align*}
\mathscr{O}_p:=\bigcup skchain_k
\end{align*}
\eot
\end{lemma}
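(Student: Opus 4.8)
The plan is to obtain this characterization as an immediate consequence of the definition of the object space together with the set equivalence already established in Lemma~\ref{lm:skcmplx_skchain_equiv}. First I would recall that, by Definition~\ref{def:object_space}, the object space attached to the nucleus $p$ is by construction the union of all $k$-spoke complexes, that is $\mathscr{O}_p = \bigcup_k skcx_k K(p)$. This is the only description of $\mathscr{O}_p$ we are entitled to start from, so the entire task reduces to replacing the union of spoke complexes by a union of spoke chains.

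Next I would invoke Lemma~\ref{lm:skcmplx_skchain_equiv}, which asserts the two-sided relation $\bigcup skcx_k \Leftrightarrow \bigcup skchain_k$. The key observation is that this equivalence is to be read as equality of the two unions \emph{as sets of filled triangles}: the forward direction of that lemma shows every triangle in $\bigcup skcx_k$ occurs in some spoke chain and hence in $\bigcup skchain_k$, while the reverse direction shows every triangle in $\bigcup skchain_k$ is a $k$-spoke for some $k$ and hence lies in $\bigcup skcx_k$. Since the two collections contain exactly the same triangles of $X$, they coincide, giving $\bigcup skcx_k = \bigcup skchain_k$.

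Finally I would simply chain the two identities,
\[
\mathscr{O}_p = \bigcup_k skcx_k K(p) = \bigcup skchain_k,
\]
which is precisely the alternative defining formula claimed in the lemma. No new geometric content is needed beyond the Delaunay-path connectivity already used in Lemmas~\ref{lm:connected_objectspace} and~\ref{lm:skcmplx_skchain_equiv}.

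The only point requiring care — and the closest thing to an obstacle here — is the interpretation of the symbol $\Leftrightarrow$ in Lemma~\ref{lm:skcmplx_skchain_equiv} as genuine set equality rather than mere logical biconditional. Because the proof of that lemma verifies mutual membership of every constituent triangle, this reading is justified, and the present statement is essentially a restatement of that lemma combined with Definition~\ref{def:object_space}. Thus the argument is short and carries over verbatim to the curvilinear object space $\mathscr{O}_p^{curv}$, since the spoke-complex and spoke-chain constructions are purely set-theoretic and do not depend on whether the triangles are rectilinear or curvilinear.
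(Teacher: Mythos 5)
Your proposal matches the paper's proof exactly: both recall from Definition~\ref{def:object_space} that $\mathscr{O}_p$ is the union of all $k$-spoke complexes and then apply Lemma~\ref{lm:skcmplx_skchain_equiv} to replace that union by the union of spoke chains. Your added remark that the ``$\Leftrightarrow$'' of Lemma~\ref{lm:skcmplx_skchain_equiv} must be read as set equality (justified by the mutual-membership argument in its proof) is a sensible clarification but does not change the route.
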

\begin{proof}
From definition~\ref{def:object_space}, the object space $\mathscr{O}_p$, is the union of all the $k$-spoke complexes($skcx_k$) in $X$. 
Using lemma~\ref{lm:skcmplx_skchain_equiv}, we obtain that the union of all $skcx_k \in X$ is equivalent to $skchain_k \in X$.  \qquad\qquad $\Box$
\end{proof}
\begin{example}
Let, $\mathscr{O}_p$ be an object space in the triangulated topological space $X$. 
This situation is depicted in Fig.~\ref{subfig:object_space_as_nrv}. 
We can see that the object space is the union of the four spoke chains. 
These are the $skchain_2 A(p)$(in yellow), $skchain_2 B(p)$(in gray), $skchain_2$ $C(p)$(in red), and $skchain_2D(p)$(in blue). 
These spoke chains by definition(Def.~\ref{def:kspoke_chain}) include the nucleus. 
This is the depiction of lemma~\ref{lm:objectspace_new_def}. 
Moreover, we can see that the same object space $\mathscr{O}_p$ is depicted in Fig.~\ref{subfig:object_space_rect}. 
In that figure, this object space is depicted as the union of two spoke complexes. 
These are the $skcx_1(p)$(in yellow) and $skcx_2(p)$(in gray). 
This leads to the conclusion that the object space can be define as the union of its constituent spoke complexes or the constituent spoke chains. 
This is the same as lemma~\ref{lm:skcmplx_skchain_equiv}. 
This argument would also work for the curvilinear object spaces, $\mathscr{O}_p^{curv}$.
\eot
\end{example}
\begin{remark}
If a triangulated space $X$ has only $1$ triangle then it is a degenerate case of an object space, $\mathscr{O}_p$
\end{remark}
Based on the above discussion we can formulate the following lemma.
\begin{theorem}\label{thm:objspace_nervecmplx}
The object space $\mathscr{O}_p$ is a nerve complex, with the $k$-spoke chains ($skchain_k$) as its constituent subsets.
\eot
\end{theorem}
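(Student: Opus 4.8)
The plan is to reduce the statement to the nerve-complex definition by way of Lemma~\ref{lm:objectspace_new_def}, which already re-expresses the object space as $\mathscr{O}_p = \bigcup skchain_k$. Once the object space is written as a union of spoke chains, it suffices to verify that this collection of spoke chains satisfies the defining property of a nerve, namely that the constituent subsets share a common nonempty intersection equal to a single vertex (the nucleus).

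First I would unwind Definition~\ref{def:kspoke_chain} to confirm that every $skchain_k$ contains the nucleus $p$. A $k$-spoke chain is the union $\bigcup_{j=0}^{k} A_j$ with each $A_j \in skcx_j$ and consecutive triangles meeting, $A_i \cap A_{i+1} \neq \emptyset$. Since the recursion bottoms out at $sk_0$, which by Definition~\ref{def:k-spoke} is exactly the nucleus, the term $A_0$ forces the nucleus $p$ to lie in every spoke chain. Hence each constituent subset of $\mathscr{O}_p$ contains $p$.

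Next I would show the common intersection of the spoke chains is the vertex $p$. Because every spoke chain contains $p$, the intersection $\bigcap skchain_k$ is nonempty and contains $p$. By construction the distinct spoke chains emanate from the nucleus along different triangle sequences (as illustrated by the four chains in Fig.~\ref{subfig:object_space_as_nrv}), so the only point common to all of them is the nucleus itself; thus the intersection equals the vertex $p$. Invoking the nerve definition with the spoke chains playing the role of the simplexes then yields that $\mathscr{O}_p$ is a nerve complex whose constituent subsets are the $skchain_k$, with nucleus $p$.

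The main obstacle is the last step of pinning the common intersection down to exactly the vertex $p$ rather than some larger shared face. The nerve definition in the excerpt is phrased for 2-simplexes sharing a vertex, whereas here the atoms are spoke chains (unions of triangles), so I must justify that treating a spoke chain as a generalized simplex is legitimate and that, under the standing single-maximal-nuclear-cluster assumption, distinct chains overlap only at the nucleus. This amounts to ruling out the degenerate possibility that two chains coincide beyond $p$, which follows from the disjointness of the spoke complexes at each level (used in Lemma~\ref{lm:skcmplx_skchain_equiv}) together with the fact that each chain selects exactly one triangle per level.
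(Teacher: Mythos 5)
Your proposal is correct and follows essentially the same route as the paper's own proof: invoke Lemma~\ref{lm:objectspace_new_def} to write $\mathscr{O}_p=\bigcup skchain_k$, then use the single-maximal-nuclear-cluster assumption together with Definitions~\ref{def:k-spoke} and~\ref{def:kspoke_chain} to identify the common intersection of the chains with the nucleus. You are somewhat more careful than the paper at the step it dismisses as obvious (why the intersection is exactly the vertex $p$ and not a larger shared face), but the argument is the same one.
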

\begin{proof}
Suppose $X$ is a triangulated topological space. 
It follows directly from lemma~\ref{lm:objectspace_new_def}, that the object space is the union of all the $skchain_k \in X$. 
Moreover, following from the assumption that there is only one maximal nuclear cluster in $X$ and from the definition of $k$-spokes(Def.~\ref{def:k-spoke}) and the $k$-spoke chains(Def.~\ref{def:kspoke_chain}), it is obvious that the common intersection of all the $skchain_k$ is the nucleus(or the $0$-spoke). 
Thus the whole object space $\mathscr{O}_p$ is a nerve complex.  \qquad\qquad $\Box$
\end{proof}
\begin{example}
Let,there be an object space,$\mathscr{O}_p$ in a triangulated topological space $X$. 
This situation is depicted in Fig.~\ref{subfig:object_space_as_nrv}. 
We can see that there are four spoke chains in $\mathscr{O}_p$. 
These are the $skchain_2A(p)$(in yellow), $skchain_2B(p)$(in gray), $skchain_2C(p)$(in red), and $skchain_2A(p)$(in yellow). 
The nucleus is included in each of the spoke chains by definition(Def.~\ref{def:kspoke_chain}). 
Thus the object space is a collection of sets, the spoke chains, with a common intersection i.e. the nucleus. 
Thus the object space $\mathscr{O}_p$ is a nerve, represented as $\Nrv K(p)$ in the figure. 
This argument holds for both curvilinear($\mathscr{O}_p^{curv}$) and rectilinear object spaces($\mathscr{O}_p^{curv}$).
\eot
\end{example}
Next, we present the Borsuk Nerve Theorem.
\begin{theorem}\cite{Borsuk1948FMsimplexes}\label{thm:borsuk_homotopy}
If U is a collection of subsets in a topological space, the nerve complex is homotopy equivalent to the union of the subsets.
\end{theorem}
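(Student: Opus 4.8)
The plan is to prove the classical nerve theorem under the standing hypothesis that the collection $U=\{U_i\}$ is a \emph{good cover}, meaning that every nonempty finite intersection $U_{i_0}\cap\cdots\cap U_{i_k}$ is contractible. In the setting of this paper the members of $U$ are open filled triangles (Def.~\ref{def:open_filled_triangles}), hence convex, so every nonempty intersection is again convex and therefore contractible; thus the good-cover hypothesis is automatic and need only be invoked, not verified case by case. The goal is then to exhibit a homotopy equivalence between the geometric realization $|\Nrv U|$ and the union $\bigcup_i U_i$.

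First I would fix a partition of unity $\{\phi_i\}$ subordinate to $U$, which exists because the ambient space is a metrizable, hence paracompact, subset of the plane and the cover is numerable. Using the $\phi_i$ as barycentric coordinates, define a map
\[
f:\bigcup_i U_i\longrightarrow |\Nrv U|,\qquad f(x)=\sum_i \phi_i(x)\,v_i,
\]
where $v_i$ is the vertex of $|\Nrv U|$ corresponding to $U_i$. This is well defined: the indices $i$ with $\phi_i(x)>0$ satisfy $x\in\bigcap_i U_i\neq\emptyset$, so those vertices span an actual simplex of the nerve, and $f(x)$ lands in it.

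Next I would produce a homotopy inverse via the Mayer--Vietoris blow-up (homotopy colimit) $B$ of the diagram formed by all nonempty intersections $U_\sigma=\bigcap_{i\in\sigma}U_i$ indexed by the simplices $\sigma$ of $\Nrv U$. This object carries two canonical projections, $p:B\to\bigcup_i U_i$ and $q:B\to|\Nrv U|$. I would argue that $p$ is a homotopy equivalence using numerability of the cover (the partition of unity again), and that $q$ is a homotopy equivalence because each $U_\sigma$ is contractible, so collapsing it to the corresponding point of the nerve alters nothing up to homotopy. Composing $p$ with a homotopy inverse of $q$ yields a map that I would then check is homotopy inverse to $f$.

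The hard part will be the second equivalence: showing that the blow-up is homotopy equivalent to the union, equivalently that $f$ and the reverse map compose to maps homotopic to the identities. One cannot argue a single simplex at a time; the real difficulty is the bookkeeping of how the contractible pieces $U_\sigma$ are glued together, which is precisely why the homotopy colimit is introduced. I would discharge it by induction on the number of sets in $U$: at each stage split off one set $U_n$ and apply a gluing lemma for homotopy equivalences to the decomposition of $\bigcup_{i\le n}U_i$ into $\bigcup_{i<n}U_i$ and $U_n$, whose overlap is itself governed by the good cover $\{U_n\cap U_i\}_{i<n}$. The inductive hypothesis then applies to both pieces and to their overlap, and the gluing lemma assembles the three equivalences into one for the whole union. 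The base case is a single contractible set, whose nerve is a single point, and the statement is immediate there.
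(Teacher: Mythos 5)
The paper does not prove this statement at all: it is quoted as a known result with a citation to Borsuk's 1948 paper, so there is no in-paper argument to compare yours against. Judged on its own, your outline is the standard modern proof of the nerve lemma (partition-of-unity map into $|\Nrv U|$, the Mayer--Vietoris blow-up / homotopy colimit interpolating between the union and the nerve, and an induction over the sets using a gluing lemma for homotopy equivalences), and it is sound as a sketch. You also correctly repair the statement itself: as written in the paper it is false without the good-cover hypothesis (every nonempty finite intersection contractible) together with openness or at least numerability of the cover, and you rightly observe that for the rectilinear open filled triangles of Def.~\ref{def:open_filled_triangles} convexity makes the good-cover condition automatic.

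Two caveats are worth flagging if this were to be written out in full. First, your convexity argument covers only the rectilinear case; the curvilinear open filled triangles of Def.~\ref{def:opec_filled_triangle_curv} are sublevel sets of a general continuous $f(x,y)$ and need not be convex, and the paper's actual use of the theorem (Thm.~\ref{thm:homotopy_obj_space}) takes the cover to consist of spoke chains, which are unions of triangles and are certainly not convex --- so contractibility of the pairwise and higher intersections must be argued separately there, not inherited from this remark. Second, the gluing step you defer to the end is where the real content lives: the gluing lemma for homotopy equivalences requires the inclusions involved to be cofibrations (or the use of open covers and numerability throughout), and the induction requires the cover to be finite --- both hold in this setting but neither is free. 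With those points made explicit, your argument would be a complete and correct proof of the result the paper only cites.
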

Now, we extend this theorem to both rectilinear and curvilinear object spaces $\mathscr{O}_p$.
\begin{theorem}\label{thm:homotopy_obj_space}
The object space $\mathscr{O}_p$ is homotopy equivalent to union of $k$-spoke chains i.e. $\mathscr{O}_p \cong \bigcup skchain_k \in \mathscr{O}_p$
\end{theorem}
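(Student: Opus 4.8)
The plan is to derive the result directly from the Borsuk Nerve Theorem (Theorem~\ref{thm:borsuk_homotopy}), using the identification of the object space as a nerve complex established in Theorem~\ref{thm:objspace_nervecmplx}. First I would recall that, by Theorem~\ref{thm:objspace_nervecmplx}, the object space $\mathscr{O}_p$ is a nerve complex $\Nrv K(p)$ whose constituent subsets are precisely the $k$-spoke chains $skchain_k$, and that by Lemma~\ref{lm:objectspace_new_def} these chains cover $\mathscr{O}_p$, i.e. $\mathscr{O}_p = \bigcup skchain_k$ as sets. The content of the theorem is then that the combinatorial nerve and the underlying geometric union have the same homotopy type.

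Next I would check that the family $\{skchain_k\}$ satisfies the hypotheses needed to apply Theorem~\ref{thm:borsuk_homotopy}. The crucial structural fact, already noted in the examples, is that every spoke chain contains the nucleus $p$ (Def.~\ref{def:kspoke_chain}); hence every nonempty intersection of spoke chains contains $p$ and is therefore nonempty. Moreover each $skchain_k$ is a connected chain of filled open triangles joined consecutively and anchored at $p$, so it deformation retracts onto $p$ and is contractible; the same holds for the pairwise and higher intersections, which are likewise star-shaped with respect to $p$. Thus $\{skchain_k\}$ is a good cover of $\mathscr{O}_p$, and applying the Borsuk Nerve Theorem yields $\Nrv K(p) \cong \bigcup skchain_k$. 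Combining this with the set-level identity $\mathscr{O}_p = \bigcup skchain_k$ of Lemma~\ref{lm:objectspace_new_def} gives the desired $\mathscr{O}_p \cong \bigcup skchain_k$.

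Since the whole argument is phrased in terms of intersections and contractibility of the constituent triangles, it transfers verbatim to the curvilinear object space $\mathscr{O}_p^{curv}$ once we recall (Theorem~\ref{thm:curv_subset_rect}) that curvilinear spokes share the same combinatorial intersection pattern as their rectilinear counterparts, so the same good cover is obtained.

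The main obstacle I expect is the verification of the good-cover hypothesis rather than the formal deduction: Theorem~\ref{thm:borsuk_homotopy} is stated without an explicit contractibility assumption, so to apply it rigorously one must confirm that the spoke chains and all their nonempty intersections are contractible. This reduces to the geometric observation that each spoke chain, and each intersection of spoke chains, is star-shaped about the common nucleus $p$ --- a point the proof should make explicit so as not to appeal to a version of the nerve theorem stronger than what is justified by the structure of $\mathscr{O}_p$.
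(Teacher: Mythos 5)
Your proposal is correct and follows essentially the same route as the paper's own proof: identify $\mathscr{O}_p$ as a nerve complex with the $skchain_k$ as its constituent sets (Theorem~\ref{thm:objspace_nervecmplx}) and then invoke the Borsuk Nerve Theorem~\ref{thm:borsuk_homotopy}. Your additional verification that the spoke chains and their intersections are contractible (star-shaped about the nucleus $p$) is a point the paper's proof silently skips, and making it explicit strengthens rather than changes the argument.
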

\begin{proof}
It follows directly from lemma~\ref{thm:objspace_nervecmplx}, that an object space $\mathscr{O}_p \in X$ is a nerve complex with $k$-spoke chains($skchain_k$) as its constituent sets. 
Then, from theorem~\ref{thm:borsuk_homotopy} it directly follows that $\mathscr{O}_p$ is homotopically equivalent to $skchain_k \in \mathscr{O}_p$.  \qquad\qquad $\Box$
\end{proof}

\begin{example}
Let us consider the object space $\mathscr{O}_p \in X$ shown in Fig.~\ref{subfig:object_space_as_nrv}. 
From theorem~\ref{thm:homotopy_obj_space} it can be concluded that the object space $\mathscr{O}_p$, is homotopy equivalent to the union of the spoke chains, $skchain_2 A(p)$(in yellow),$skchain_2 B(p)$(in gray),$skchain_2$ $C(p)$(in red), and $skchain_2D(p)$(in blue). 
One must note that the nucleus is in every spoke chain by definition(Def.~\ref{def:kspoke_chain}). 
This figure depicts only rectilinear object spaces,$\mathscr{O}_p^{rect}$, but the theorem also holds for curvilinear object spaces,$\mathscr{O}_p^{curv}$.  
\eot
\end{example}

Now let us discuss the homotopy properties of object spaces $\mathscr{O}_p$. 
We are discussing the classical homotopy theory which considers only the boundaries of spaces. 
Thus, the discussion that follows only considers the boundaries and not the interiors. 
We begin by introducing a lemma which states that curves with the same endpoints are homotopic. 
\begin{lemma}\label{lm:homotopic_functions_point}
Suppose $f$ and $g$ are two continuous functions in between two points, then $f \simeq g$ ($f$ is homotopic to $g$).
\end{lemma}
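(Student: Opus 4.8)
The plan is to produce an explicit homotopy by taking convex combinations of the two curves (the \emph{straight-line homotopy}), exploiting the fact that the curves live in a convex planar region, namely the Euclidean plane $\mathbb{R}^2$ in which the object spaces of this paper are embedded. First I would reparametrize both curves on the unit interval, writing $f,g\colon[0,1]\to X$ with common endpoints $f(0)=g(0)=a$ and $f(1)=g(1)=b$, where $a$ and $b$ are the ``two points'' of the statement. Since homotopy here is understood in the sense of Def.~\ref{def:homotopic_functions}, it suffices to exhibit a family $f_t$ interpolating between $f$ and $g$ and depending continuously on $t\in[0,1]$.

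Next I would define the family by
\[
f_t(s)=(1-t)\,f(s)+t\,g(s),\qquad s,t\in[0,1].
\]
For each fixed $s$ this is the convex combination of the two points $f(s)$ and $g(s)$ with nonnegative weights $1-t$ and $t$ summing to $1$, so by Def.~\ref{def:convex_hull} it lies on the segment joining $f(s)$ to $g(s)$; convexity of the ambient region then guarantees $f_t(s)\in X$ for all $s,t$. Setting $t=0$ recovers $f_0=f$ and $t=1$ recovers $f_1=g$, so the family interpolates between the two curves as required.

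I would then discharge the remaining conditions of Def.~\ref{def:homotopic_functions}. Joint continuity of $(s,t)\mapsto f_t(s)$ is immediate because the formula is assembled from the continuous maps $f$ and $g$ together with the scalar multiplication and addition of $\mathbb{R}^2$; in particular each $f_t$ is continuous and the family depends continuously on $t$. Furthermore the endpoints are held fixed throughout the deformation, since $f_t(0)=(1-t)a+ta=a$ and $f_t(1)=(1-t)b+tb=b$ for every $t$, so each $f_t$ is again a curve from $a$ to $b$ and the homotopy is in fact a homotopy rel endpoints. This establishes $f\simeq g$.

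The main obstacle is not the computation but the hypothesis: in an arbitrary topological space two paths sharing endpoints need not be homotopic, so the linear homotopy succeeds here only because the ambient space is convex, which keeps the segment from $f(s)$ to $g(s)$ inside the space for every $s$. I would therefore make this convexity assumption explicit at the outset. Because the deformation moves freely through $\mathbb{R}^2$ and never refers to whether the curves are rectilinear or curvilinear, the identical argument covers NURBS boundary curves as well: by the strong convex hull property (Theorem~\ref{thm:nurbs_properties}) together with $\mathscr{O}_p^{curv}\subseteq\mathscr{O}_p^{rect}$ (Theorem~\ref{thm:curv_subset_rect}), such curves remain within the same convex planar region, so the straight-line homotopy applies verbatim.
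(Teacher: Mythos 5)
Your proof takes essentially the same route as the paper's: both rest on the straight-line homotopy $f_t(s)=(1-t)f(s)+t\,g(s)$ inside a convex ambient region, with continuity in $(s,t)$ and fixed endpoints checked directly against Def.~\ref{def:homotopic_functions}. If anything yours is the cleaner version, since you make the (genuinely necessary) convexity hypothesis explicit and omit the paper's preliminary construction of a retraction $r\colon D\to S^1$ with $r\circ i=\mathrm{id}_{S^1}$, which plays no role in the homotopy and in fact cannot exist by the no-retraction theorem.
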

\begin{proof}
Suppose $S^1$ is a unit circle and $D$ is a unit disc. 
Then it is obvious that $S^1 \subset D$. 
Then we can construct an inclusion $i:S^1 \hookrightarrow D$. 
Now we construct a map $r: D \rightarrow S^1$. 
The map $r$ is a retract which implies that $r \circ i =id_{S^1}$. 
We know that $D$ is a convex set,$\mathbf{f}$ and $\mathbf{g}$ are functions which define non-intersecting curves in $D$ with the same end-points. 
We can define a family of functions $F_t:I \times I \rightarrow X\, s.t. X \subset D$, as $(1-t)\mathbf{f}+t\mathbf{g}$ with $t \in I$. 
$I$ is an index set $[0,1]$. 
We can see that $F_0=f$ and $F_1=g$ and the family of functions varies continuously with parameter $t$. 
By Def.~\ref{def:homotopic_functions}, it can be seen that functions $f$ an $g$ are homotopic.   \qquad\qquad $\Box$
\end{proof}
Next, consider a well-known result for the composition of  homotopic maps mentioned below:
\begin{theorem}\cite[Thm.~2.1.24]{adhikari2016homotopy}\label{thm:composition_homotopy_map}
Let $f_1,g_1 \in C(X,Y)$ and $f_2,g_2 \in C(Y,Z)$ be maps such that $f_1 \simeq g_1$ and $f_2 \simeq g_2$. 
The composite maps $f_1 \circ f_2$ and $g_1 \circ g_2 : X \rightarrow Z$ are homotopic. 
\end{theorem}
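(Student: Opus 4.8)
The plan is to manufacture a single homotopy between the two composites directly out of the two given homotopies, by composing them ``diagonally'' in the parameter $t$. By Def.~\ref{def:homotopic_functions}, the hypothesis $f_1 \simeq g_1$ supplies a continuous family $\{\phi_t : X \to Y\}_{t \in [0,1]}$ with $\phi_0 = f_1$ and $\phi_1 = g_1$, and $f_2 \simeq g_2$ supplies a continuous family $\{\psi_t : Y \to Z\}_{t \in [0,1]}$ with $\psi_0 = f_2$ and $\psi_1 = g_2$. Equivalently, writing $I = [0,1]$, these are continuous maps $H : X \times I \to Y$ and $K : Y \times I \to Z$ with $H(\cdot,0)=f_1$, $H(\cdot,1)=g_1$, $K(\cdot,0)=f_2$, and $K(\cdot,1)=g_2$.

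First I would define the candidate homotopy $F : X \times I \to Z$ by $F(x,t) = K\big(H(x,t),\, t\big)$, i.e. the family $F_t = \psi_t \circ \phi_t$. Evaluating at the endpoints gives $F(\cdot,0) = K(H(\cdot,0),0) = \psi_0 \circ \phi_0 = f_2 \circ f_1$ and $F(\cdot,1) = \psi_1 \circ \phi_1 = g_2 \circ g_1$, which are exactly the two composite maps in the statement (written $f_1 \circ f_2$ and $g_1 \circ g_2$ in the paper's ordering). Thus $F$ interpolates between the required endpoints, and it remains only to check that $F$ is a legitimate homotopy in the sense of Def.~\ref{def:homotopic_functions}.

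The one genuine step is the continuity of $F$; the endpoint computations above are routine. I would argue this by factoring $F$ as the composite
\[
X \times I \xrightarrow{\ \alpha\ } Y \times I \xrightarrow{\ K\ } Z, \qquad \alpha(x,t) = \big(H(x,t),\, t\big),
\]
so that $F = K \circ \alpha$. The map $\alpha$ is continuous because each of its coordinate functions, namely $H : X\times I \to Y$ and the projection $(x,t)\mapsto t$, is continuous, and $K$ is continuous by hypothesis; hence $F$ is continuous as a composite of continuous maps, and the induced family $\{F_t\}$ depends continuously on $t$. By Def.~\ref{def:homotopic_functions} this yields $f_2 \circ f_1 \simeq g_2 \circ g_1$.

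The main obstacle to watch for is exactly this joint-continuity bookkeeping: one must use that $H$ is continuous as a map on the product $X \times I$, not merely that each $\phi_t$ is continuous for fixed $t$, so that feeding the same parameter $t$ into both homotopies produces a continuous map on $X\times I$. An alternative, should one prefer to avoid the diagonal construction, is a two-step argument: first hold $f_2$ fixed and post-compose the homotopy $H$ to get $f_2\circ f_1 \simeq f_2\circ g_1$ via $(x,t)\mapsto f_2(H(x,t))$, then hold $g_1$ fixed and pre-compose into $K$ to get $f_2\circ g_1 \simeq g_2\circ g_1$ via $(x,t)\mapsto K(g_1(x),t)$, and finally invoke transitivity of $\simeq$. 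Both routes reduce to the same elementary continuity facts.
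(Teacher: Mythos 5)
Your proof is correct, and both routes you sketch (the diagonal homotopy $F(x,t)=K(H(x,t),t)$ and the two-step argument via $f_2\circ f_1 \simeq f_2\circ g_1 \simeq g_2\circ g_1$ with transitivity) are sound; the continuity point you flag, namely that one must use joint continuity of $H$ and $K$ on the product rather than continuity of each slice, is exactly the right thing to be careful about, and your factorization $F = K\circ\alpha$ handles it cleanly. There is nothing in the paper to compare against: the theorem is imported verbatim from the cited reference \cite[Thm.~2.1.24]{adhikari2016homotopy} and no proof is given in the text, so your argument is the standard textbook one rather than a departure from the paper's. You are also right to silently correct the paper's ordering of the composite — with $f_1\in C(X,Y)$ and $f_2\in C(Y,Z)$ the map $X\to Z$ is $f_2\circ f_1$, and the statement's ``$f_1\circ f_2$'' should be read that way.
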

The $C(A,B)$ is the family of continuous maps from topological space $A$ to $B$. 
Now using what has been presented we can formulate an other theorem.
\begin{theorem}\label{thm:homotopical_equiv_rect_curv}
Suppose there is a triangulated topological space $X$ with $S$ as the set of sites. 
A rectilinear object space $\mathscr{O}_p^{rect}$ is homotopically equivalent to $\mathscr{O}_p^{rect}$. 
The point $p$ is the nucleus of the triangulation.
\end{theorem}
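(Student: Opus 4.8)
The statement as printed appears to contain a typographical slip: read literally it asserts only that $\mathscr{O}_p^{rect}$ is homotopically equivalent to itself, which is vacuous. The claim consistent with all of the preceding development---and the one I will plan to prove---is that the rectilinear object space $\mathscr{O}_p^{rect}$ is homotopically equivalent to the curvilinear object space $\mathscr{O}_p^{curv}$. The plan is to exhibit an explicit homotopy equivalence in the sense of Def.~\ref{def:homotpoy_equiv} by producing an inclusion together with a retract, and then checking that the two compositions are homotopic to the respective identity maps.

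The starting point is Theorem~\ref{thm:curv_subset_rect}, which gives $\mathscr{O}_p^{curv} \subseteq \mathscr{O}_p^{rect}$ and hence a canonical inclusion $i:\mathscr{O}_p^{curv} \hookrightarrow \mathscr{O}_p^{rect}$ in the sense of Def.~\ref{def:inclusion_retract}. First I would restrict attention to the boundaries, as the surrounding text makes explicit that the classical homotopy considered here concerns only boundaries. The rectilinear boundary $\bdy(\mathscr{O}_p^{rect})$ is a union of straight edges of its boundary spokes, while $\bdy(\mathscr{O}_p^{curv})$ is the union of NURBS arcs; by the construction of the curvilinear triangulation, each such NURBS has exactly the endpoints of the corresponding rectilinear edge as its control points. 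Thus, edge by edge, the straight segment and its matching NURBS arc share endpoints.

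Next I would build the retract and the required homotopy simultaneously. For a single boundary edge, let $\mathbf{f}$ be the straight segment and $\mathbf{g}$ the matching NURBS arc; by Lemma~\ref{lm:homotopic_functions_point} they are homotopic via the linear family $(1-t)\mathbf{f}+t\mathbf{g}$, $t \in [0,1]$. The point guaranteeing that this family remains inside $\mathscr{O}_p^{rect}$ is the strong convex hull property (Theorem~\ref{thm:nurbs_properties}.1$^o$): the NURBS lies in the convex hull of its control points, which is precisely the rectilinear boundary spoke, so every interpolant stays in $\mathscr{O}_p^{rect}$. Taking the $t=1$ endpoint of this deformation, edge by edge, defines the retract $r:\mathscr{O}_p^{rect} \rightarrow \mathscr{O}_p^{curv}$ that pushes each rectilinear edge onto its NURBS arc, while the full family furnishes the homotopy $i \circ r \simeq id_{\mathscr{O}_p^{rect}}$. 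Since $r$ leaves the curvilinear boundary unchanged, $r \circ i = id_{\mathscr{O}_p^{curv}}$ holds on the nose.

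To assemble the per-edge homotopies into one homotopy of the full boundary I would appeal to Theorem~\ref{thm:composition_homotopy_map}, which permits combining homotopic maps consistently across the decomposition of $\bdy(\mathscr{O}_p^{rect})$ into its constituent edges; then Def.~\ref{def:homotpoy_equiv} yields $\mathscr{O}_p^{rect} \cong \mathscr{O}_p^{curv}$. The hard part will be the \emph{global} consistency of the retract: although each edge deforms cleanly inside its own convex boundary spoke, one must verify that the piecewise-defined deformation agrees at shared vertices and never crosses between adjacent spokes. The convex-hull containment together with the fact that the linear homotopy fixes the common endpoints are exactly what resolve this, so this step should reduce to a routine continuity check rather than a genuine obstruction.
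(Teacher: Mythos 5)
You correctly diagnose the typographical slip (the intended claim is $\mathscr{O}_p^{rect}\simeq\mathscr{O}_p^{curv}$, which is what the paper actually argues), and your proof rests on the same two pillars as the paper's: Lemma~\ref{lm:homotopic_functions_point} applied edge by edge to a straight segment and its NURBS counterpart sharing endpoints, followed by Theorem~\ref{thm:composition_homotopy_map} to assemble the per-edge homotopies. Where you diverge is in the packaging. The paper's proof is purely formal: it writes $\mathscr{O}_p^{rect}=f_1\circ f_2\cdots\circ f_n(S)$ and $\mathscr{O}_p^{curv}=g_1\circ g_2\cdots\circ g_n(S)$ with one map per edge, notes $f_i\simeq g_i$, and invokes the composition theorem --- it never says where the intermediate curves of the homotopy live, and it does not use Theorem~\ref{thm:curv_subset_rect}, the strong convex hull property, or any inclusion/retract structure. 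Your version supplies exactly these missing pieces: the inclusion $i:\mathscr{O}_p^{curv}\hookrightarrow\mathscr{O}_p^{rect}$ from Theorem~\ref{thm:curv_subset_rect}, the observation that the linear interpolants $(1-t)\mathbf{f}+t\mathbf{g}$ stay inside the convex boundary spoke by Theorem~\ref{thm:nurbs_properties}(1$^o$), and the endpoint-fixing argument for consistency at shared vertices. This buys a genuinely stronger conclusion (a deformation-retraction-style equivalence with an ambient space certified to contain the whole homotopy) at the cost of more verification; the paper's route is shorter but leaves the ambient containment and the meaning of the ``composition of edge maps'' implicit. One small caution: the paper's Def.~\ref{def:homotpoy_equiv} as literally written demands $f\circ g=id$ exactly rather than up to homotopy, so your $i\circ r\simeq id_{\mathscr{O}_p^{rect}}$ satisfies the standard notion of homotopy equivalence but not the letter of that definition; since the paper itself only ever establishes homotopy of maps, this is a defect of the definition rather than of your argument.
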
 
\begin{proof}
Both the object spaces are a union of lines(for $\mathscr{O}_p^{rect}$) or curves(for $\mathscr{O}_p^{curv}$) with end points in the set $S$. 
Thus a rectilinear object space can be represented as the composition of functions, $\mathscr{O}_p^{rect}=f_1 \circ f_2 \cdots \circ f_n(S)$. 
A curvilinear object space can be represented as  $\mathscr{O}_p^{curv}=g_1 \circ g_2 \cdots \circ g_n(S)$. 
Each of the maps in the composition, is a representation of one of the rectilinear or the curvilinear edges in the object space. 
Here, it must be noted that $f_i$ and $g_i$($i=1,2,\cdots,n$) are between the same endpoints. Hence from lemma~\ref{lm:homotopic_functions_point} $f_i \simeq g_i$. 
Using this fact and the Thm.~\ref{thm:composition_homotopy_map}, we can conclude that  $f_1 \circ f_2 \cdots \circ f_n$ is homotopically equivalent to $g_1 \circ g_2 \cdots \circ g_n$. \qquad\qquad $\Box$
\end{proof}

\section{Detecting Object Shapes in Images}\label{sec:computaional_experiments}
In this section we will present the results of computational experiments conducted. 
The aim of the experiments is to attempt to isolate the objects from digital images by triangulating based on sites. 
Then we will present the theorems presented in Sec.~\ref{sec:main_results}, as they appear in the real digital images.
We present our analysis using two example images, each of which has an object with curved edges.
The area of interest in these images is a single object, but the images are very complex.
They represent a real world scene and the objects themselves are not necessarily of uniform intensity. 
Such an object presents challenges for object detection algorithms.

In this paper we are using the algorithms for generating rectilinear and curvilinear triangulations detailed in \cite{Ahmad2017aXivDeltaComplexes}.
The basic assumptions are that the image under analysis only contains one object.
The sites used to generate the triangulations are generated via SIFT features\cite{Lowe1999CVConfSIFTkeypoints}.
The maximal nuclear cluster of the triangulation is used to determine the nucleus,$p$, of the object space,$\mathscr{O}_p$.  
All other topological structures defined previously in this paper are also defined with respect to the nucleus.
Now, we will discuss each of the images and the results of our analysis individually.

\begin{figure}
\subfigure[Rectilinear Triangulations]{
\includegraphics[width=55mm]{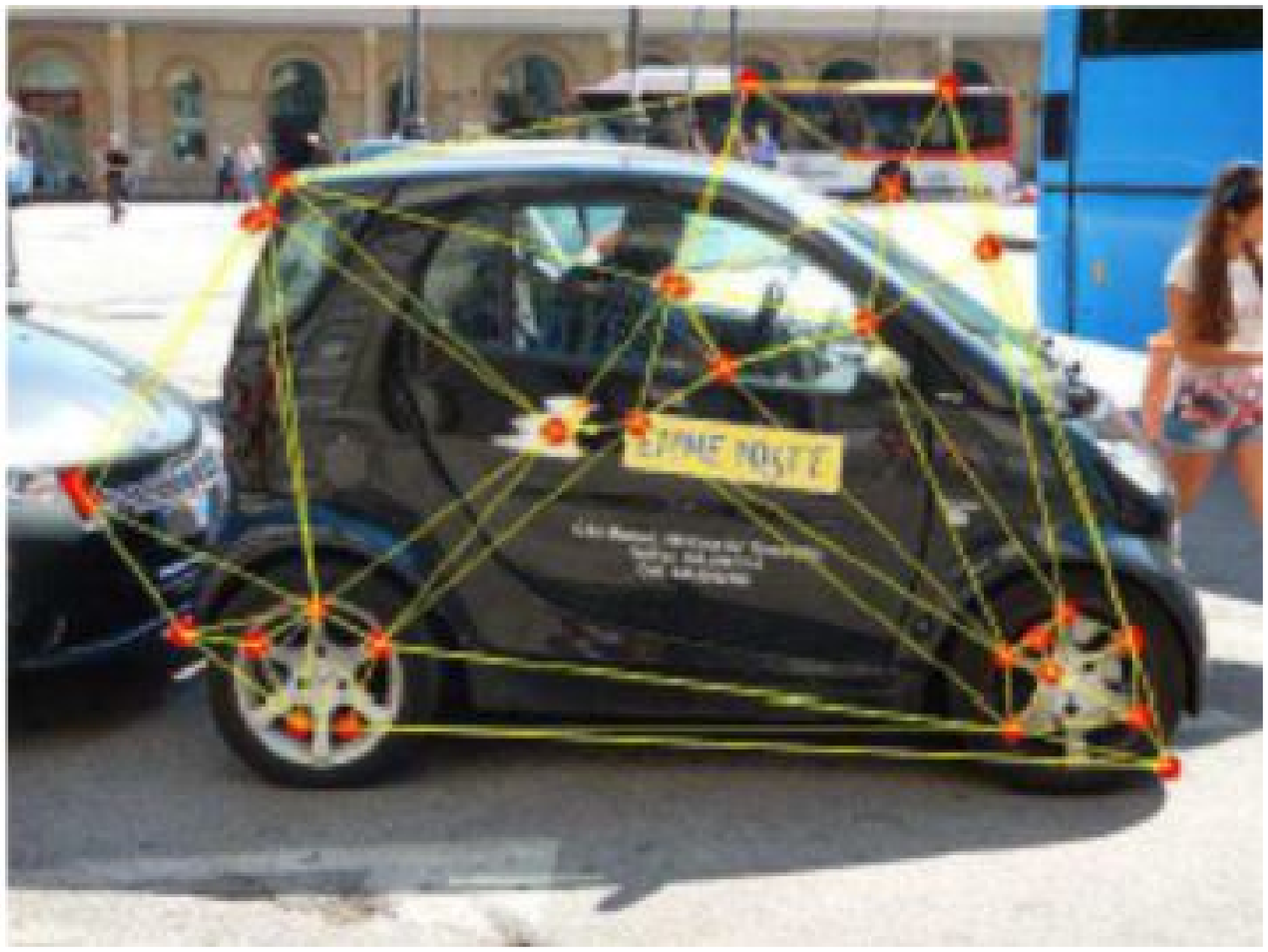}
\label{subfig:Car_rect_triang}}
\hfil
\subfigure[Curvilinear Triangulations]{
\includegraphics[width=55mm]{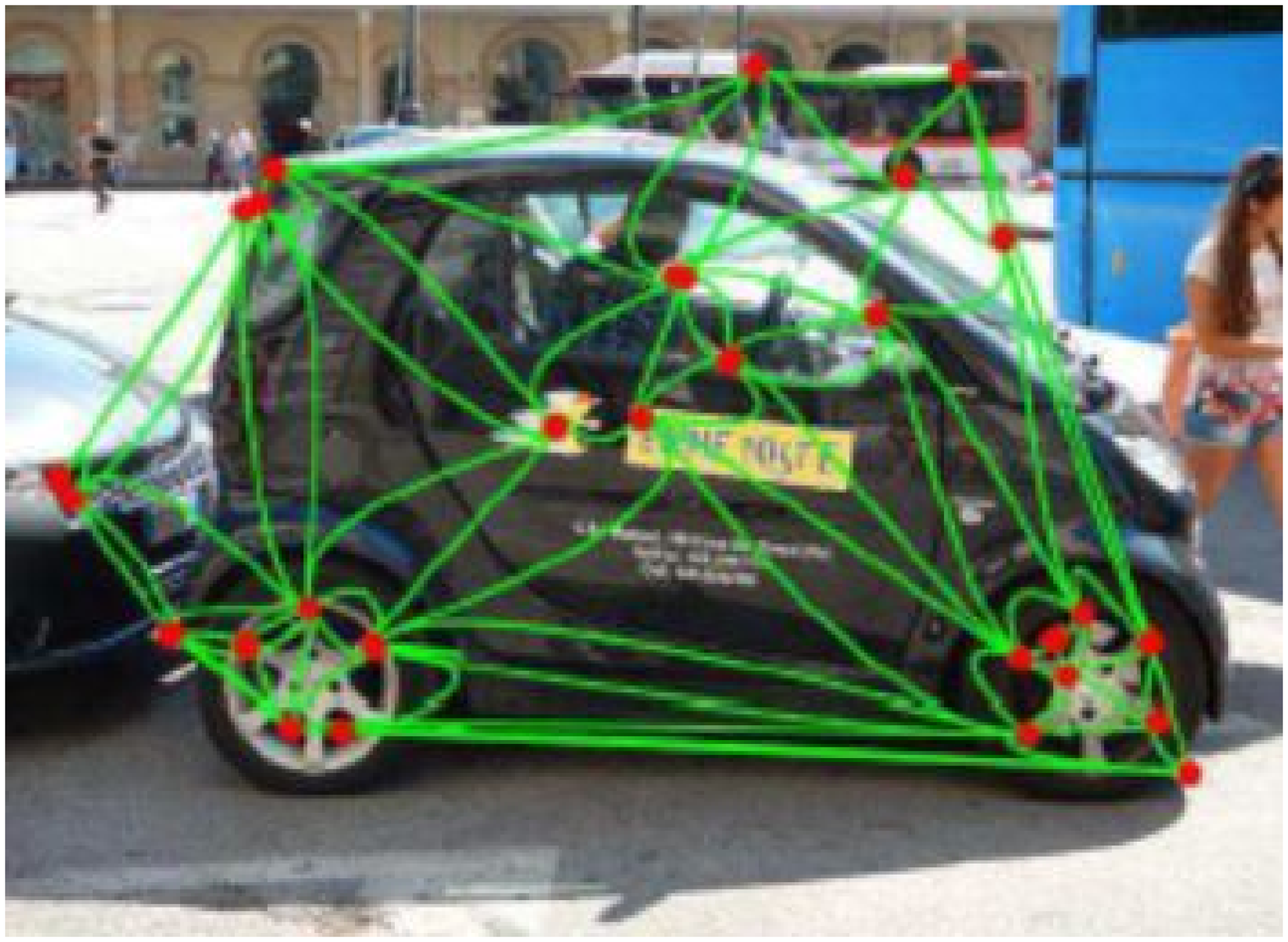}
\label{subfig:Car_curv_triang}}
\hfil
\subfigure[Nerve Order Frequency]{
\includegraphics[width=55mm]{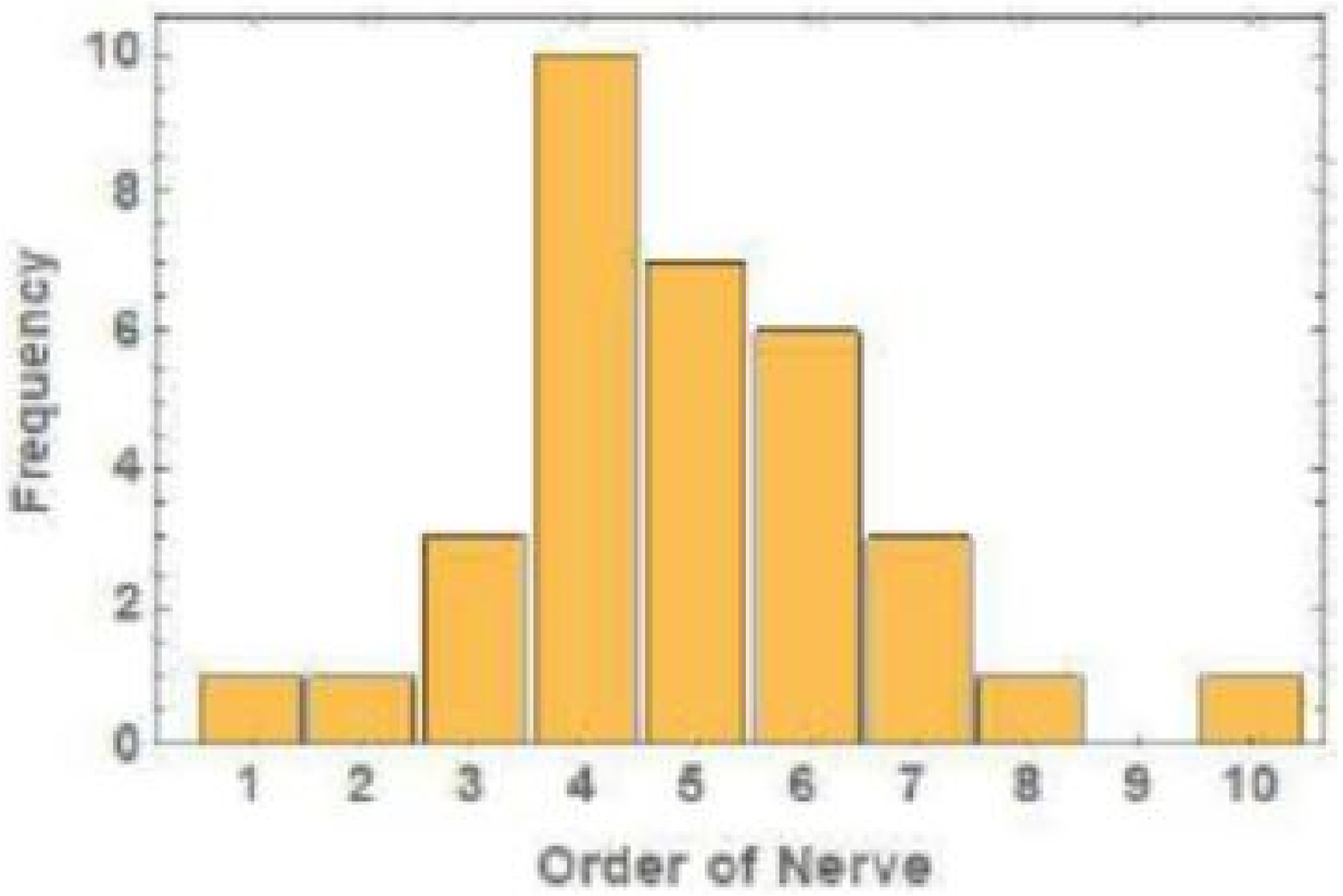}
\label{subfig:Car_nerve_order_graph}}
\hfil
\subfigure[Areas of corresponding triangles]
{\includegraphics[width=55mm]{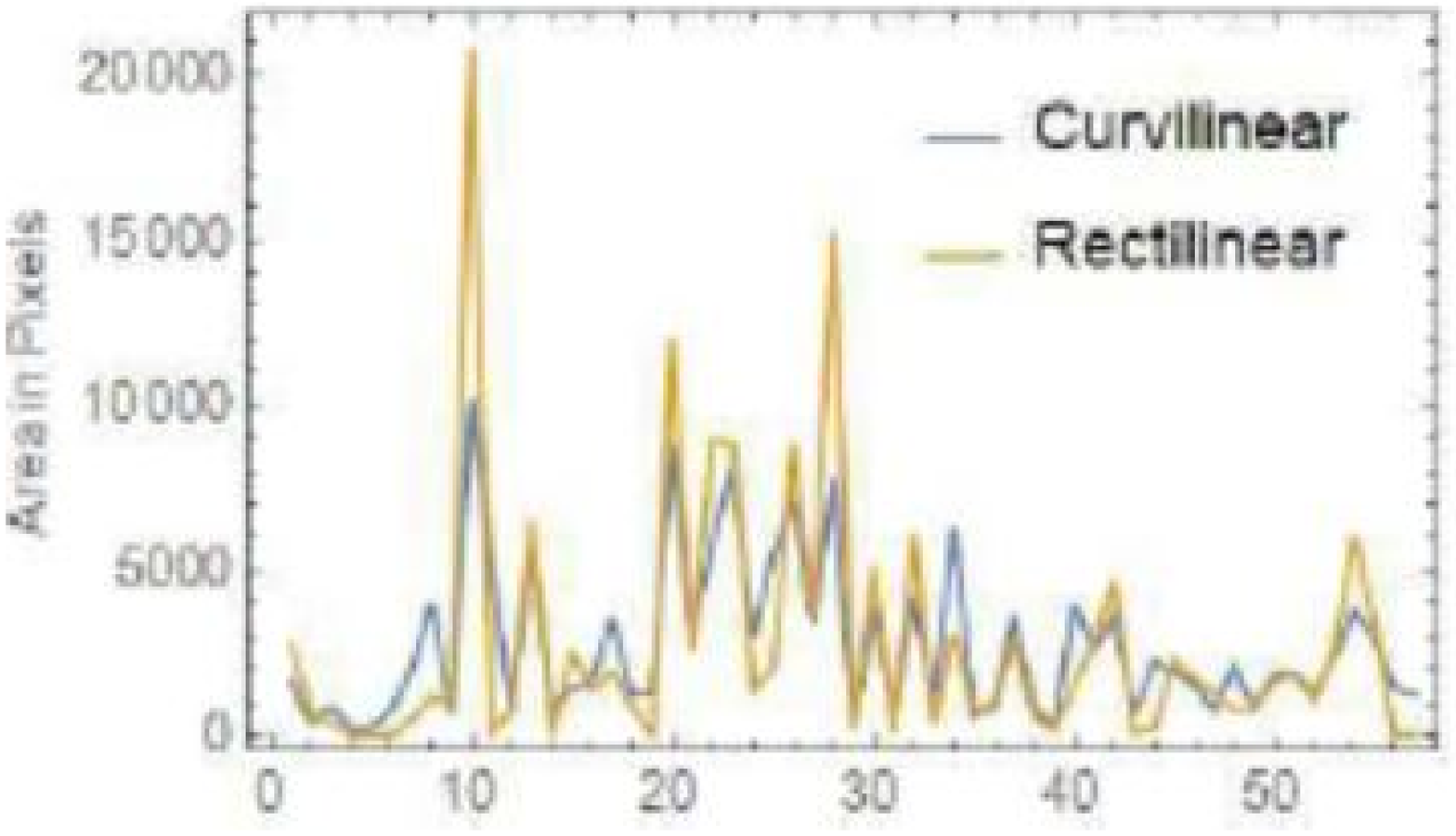}
\label{subfig:Car_area_rect_curv}
}
\hfil
\subfigure[Lengths of corresponding edges]
{\includegraphics[width=55mm]{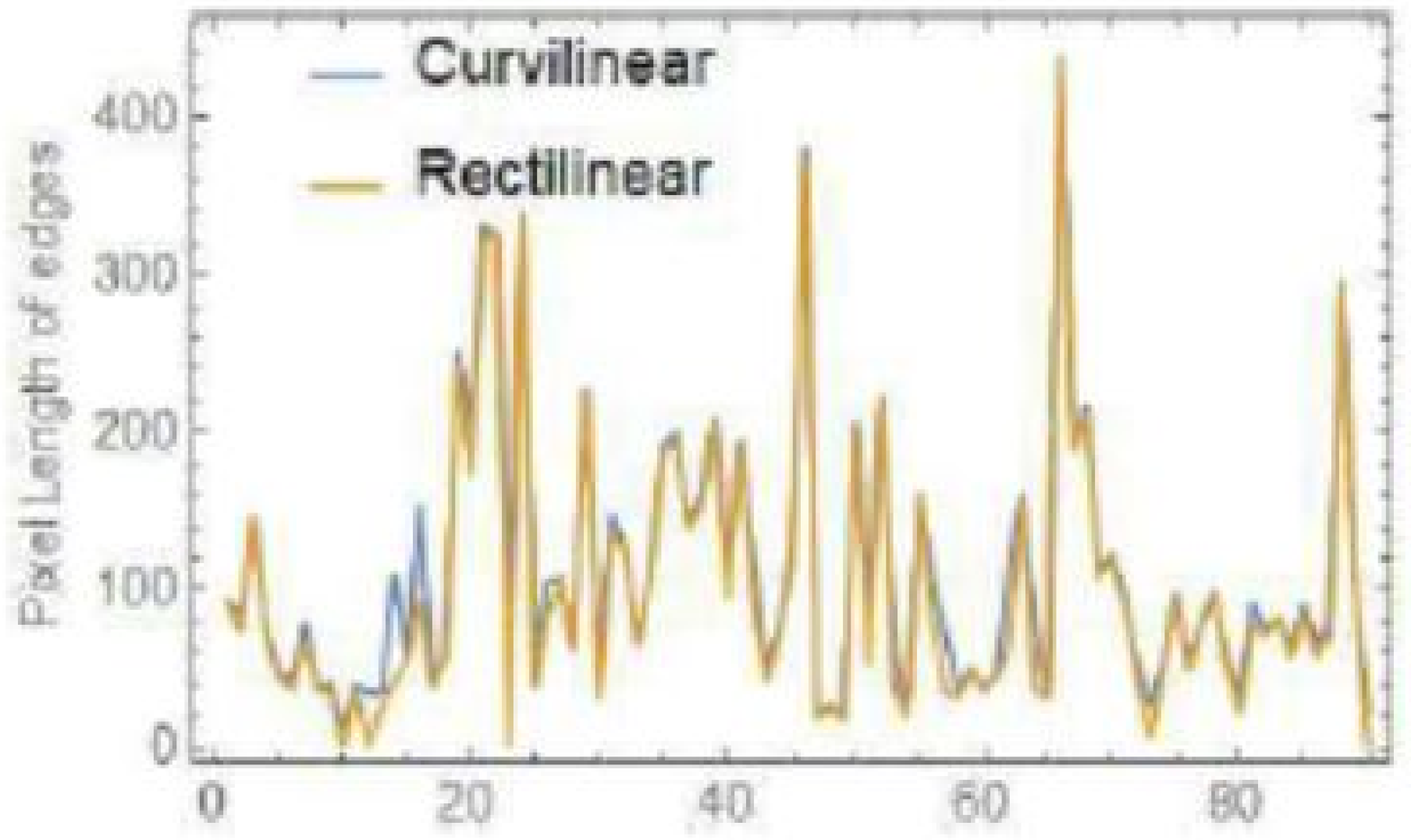}
\label{subfig:Car_length_rect_curv}}
\hfil
\qquad\subfigure[Area of nerves of different order]
{\includegraphics[width=55mm]{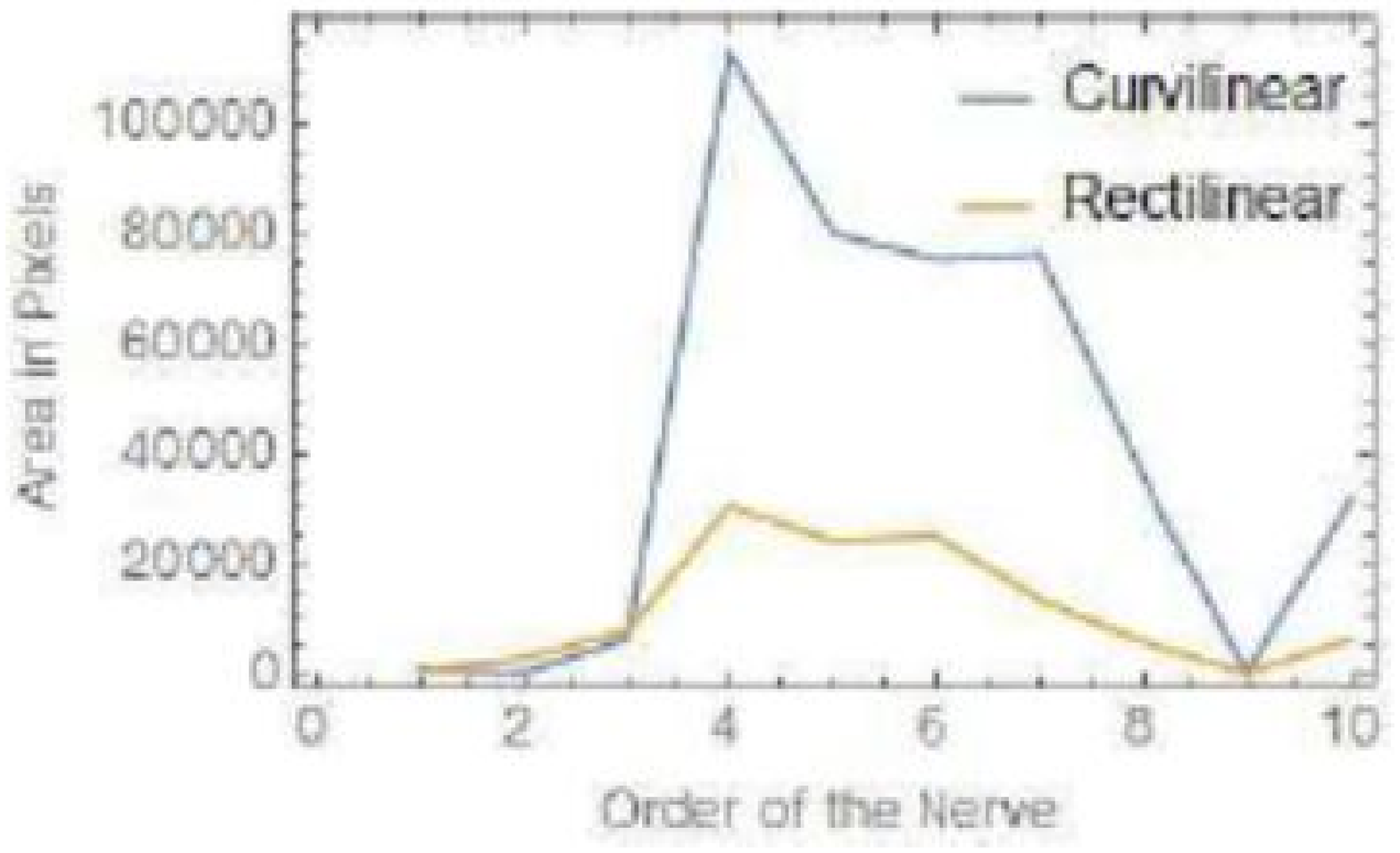}
\label{subfig:Car_nerve_order_area}}
\hfil
\caption{This figure features the image of a car, and displays the rectilinear object space $\mathscr{O}_p^{rect}$ in Fig.~\ref{subfig:Car_rect_triang}. 
Fig.~\ref{subfig:Car_curv_triang} displays the curvilinear object space $\mathscr{O}_p^{curv}$ and the Fig.~\ref{subfig:Car_nerve_order_graph} represents the bar chart of the frequency of nerve complexes with a specific order.
Fig.~\ref{subfig:Car_area_rect_curv} displays the comparison of the area of triangles, Fig.~\ref{subfig:Car_length_rect_curv} displays the comparison of length of triangles, and the Fig.~\ref{subfig:Car_nerve_order_area} displays the comparison of the area covered by nerves of a specific order in rectilinear and curvilinear triangulations.}
\label{fig:Car_triangulations}
\end{figure}

We begin with the image of the car featured in Fig.~\ref{fig:Car_triangulations}.
It can be seen from the image that the seen is quite complex.
The object of interest, the black car, is present in the image along with a few other vehicles, namely two buses and a car.
A person is also present in the scene.
Moreover, there are detailed architectural constructs in the background.
The complexity of the scene is enhanced by the overlap between objects, which leads to partial visibility.
This makes the detection of the object difficult.
We can observe in Fig.~\ref{subfig:Car_rect_triang}, that the sites are mostly concentrated on the black car.
Since the sites used in this analysis are attracted to regions of high contrast in the image, we can see some of the sites outside of the region of interest.
These sites lie on the bus in the background and the car behind the object of interest.

Apart from this, we can see that the sites generally capture the corners and the important regions that lie on the object of interest.
The sites lie on the spokes in the rim of the car, the corners of the car body, the graphics drawn on the car and the person sitting inside the car.
In Fig.~\ref{subfig:Car_rect_triang}, we use the conventional Delaunay triangulation based on the sites. 
Since, the object of interest has curved contours, the rectilinear triangulation based on the Delaunay triangulation fails to conform to the object boundaries.
The rectilinear triangulations give an approximation of the object skeleton, which may be suitable in some cases.
There is a need to advance this representation of the object to a better one, which can conform to curved objects.

This brings us to the curvilinear triangulations as depicted in Fig.~\ref{subfig:Car_curv_triang}.
The sites used to generate these triangulations are identical to the ones used for rectilinear triangulations(Fig.~\ref{subfig:Car_rect_triang}).
These curvilinear triangulations are based on NURBS defined in Def.~\ref{def:nurbs}.
The algorithm used to obtain these triangulations has been detailed in \cite{Ahmad2017aXivDeltaComplexes}.
It can be seen that the curvilinear triangulations conform to the object contours better than the rectilinear triangulation.
Here we will mention that we can construct object spaces($\mathscr{O}_p$) from both the rectilinear and the curvilinear triangulations.
This is done by locating the maximal nuclear cluster of the triangulation which is the same for both the rectilinear and the curvilinear.
This is evident from the construction of the algorithm used that the combinatorial properties of both the curvilinear and the rectilinear triangulations are the same.
The maximal nuclear cluster is the largest collection of triangles that share a common vertex and hence is the same for both the triangulations.
It is also evident from this argument and the definitions that all the topological constructs defined in this paper, namely spokes($\sk$), spoke complexes($skcx$) and the spoke chains($skchain$) have the same combinatorial properties.
These two triangulations differ in the geometrical properties with respect to curvature of edges, area and perimeter. That is, the corresponding triangles in the rectilinear and the curvilinear triangulations can have different edge lengths, edge curvature, perimeter, and area.
The difference in the geometrical properties of the triangles in the rectilinear and curvilinear triangles can be seen in Fig.~\ref{subfig:Car_area_rect_curv}(area of triangles in terms of pixels covered) and Fig.~\ref{subfig:Car_length_rect_curv}(length of the edges in terms of the pixel dimensions).

Until now we have discussed several ways of covering the object n the digital image.
We extend the concept of triangulations of a topological space $X$, by introducing the idea of using open filled triangles($\blacktriangle$).
The rectilinear open filled triangles are defined in Def.~\ref{def:open_filled_triangles}, and the curvilinear open filled triangles are defined in Def.~\ref{def:opec_filled_triangle_curv}.
These triangles have been used in this study to define different topological structures such as spokes($\sk$),spoke complexes($skcx$), spoke chains($skchain$) and object spaces($\mathscr{O}$).
All of these structures can be used to cover the object in the digital image and talk about the interior as well as the boundary.
Now, we try to cover the object in the image by extending the idea of maximal nuclear cluster, which is the nerve($\Nrv$) of the highest order(number of constituent sets).
It is evident from the theorem in \cite{dobkin1987delaunay} that any two sites of a Delaunay triangulation are path connected.
This would also apply to the curvilinear triangulations as they are constructed by replacing the edges with NURBS.
This theorem directly dictates that every site in the triangulation is a part of a triangle.  

A nerve of order $1$ is a degenerate example of a nerve and contains only one triangle.
Thus, if a site is only included in a single triangle, that triangle would form a nerve of order $1$.
So, if we consider all the nerves of orders ranging from $1$ to the maximal order, all the sites and their corresponding triangles would be included in this union.
Thus, this union would equal the whole triangulation. 
As each set is a subset of itself, the union of all nerves contains the whole triangulation as a subset.
Thus, by definition such a union is a cover of the triangulation.


Let us now discuss a homotopic equivalence between the rectilinear
and the curvilinear
nerves.
This connection has been proven in a more general setting of the object space($\mathscr{O}_p$), where the vertex $p$ is the nucleus.
The proof is detailed in Thm.~\ref{thm:homotopical_equiv_rect_curv}.
This proof can be restricted to the setting of an individual nerve.
A nerve in both the rectilinear and the curvilinear triangulations can be represented as a composition of maps $f_1 \circ f_2 \circ \cdots \circ f_n$ similar to the object spaces in Thm.~\ref{thm:homotopical_equiv_rect_curv}.
Each of the functions $f_i$ can be used to represent an edge in the nerve. 
It must be noted that we are studying a classical notion of homotopy which does not consider the interior of the triangles.
Thus, we can conclude that a nerve in the rectilinear triangulation is homotopically equivalent to a corresponding nerve in the curvilinear triangulation.
This relation is similar to the relationship established in Thm.~\ref{thm:homotopical_equiv_rect_curv}.

\begin{figure}
\subfigure[Rectilinear Triangulations]{
\includegraphics[width=55mm]{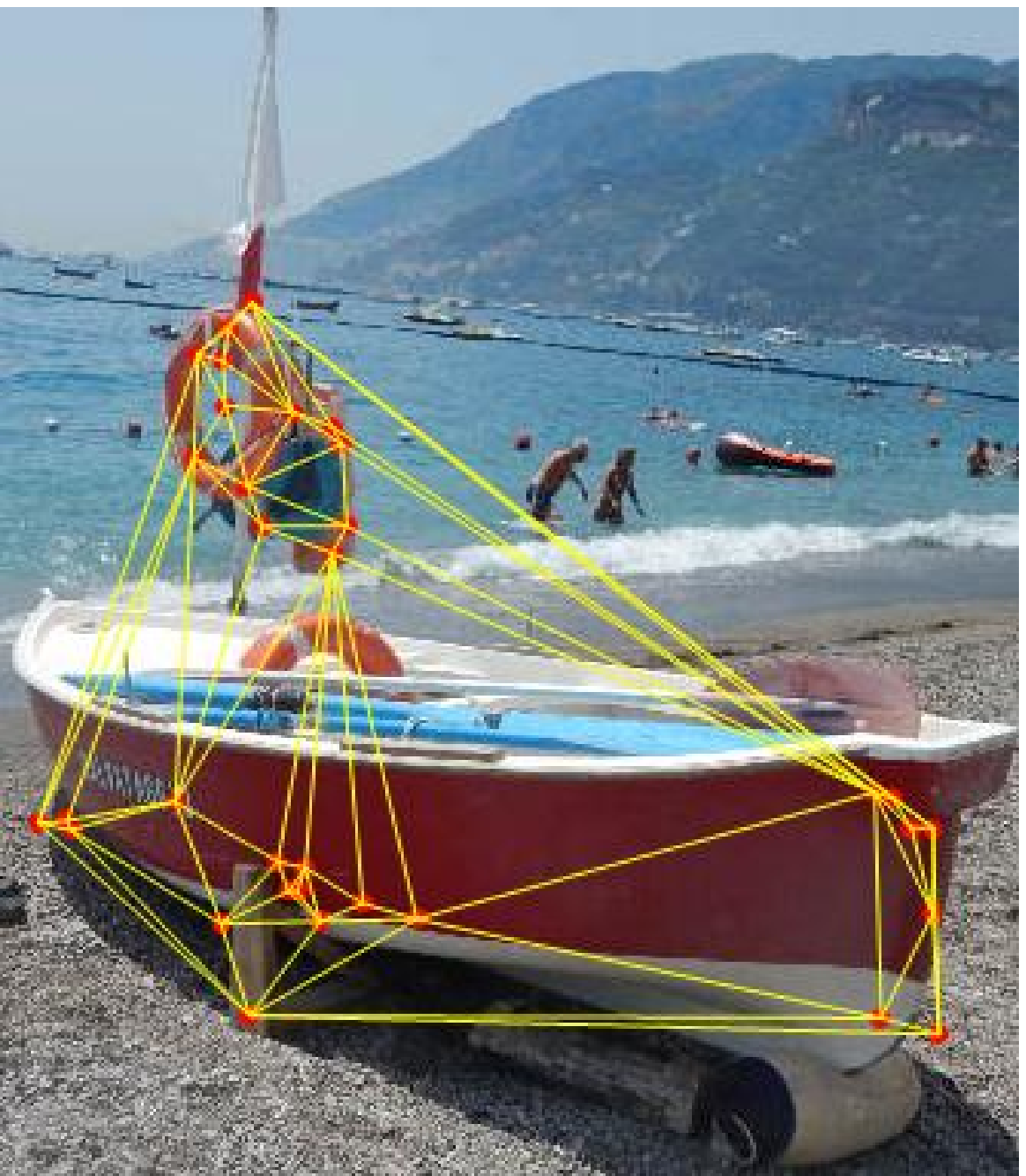}
\label{subfig:Boat1_rect_triang}}
\hfil
\subfigure[Curvilinear Triangulations]{
\includegraphics[width=55mm]{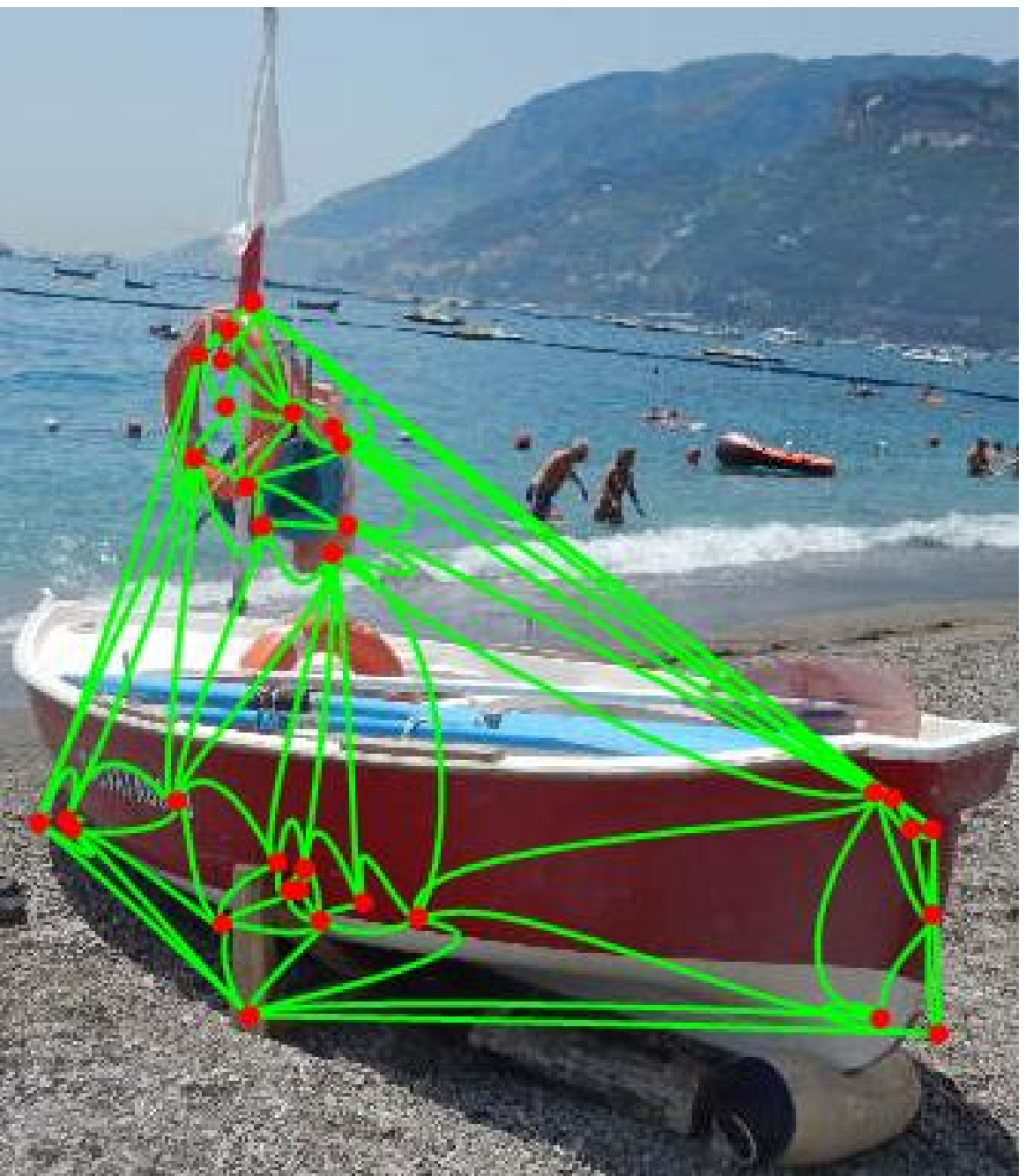}
\label{subfig:Boat1_curv_triang}}
\hfil
\subfigure[Nerve Order Frequency]{
\includegraphics[width=55mm]{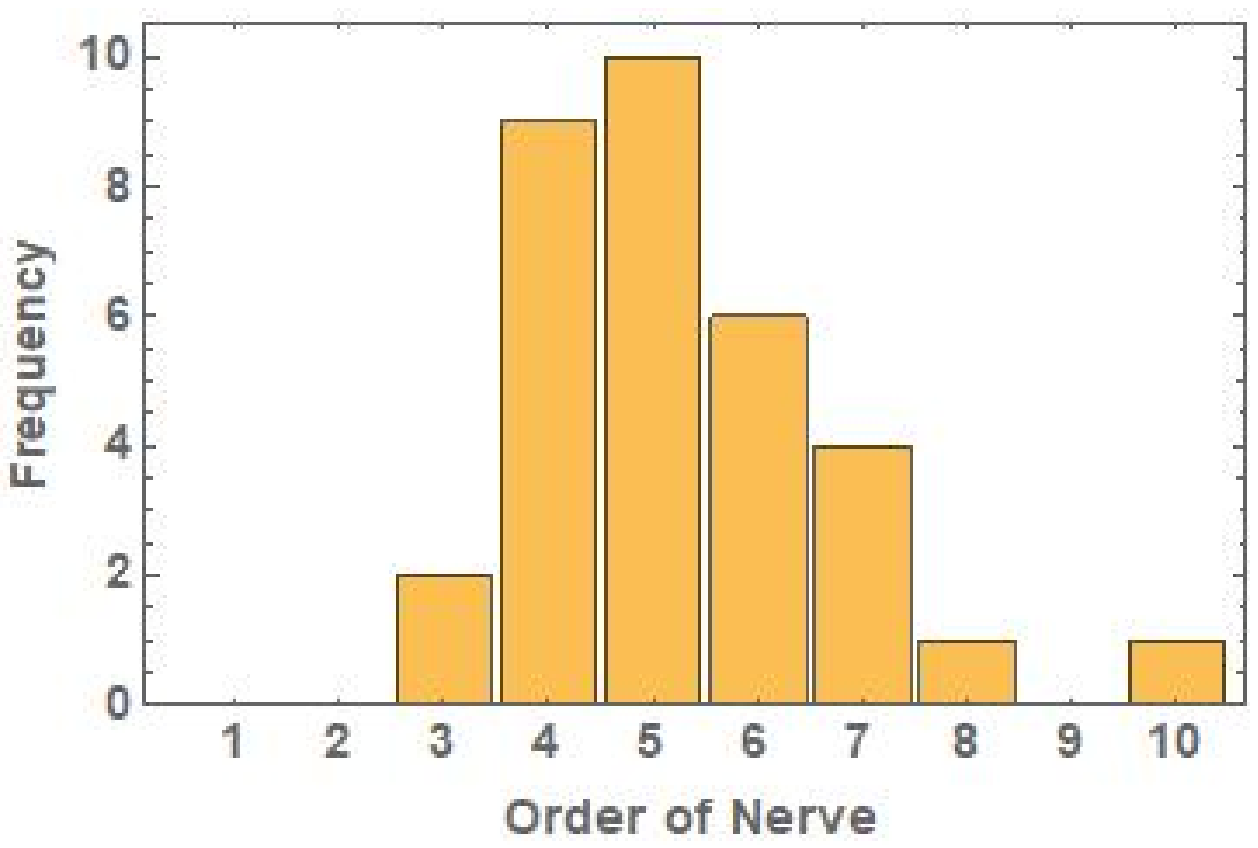}
\label{subfig:Boat1_nerve_order_graph}}
\hfil
\subfigure[Comparison of areas of triangle]
{\includegraphics[width=55mm]{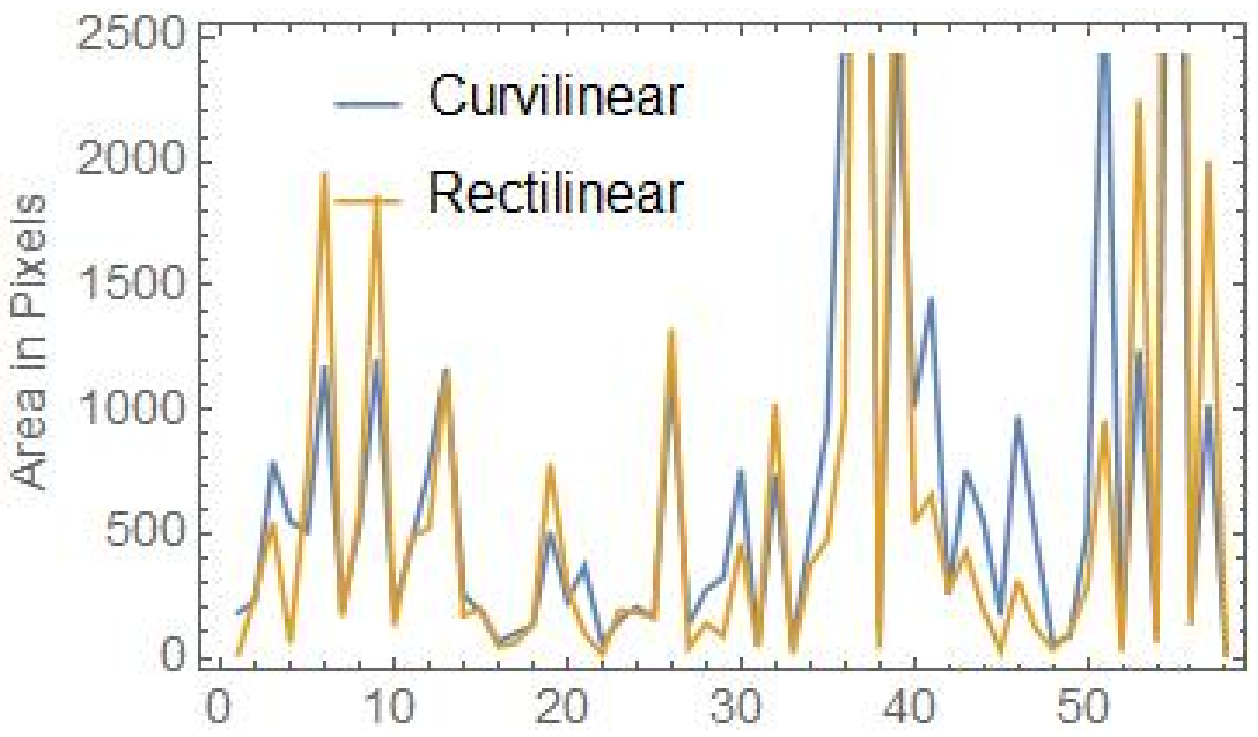}
\label{subfig:Boat1_area_rect_curv}}
\hfil
\subfigure[Comparison of areas of triangle]
{\includegraphics[width=55mm]{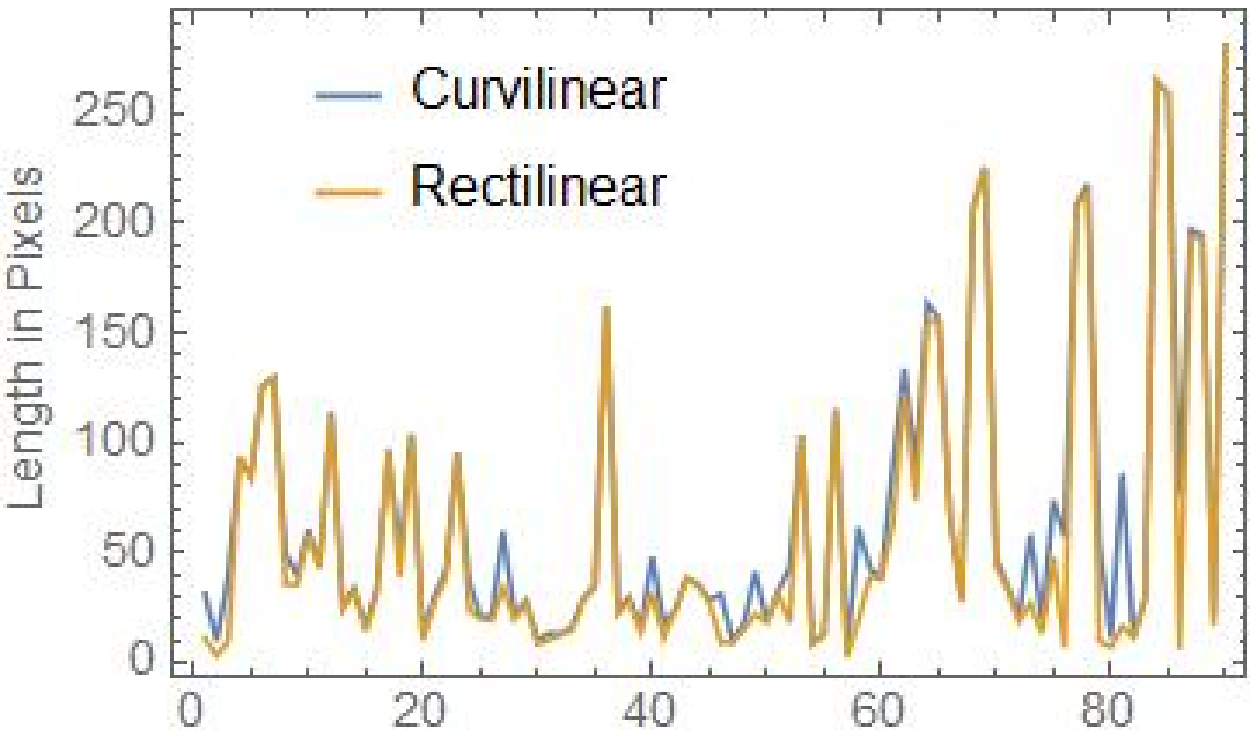}
\label{subfig:Boat1_length_rect_curv}}
\hfil
\subfigure[Area of nerves of different order]
{\includegraphics[width=55mm]{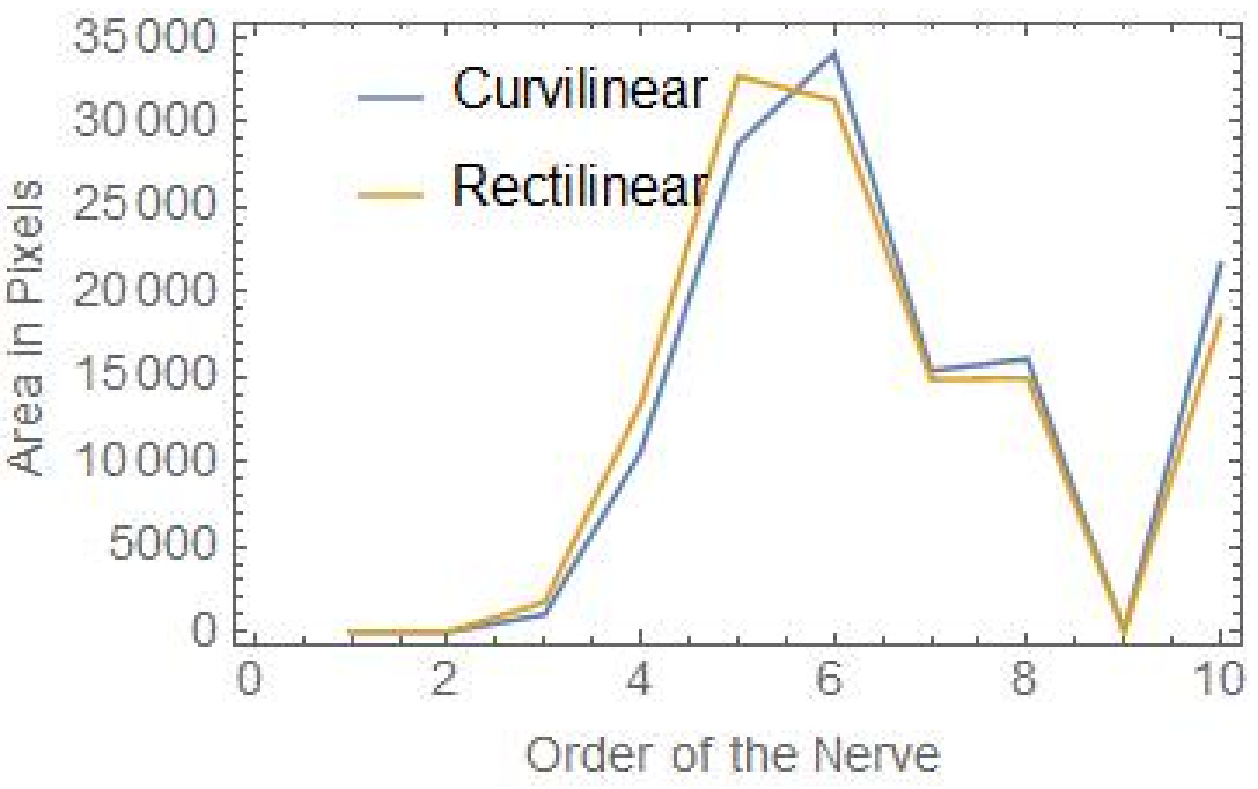}
\label{subfig:Boat1_nerve_order_area}}
\hfil
\caption{This figure features the image of a boat, and displays the rectilinear object space $\mathscr{O}_p^{rect}$ in Fig.~\ref{subfig:Boat1_rect_triang}. 
Fig.~\ref{subfig:Boat1_curv_triang} displays the curvilinear object space $\mathscr{O}_p^{curv}$ and the Fig.~\ref{subfig:Boat1_nerve_order_graph} represents the bar chart of the frequency of nerve complexes with a specific order. 
Fig.~\ref{subfig:Boat1_area_rect_curv} displays the comparison of areas of corresponding triangles, Fig.~\ref{subfig:Boat1_length_rect_curv} displays a comparison of the edge lengths,and Fig~\ref{subfig:Boat1_nerve_order_area} displays the comparison of area covered by nerves of different order in the rectilinear and curvilinear triangulations.}
\label{fig:Boat1_triangulations}
\end{figure}

After having discussed the image of the car, we are going to move on to the next image.
Now, we will discuss the image of a boat featured in Fig.~\ref{fig:Boat1_triangulations}.
We have discussed the results and the methodology of analysis in detail while discussing the image of the car(Fig.~\ref{fig:Car_triangulations}).
Now, we will only discuss the results obtained for this image.

This image is also a complex scene from the real world consisting of multiple regions of uniform intensity bounded by high contrast.
These include the boat, the ocean, and the mountains. 
Each of these regions could attract the majority of the sites and thus camouflage the object of interest, namely the boat.
Apart from these uniform regions, we have small high contrast regions e.g. the people.
These regions of high contrast are going to attract the sites and thus result in a contamination of the object space.
To keep our focus on the image and to test the viability of the method to extract an object with curved boundaries, we imply background removal to focus the sites on the region of interest.
This has done a pretty good job apart from confusing the shadow with the object, as shown in Fig.~\ref{subfig:Boat1_rect_triang}.
The same set of sites was used to generate the curvilinear triangulation displayed in Fig.~\ref{subfig:Boat1_curv_triang}.
It can be seen that even though the difference between the rectilinear and curvilinear triangulations is small, it is evident at curved boundaries, such as the swimming saftey tube.
The difference between the edge lengths between the two triangulations is barely noticeable as shown in Fig.~\ref{subfig:Boat1_length_rect_curv}.
There is a noticeable difference in the areas of the certain triangles as shown in Fig.~\ref{subfig:Boat1_area_rect_curv}.

\section{Geodesic based metric for shape approximation}
In the previous section we have studied the rectilinear($\mathscr{T}_{rect}$) and the curvilinear triangulations($\mathscr{T}_{curv}$) as tools to extract objects from digital images. 
We need to develop a measure for the quality of the approximation for each of the triangulations. 
We are interested in the topological and geometrical features of the object.
Thus, we need to develop methods which can compare the original and the approximated objects in terms of topological and geometrical features.
The geometrical features determine if the approximation matches the original object in vision and perception.
This is due to the fact that a circle, triangle and a square are topologically equivalent.
These shapes are distinguished in terms of their geometries.
Thus a topological space can have many geometries associated with it.

With this in view, the geometrical features gain importance in figuring out if the approximation matches the original object.
If the object spaces approximated by  the triangulations are the same as the original object, they are guaranteed to be homotopically equivalent.
This means they are same in terms of homology, connectedness and other homotopy invariants.
This is due to the fact that the object space $\mathscr{O}_p$ is homotopically equivalent to  the union of spoke chains($skchain$) from the Thm.~\ref{thm:homotopy_obj_space}.
Thus, the homotopy invariant properties of the unions of these spoke chains are the same as those of the original object.
The spoke chains can be extracted from the triangulations based on the definitions given in this paper.
Now, we only need to check that the geometrical features of the approximation and the original object match.

For this we need a method to efficiently compare the geometrical features of the approximation by triangulation and the original image.
We choose multiple geometric measures namely, maximum diameter, mean diameter and the area.
The assumption is that the object is not homogeneous in all orientations and hence the diameter would vary.
Objects such as a circle have the same diameter irrespective of the orientation.
We choose to compare the maximum and the average diameter. 
To calculate the diameters of the triangulation, the idea of a minimizing geodesic comes in handy.
A \emph{geodesic} is the shortest line connecting two points in an arbitrary manifold(a curved surface) and
is the generalization of a straight line.  Next, we introduce a minimizing geodesic.

Suppose, there is a curve $\gamma:I \rightarrow M$, which maps an interval $I$ to a metric space $M$.
$\gamma$ is a geodesic if there exists a constant $v \geq 0$, and a small neighborhood $J$ of $t \in I$ such that, $d(\gamma(t_1),\gamma(t_2))=v|t_1-t_2|$ for any $t_1,t_2 \in J$. 
In case of natural parameterization, i.e. $I=[0,1]$, of the curve the constant $v=1$.
If this equality holds for all $t_1,t_2 \in I$, then the curve $\gamma$ is called the \emph{minimizing geodesic}, which is the shortest path between $\gamma(0)=a$ and $\gamma(1)=b$.
The idea of geodesics has often been used in computer vision to extract boundaries of objects \cite{pi2007color}\cite{yang2016geodesic}, and to determine deformations between them\cite{bauer2014overview}\cite{srivastava2009elastic}.   
In this study we focus on the case of geodesics in a triangulation($\mathscr{T}$).
For this purpose, we view $\mathscr{T}$ as a weighted graph($\mathscr{G}$) with the edge lengths being assigned to each edge as its weight.

The graph for the rectilinear and the curvilinear triangulation only varies in the edge weights.
The graph is a metric space as we can define a notion of a distance between any two vertices.
The distance is defined as the sum of weights of the edges that are traversed in getting from one vertex to an other.
We are interested in the minimizing geodesics, we look for the shortest distances between the vertices.
This problem can be formulated as a special case of the minimum cost network flow problem\cite{strang1986introduction}.
This problem is defined over a directed graph $G(V,E)$, with $s \in V$ as the source and the $t \in V$ as the sink vertex.
Every edge $e_{ij} \in E$ has a capacity $c_{ij}$,flow $f_{ij}$ and cost $w_{ij}$.
The cost of flow along an edge $e_{ij}$ is $f_{ij}.w_{ij}$ and the required flow from $s$ to $t$ is denoted by $\eta$.
Let us define the following optimization problem:
\begin{equation}
\begin{aligned}
\underset{e_{ij} \in E}{\text{minimize}}  &\,J_{ij}=\sum_{e_{ij} \in E} w_{ij}.f_{ij} \\
subject\ to \\
\textbf{Capacity constriants:} &\;f_{ij} \leq c_{ij} \\
\textbf{Skew symmetric:} &\;f_{ij} = -f_{uv} \\
\textbf{Flow conservation:} &\; \sum_{j \in V} f_{ij}=0\ for\ all\ u\neq s,t \\
\textbf{Required flow:} &\; \sum_{j \in V} f_{sj}=\eta\ and \sum_{j \in V} f_{jt}=\eta.
\end{aligned}
\end{equation}
Now, we modify this problem to yield the optimization problem that we need to solve in order to get the shortest distances between the vertices of the graph.
We assume that the graph is undirected, all the flows in the graph are $f_{ij}=1$, the capacities are infinite and the cost $w_{ij}$ is assumed to be the weight of the edge.
The problem is solved repeatedly for all the pairs of vertices in $V$ as source($s$) and sink($t$), to obtain the optimum value of $J_{ij}$.
This is the length of the minimizing geodesic in the graph $G(V,E)$ connecting the respective source and sink vertices.   
Thus we can calculate the distance matrix $D(G)$, in which the $ij$-th element $d_{ig}$ gives the shortest distance between the nodes $i$ and $j$.
The maximum value in this matrix corresponds to the graph diameter($gdia$).

Now, that we have detailed the notion of distance in graph as an optimization problem we need a method to solve it.
The method used is the well known Dijkstra's algorithm\cite{dijkstra1959note} detailed in Alg.\ref{alg:dijkstra}.
It returns a vector containing the minimum distances from a given source node to all the nodes in the graph.
This vector can be considered a row of the distance matrix $D(G)$, and thus iterating over all the vertices in the graph would give us the whole matrix.
We will explain the working of the Dijkstra's algorithm using an example.
\begin{example}\label{exm:dijkstra}
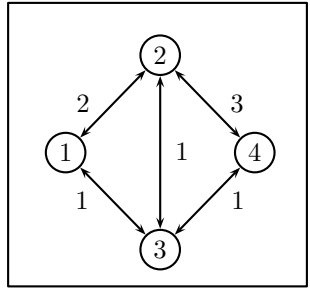
\begin{figure}
\begin{pspicture}(-0.5,-0.5)(3.5,3.3)
\psframe(-0.5,-0.5)(3.5,3.3)
$ 
\psmatrix[colsep=0.7cm,rowsep=0.7cm,mnode=circle] 
&2\\
1&&4 \\
&3
\ncline{<->}{2,1}{1,2}<{2}
\ncline{<->}{2,1}{3,2}<{1}
\ncline{<->}{1,2}{2,3}>{3}
\ncline{<->}{1,2}{3,2}>{1}
\ncline{<->}{3,2}{2,3}>{1}
\endpsmatrix 
$
\end{pspicture}
\caption{A sample graph used to demonstrate Dijkstra's Algorithm(Alg.~\ref{alg:dijkstra}).}
\label{fig:dijkstra}
\end{figure}
Let us take the example of the graph shown in Fig.~\ref{fig:dijkstra}.
We will use this simple graph to illustrate the functioning of Dijkstra's algorithm detailed in Alg.~\ref{alg:dijkstra}.
We take the vertex $1$ as the source node,initiate the $S=\phi$, $Q=\{2,3,4\}$ and the vector $\textbf{d}=[0,\infty,infty,infty]$.
As the vertex $3$ is the closest to vertex $1$, we proceed to that node and update $S=\{3\}$,$\textbf{d}=[0,\infty,1,\infty]$ and $Q=\{2,4\}$.
Now, as the the neighborhood of vertex $3$ is $N(3)=\{1,2,4\}$, we update the $d_1=min(d_1,d_3+1)=min(0,2)=0$, the $d_2=min(d_2,d_3+1)=min(\infty,2)=0$ and the $d_4=min(d_4,d_3+1)=min(\infty,2)=0$.
The new $\textbf{d}=[0,2,1,2]$.
We see that vertices in $Q=\{2,4\}$ have the same distance from source vertex, we can go to either one of them.
We visit them in order and go to vertex $2$.
Now the updated values of $S=\{3,2\}$ and $Q=\{4\}$ and $N(2)=\{1,4,3\}$.
We update $d_1=min(d_1,d_2+2)=min(0,4)=0$, $d_3=min(d_3,d_2+1)=min(1,3)=1$ and  $d_4=min(d_4,d_2+3)=min(2,5)=2$, to obtain the new $\textbf{d}=[0,2,1,2]$.
Now we proceed to vertex $4$ and update the values of $S=\{3,2,4\}$,$Q=\phi$ and $N(4)=\{2,3\}$. 
Then we update $d_2=min(d_2,d_4+3)=min(2,5)=2$ and $d_3=min(d_3,d_4+1)=min(1,3)=1$.
This gives us the new $\textbf{d}=[0,2,1,2]$, and as $Q= \phi$ we terminate the iterations and return this vector as the list of shortest distances from vertex $1$ to every other vertex in the graph.  

Repeating this process for every node and using the vectors returned as the rows we have the complete distance matrix of the graph as follows:
\begin{align*}
D_{ij}=
\begin{pmatrix}
0&2&1&2\\
2&0&1&2\\
1&1&0&1\\
2&2&1&0
\end{pmatrix}
\end{align*}
Each entry of the $D_{ij}$ corresponds to the smallest distance between the vertices $i$ and $j$.
We can see that the graph diameter $gdia$, the maximal entry of $D_{ij}$ is $2$.
\eot
\end{example}

\begin{algorithm}
\caption{Dijkstra's Algorithm}
\label{alg:dijkstra}
\SetKwData{Left}{left}
\SetKwData{This}{this}
\SetKwData{Up}{up}
\SetKwFunction{Union}{Union}
\SetKwFunction{FindCompress}{FindCompress}
\SetKwInOut{Input}{Input}
\SetKwInOut{Output}{Output}
\SetKwComment{tcc}{/*}{*/}
\Input{Graph $G(V,E)$, Source vertex $s$}
\Output{Minimizing geodesic distance vector $\textbf{d}=[d_1,d_2,\cdots,d_n]$ , where $n=|V|$}
\emph{$\textbf{d}_s \leftarrow 0$}\;
\ForEach{$v \in V-\{s\}$}{
$\textbf{d}_v=\infty$\;}
\emph{$S \leftarrow \phi$; $Q=V-{S \cup {s}}$}\;
/* The set $S$ is the visited nodes and the $Q$ is the rest of the nodes in the Graph*/\;
\While{$Q \neq \phi$}{
\emph{$u \leftarrow mindist(Q,\textbf{d})$}\;
/* The function $mindist$ returns the vertex from $Q$ corresponding to the smallest value in $\textbf{d}$ */
\emph{$S \leftarrow S \cup {u}$; $Q=V-{S \cup {s}}$}\;
\ForEach{$v \in N(u)$}{
$d_v=min(d_v,d_u+w_{uv})$ \;
/* This updates the new shortest distances,and $w_{uv}$ is the weight of edge $e_{uv} \in E$. */
}
/* $N(u)$ contains the neighboring vertices of $u$*/
}
\end{algorithm}

\begin{algorithm}
\caption{Extracting Geodesic-based Features from Triangulations}\label{alg:geodesic_metrics_triang}
\SetKwData{Left}{left}
\SetKwData{This}{this}
\SetKwData{Up}{up}
\SetKwFunction{Union}{Union}
\SetKwFunction{FindCompress}{FindCompress}
\SetKwInOut{Input}{Input}
\SetKwInOut{Output}{Output}
\SetKwComment{tcc}{/*}{*/}
\Input{Curvilinear triangulation $\mathscr{T}_{curv}$,Rectilinear triangulation $\mathscr{T}_{rect}$}
\Output{Graph diameter of curvilinear triangulation $gdia(\mathscr{T}_{curv})$, Graph diameter of rectilinear triangulation $gdia(\mathscr{T}_{rect})$, maximal diameter of curvilinear triangulation $dia_{max}(\mathscr{T}_{curv})$, maximal diameter of rectilinear triangulation $dia_{max}(\mathscr{T}_{rect})$, mean diameter of curvilinear triangulation $\tilde{dia}(\mathscr{T}_{curv})$, mean diameter of rectilinear triangulation $\tilde{dia}(\mathscr{T}_{rect})$, cover area of curvilinear triangulation $ar(\mathscr{T}_{curv})$, cover area of rectilinear triangulation $ar(\mathscr{T}_{rect})$}
\emph{$\mathscr{T}_{rect} \longmapsto BoundaryVertices$; $BV \leftarrow BoundaryVertices$}\;
/*$BV$ contains the boundary vertices of $\mathscr{T}_{rect}$, which are the same as that of the boundary vertices of $\mathscr{T}_{curv}$ */ \;
\emph{$V_{rect} \leftarrow Vertices(\mathscr{T}_{rect})$; $E_{rect} \leftarrow Edges(\mathscr{T}_{rect})$; $F_{rect} \leftarrow Triangles(\mathscr{T}_{rect})$}\;
\emph{$\mathscr{T}_{rect} \longmapsto EdgeLengths$; $edgelengths_{rect} \leftarrow EdgeLengths$}\;
\emph{$\mathscr{G}_rect \leftarrow Graph(V_{rect},E_{rect},edgelengths_{rect})$}\;
/*$\mathscr{G}_{rect}$ is the undirected graph with edges weighted by the corresponding edge lengths in the $\mathscr{T}_{rect}$*/\;
\emph{$\mathscr{G}_{rect} \longmapsto GraphDistanceMatrix$; $gdia(\mathscr{T}_{rect}) \leftarrow Max(GraphDistanceMatrix)$}\;
/*$GraphDistanceMatrix$ is the matrix of shortest distances between all the nodes in the graph */\;
\emph{$BV \longmapsto Centroid$; $\{c_x,c_y\} \leftarrow Centroid$}\;
\ForEach{$vertex=\{v_x,v_y\} \in BV$}{
\eIf{$v_x > c_x$}{
\emph{$vtxrghtctr \leftarrow vertex$}\;}{
\emph{$vtxlftctr \leftarrow vertex$};
}
\eIf{$v_y > c_y$}{
\emph{$vtxabvctr \leftarrow vertex$}\;}{
\emph{$vtxblwctr \leftarrow vertex$};
}
}
\emph{$t_{h} \leftarrow Tuples(vtxrghtctr,vtxlftctr)$; $t_{v} \leftarrow Tuples(vtxabvctr,vtxblwctr)$}\;
\emph{$vrtx_{pairs} \leftarrow t_{horiz} \cup t_{vert}$; $dia:= \phi$}\;
\ForEach{$\{u,v\} \in vrtx_{pairs}$}{
\emph{$d \leftarrow GraphDistance(u,v)$; $dia \cup d$}\;
/* $GraphDistance(u,v)$, calculates the shortest distance between the nodes $u$ and $v$*/\;
}
\emph{$tilde(\mathscr{T}_{rect}) \leftarrow Max(dia) $; $\tilde{dia}(\mathscr{T}_{rect}) \leftarrow Mean(dia) $}\;
\emph{$ar(\mathscr{T}_{rect}):= \phi$}\;
\ForEach{$tr \in F_{rect}$}{
\emph{$ar \leftarrow Area(tr)$; $ar(\mathscr{T}_{rect}) \cup ar$}\;
}
Repeat the same for $\mathscr{T}_{curv}$\;
\end{algorithm}

To completely characterize the shape of the triangulation by computing multiple diameters, we compute the boundary vertices($BV$) of the triangulation.
We then compute the centroid of these points which will help us in partitioning these vertices in to disjoint groups.
If the vertices lie above the centroid they are put into the set $vtxabvctr$ or into $vtxblwctr$ otherwise.
If the vertices lie to the left of the centroid, they are put into $vtxlftctr$ or into $vtxrghtctr$ otherwise.
It must be noted that $vtxabvctr \cup vtxblwctr = vtxlftctr \cup vtxrghtctr =BV$.
We construct pairs of vertices by picking one from the set $vtxabvctr$ and the other from the set $vtxblwctr$. 
This gives us the diameters that run across the triangulation in the vertical direction.
We repeat the same process using the sets $vtxlftctr$ and $vtxrghtctr$, giving us the diameters running across the triangulation in the horizontal direction.
We can then calculate the distances between these points from the distance matrix previously calculated.
The maximum of these distance is called the maximal diameter($dia_{max}$), and the mean is called the mean diameter($\tilde{dia}$).
Moreover, the area of the triangulation is calculated as the sum of the individual triangles. 
This is quite straightforward for the rectilinear case, while for the curvilinear case it is estimated using polygonal approximation.
If we choose to keep multiple components in the filtering step, add the features for the components to cater for the extreme scenario.
The method is detailed in Alg.~\ref{alg:geodesic_metrics_triang}.


\begin{algorithm}[!ht]
\caption{Extracting Geodesic-based Features from an Original Image}\label{alg:geodesic_metrics_orig}
\SetKwData{Left}{left}
\SetKwData{This}{this}
\SetKwData{Up}{up}
\SetKwFunction{Union}{Union}
\SetKwFunction{FindCompress}{FindCompress}
\SetKwInOut{Input}{Input}
\SetKwInOut{Output}{Output}
\SetKwComment{tcc}{/*}{*/}
\Input{Cropped image removing irrelevant objects $\mathscr{Img}_{crop}$}
\Output{Maximal diameter of the object in image $dia_{max}(\mathscr{Img}_{crop})$, Mean diameter of the object in image $\tilde{dia}(\mathscr{Img}_{crop})$, Convex Hull approximation of the area of object in image $ar(\mathscr{Img}_{crop})$}
\emph{$\mathscr{Img}_{crop} \leftarrow RemoveBackGroung(\mathscr{Img}_{crop})$}\;
/* The function $RemoveBackground$ is used to remove everything from the image except the object under consideration */\;
\emph{$\mathscr{Img}_{crop} \longmapsto MorphologicalComponents$}\;
\emph{$cmp \leftarrow MorphologicalComponents$}
\emph{$cmp \longmapsto LargestComponent$; $cmp \leftarrow LargestComponent$}\;
\emph{$cmp \longmapsto Diameters$; $d \leftarrow Diameters$}\;
/* $Diameters$ is an array containing all the diameters of the component in $cmp$. Each diameter is calculated in a different orientation */\;
\emph{$dia_{max}(\mathscr{Img}_{crop}) \leftarrow Max(Diameters)$}\;
\emph{$\tilde{dia}(\mathscr{Img}_{crop}) \leftarrow Mean(Diameters)$}\;
\emph{$cmp \longmapsto ConvexHull$; $cvxhul \leftarrow ConvexHull$}\;
\emph{$cvxhul \longmapsto ar$; $ar(\mathscr{Img}_{crop}) \leftarrow ar$}\;
\end{algorithm}

Once, we have the geometric features of the triangulation, we need to compute equivalent features for the original object.
To restrict our focus to the object under consideration, we manually crop the image.
On the cropped image, we use background removal techniques, followed by morphological component analysis to detect the objects in the cropped image.
The assumption is that, the object under consideration is the only prominent object in the image at this stage.
We filter out all the components with a few pixels relative to the whole image.
This gives us the object under consideration as a connected component.
Once we have the connected component we can calculate the diameters in different orientations and thus figure out the maximal and the mean diameter.
We estimate the area as the area of the convex hull of the component. 
This is bound to err in case of non-convex objects but it is sometimes a necessary evil, as the object can have several sub-regions.
These sub-regions are detected as separate components which may be filtered out in pre-processing.
Thus, a convex hull estimate becomes useful.
The convex hull estimate of the area is a convenient trade off.
The details of this methodology are specified in Alg.~\ref{alg:geodesic_metrics_orig}.


After we have calculated the features on both the original object and its approximation via triangulations, we need to define a measure to compare them.
As we are calculating different features, it only makes sense to compare the equivalent features with themselves. 
We can sum the difference across the features, but combining the features first and then calculating the distance would not be advisable.
As we are comparing two objects we use a relative scale.
The measures that we define are:
\begin{subequations}\label{eqn:relative_difference}
\begin{align}
rd_{gdia}& =\frac{gdia(approx)-gdia(orig)}{gdia(orig)} \\
rd_{dmax}& =\frac{dia_{max}(approx)-dia_{max}(orig)}{dia_{max}(orig)} \\
rd_{\tilde{d}}& =\frac{\tilde{dia}(approx)-\tilde{dia}(orig)}{\tilde{dia}(orig)} \\
rd_{ar}& =\frac{ar(approx)-ar(orig)}{ar(orig)}
\end{align}
\end{subequations}
In these equations the terms $rd_{gdia},rd_{dmax},rd_{\tilde{d}},rd_{ar}$ are the relative difference in the graph diameter, the maximal diameter and the mean diameter.
These difference are calculated between the approximation(via triangulation) and the original, relative to the original object.
These give us a measure of the geometrical similarity of the original object and its approximation by the triangulation.
Once we have these measure we can either treat them separately, with each being a measure of a specific quality of the approximation.
Thus, we could consider them as a $4$d-vector $\textbf{rd}=[rd_{gdia},rd_{dmax},rd_{\tilde{d}},rd_{ar}] \in \mathbb{R}^4$.


\begin{table}
\centering
\caption{This table shows the values of measures defined in Eq.~\ref{eqn:relative_difference}}
\label{tab:rd_metrics}
\begin{tabular}{ccccc}
\multirow{2}{*}{Relative Difference} & \multicolumn{2}{c}{Car(Fig.~\ref{fig:Car_triangulations})} & \multicolumn{2}{c}{Boat(Fig.~\ref{fig:Boat1_triangulations})} \\ \cline{2-5}
   &  $\mathscr{T}_{rect}$&$\mathscr{T}_{curv}$ & $\mathscr{T}_{rect}$& $\mathscr{T}_{curv}$ \\ \hline
 $rd_{gdia}$ & $-0.069$   & $0.051$         & $-0.245$          & $-0.165$    \\ \hline
  $rd_{dmax}$ & $-0.069$  &  $0.051$        & $-0.820$          &$-0.808$     \\ \hline
  $rd_{\tilde{d}}$& $-0.372$  &$-0.308$          & $-0.894$ & $-0.884$         \\ \hline
  $rd_{ar}$&   $-0.423$        &  $-0.433$ &  $0.009$         &  $-0.002$       
\end{tabular}
\end{table}

\begin{figure}
\subfigure[Graph diameter in $\mathscr{T}_{rect}$]
{\includegraphics[width=35mm]{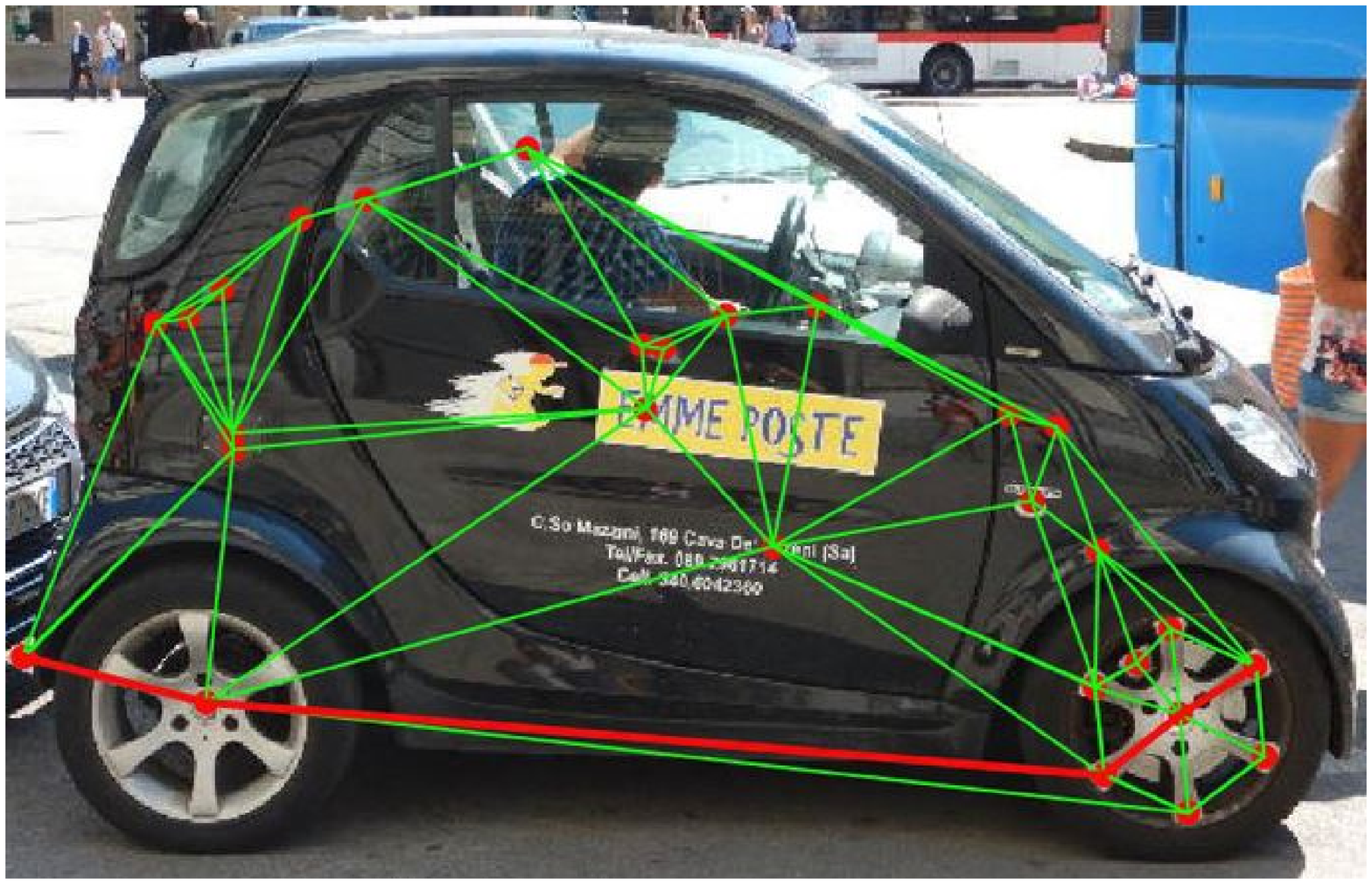}
\label{subfig:Car_rect_gdiag}
}\hfil
\subfigure[Graph diameter in $\mathscr{T}_{curv}$]
{\includegraphics[width=35mm]{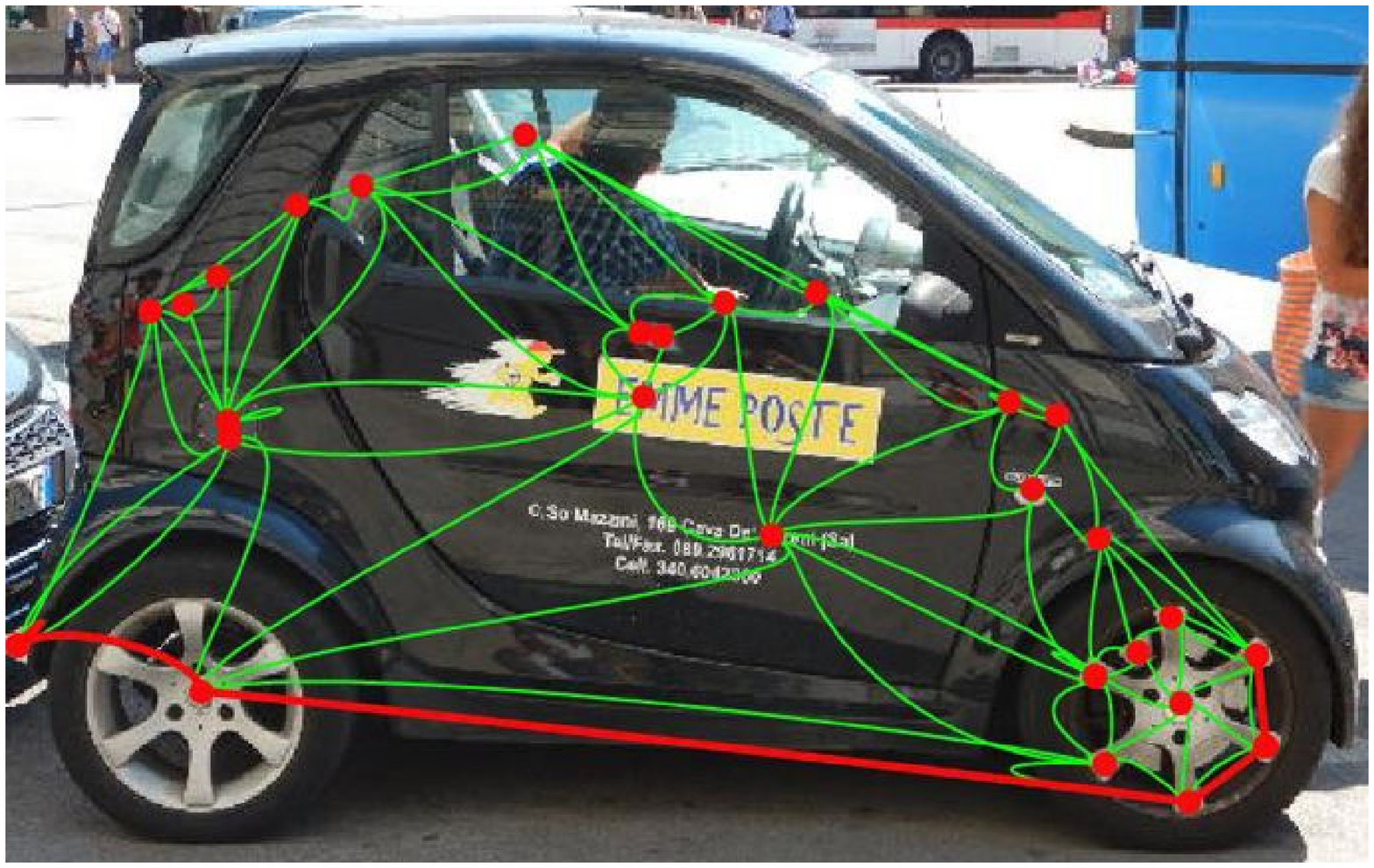}
\label{subfig:Car_curv_gdiag}
}\hfil
\subfigure[Graph diameter in $\mathscr{T}_{rect}$]
{\includegraphics[width=35mm]{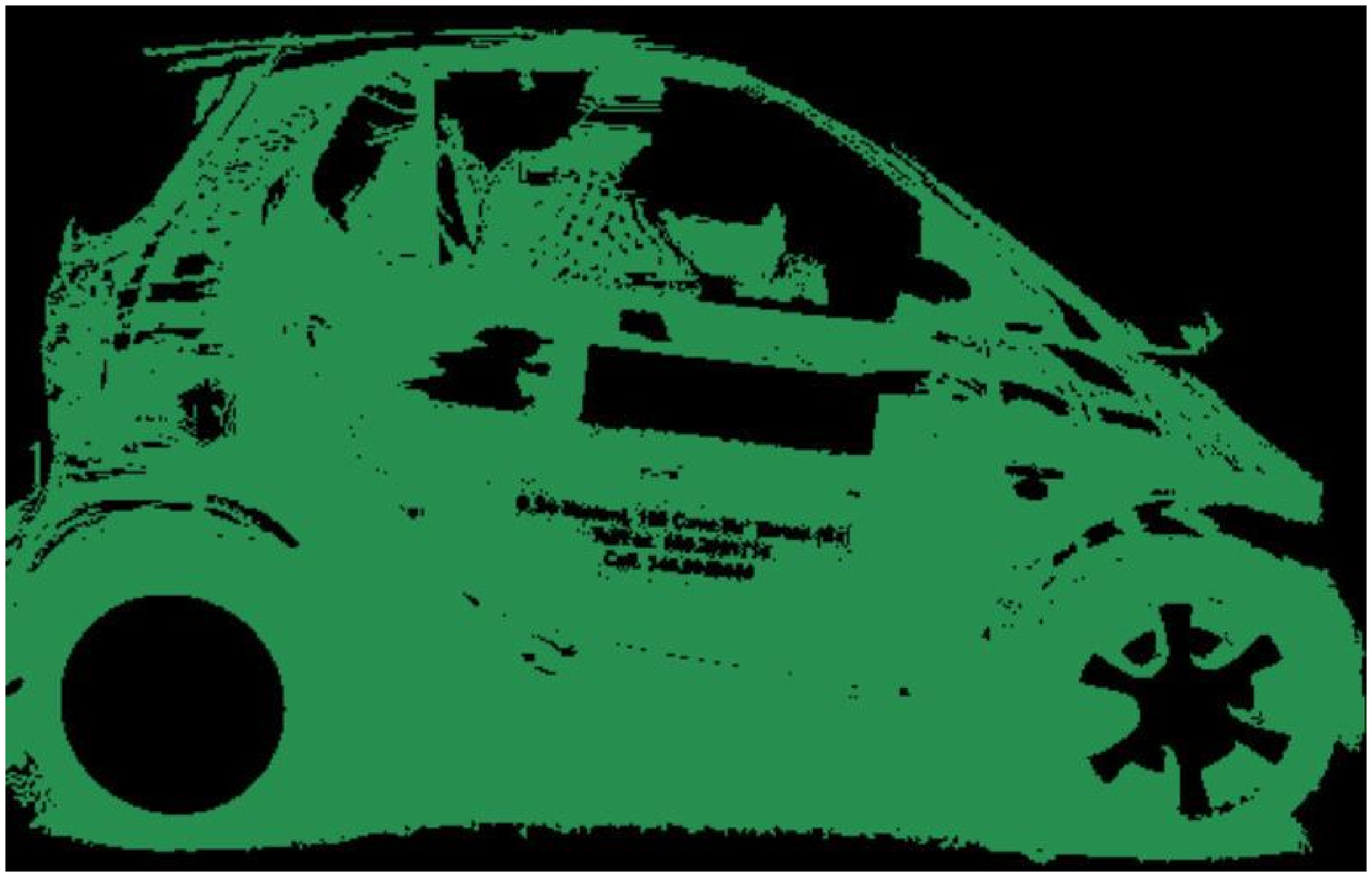}
\label{subfig:Car_cropped_cmp}
}\hfil
\caption{This figure shows the graph diameter or the maximal geodesic in the graph for the rectilinear(Fig.~\ref{subfig:Car_rect_gdiag}) and the curvilinear(Fig.~\ref{subfig:Car_curv_gdiag}) triangulations. Moreover, the morphological components used to extract the features of the original object are shown in Fig.~\ref{subfig:Car_cropped_cmp}}
\label{fig:Car_geodesic}
\end{figure}

This measure gives us a method to compare the different methods of approximating the shape of objects in the digital image.
We compute the vector $\textbf{rd}$ for each approximation, and then compare them.
There can be many methods of comparison but here we will only detail two of them.
Suppose $\textbf{rd}$ and $\hat{\textbf{rd}}$ are the relative difference vectors for two different approximation methods.
Then we can define the the difference between the two approximation methods as a difference defined in the vector space:
\begin{align}
\textbf{rd}-\hat{\textbf{rd}}=
\begin{pmatrix}
rd_{gdia}-\hat{rd_{gdia}}\\
rd_{dmax}-\hat{rd_{dmax}}\\
rd_{\tilde{d}}-\hat{rd_{\tilde{d}}}\\
rd_{ar}-\hat{rd_{ar}}
\end{pmatrix}
.
\label{eqn:vector_metric} 
\end{align}
This is simple linear algebra and would yield a $4$d vector where each component yields the difference in the corresponding components of the relative difference vectors.
Another, view on this problem is to consider the $4$d vector space(over the field $\mathbb{R}$) of relative difference vectors as a metric space and use the notion of $p$-norms which induces a metric on it.
For, this purpose we present the definition of a $p$-norm:
\begin{align}
(\sum_i{|\textbf{rd}(i)-\hat{\textbf{rd}}(i)|^p})^{\frac{1}{p}}
\label{eqn:norm_metric}
\end{align} 
where $p$ is any real number greater than $1$.
Some familiar choices of $p$ are $1$ for Manhattan distance, $2$ for Euclidean distance and $\infty$ for the Chebyshev norm.
Each of these has its own pros and cons.
The Manhattan distance is robust to outliers which effect the more intuitive notion of Euclidean distance.
The Chebyshev distance only takes into account the components having the largest difference.


\begin{figure}
\subfigure[Graph diameter in $\mathscr{T}_{rect}$]
{\includegraphics[width=35mm]{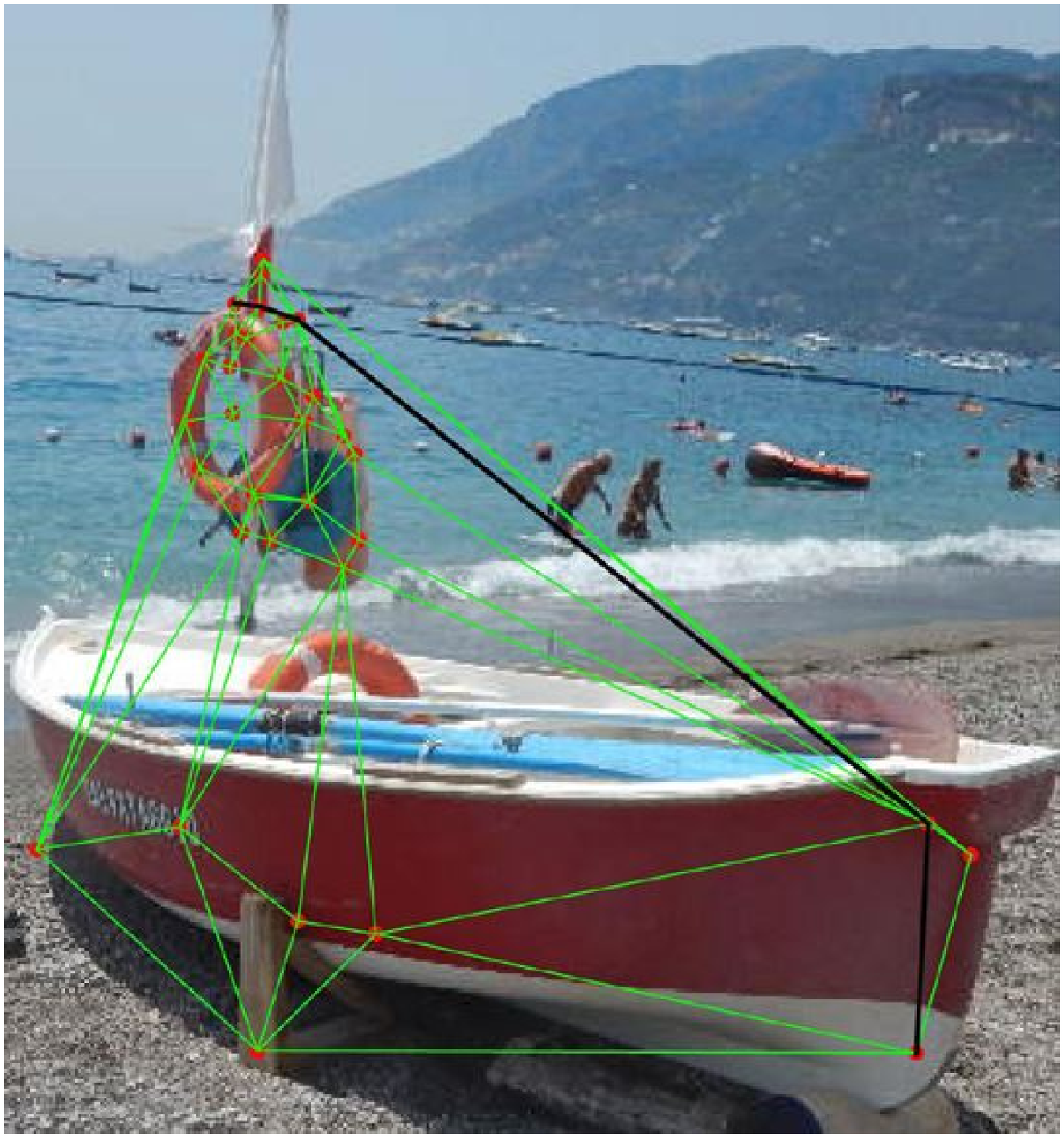}
\label{subfig:Boat1_rect_gdiag}
}\hfil
\subfigure[Graph diameter in $\mathscr{T}_{curv}$]
{\includegraphics[width=35mm]{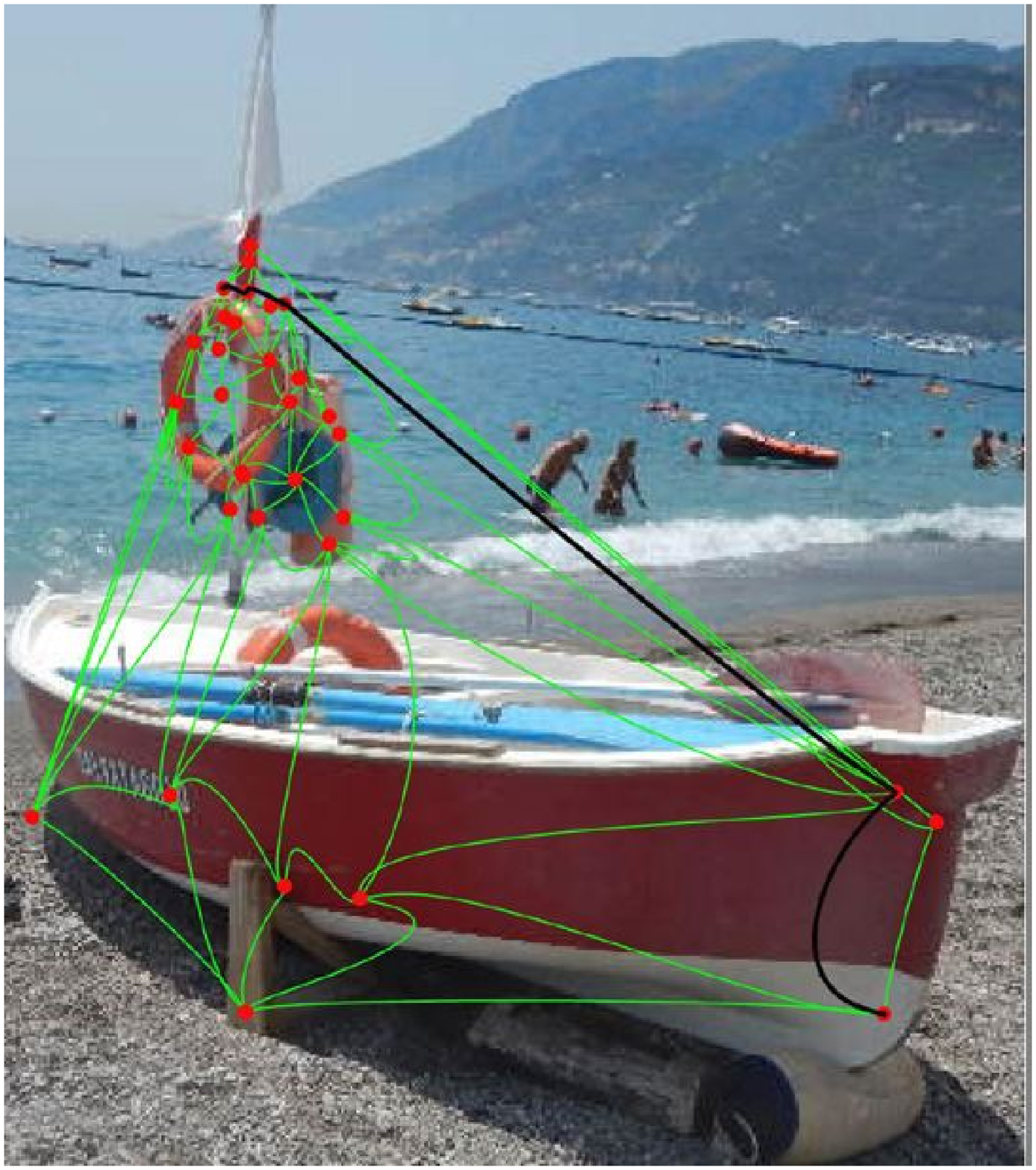}
\label{subfig:Boat1_curv_gdiag}
}\hfil
\subfigure[Graph diameter in $\mathscr{T}_{rect}$]
{\includegraphics[width=35mm]{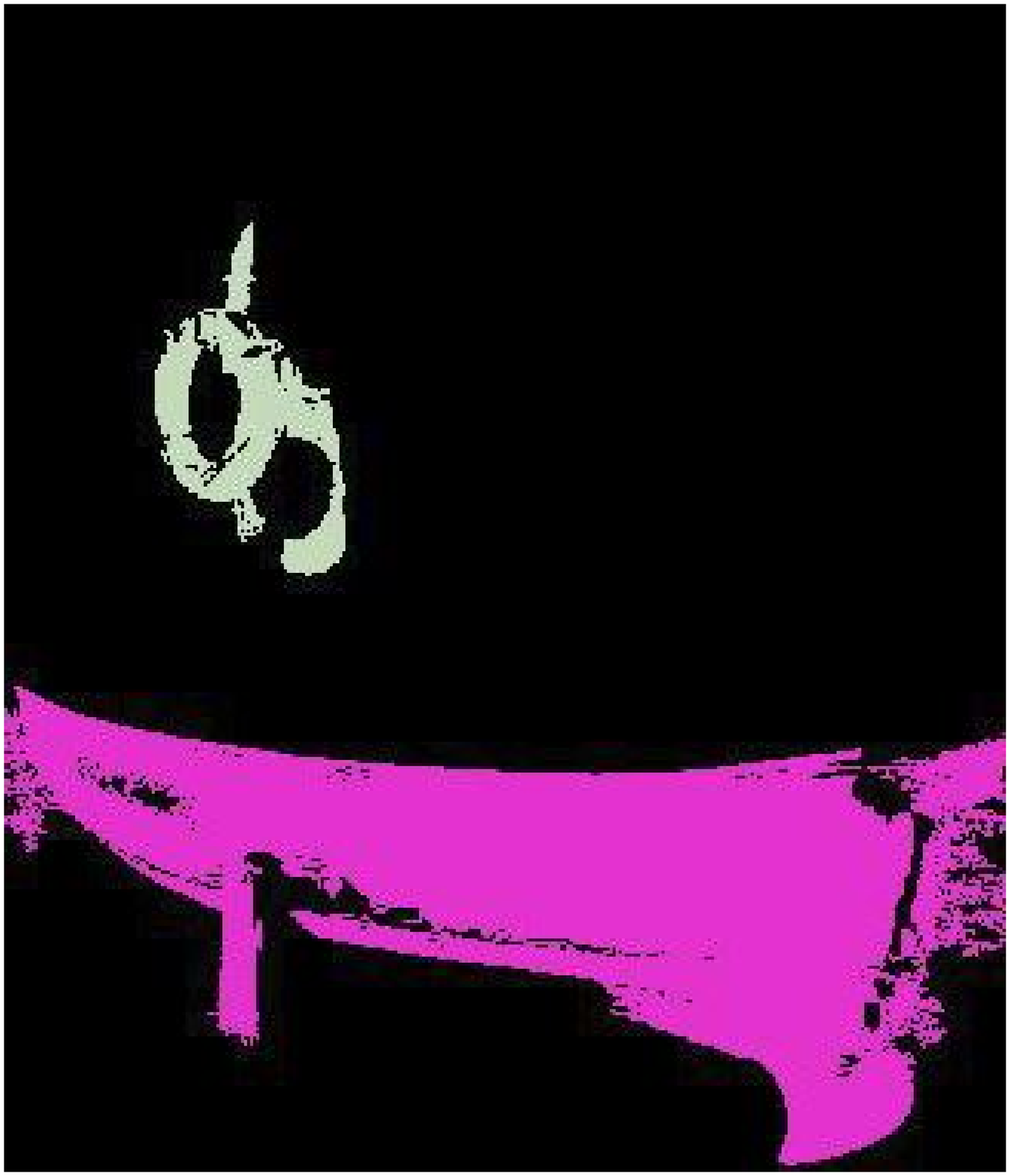}
\label{subfig:Boat1_cropped_cmp}
}\hfil
\caption{This figure shows the graph diameter or the maximal geodesic in the graph for the rectilinear(Fig.~\ref{subfig:Car_rect_gdiag}) and the curvilinear(Fig.~\ref{subfig:Car_curv_gdiag}) triangulations. Moreover, the morphological components used to extract the features of the original object are shown in Fig.~\ref{subfig:Boat1_cropped_cmp}}
\label{fig:Boat1_geodesic}
\end{figure}

Next, consider the measures proposed in Eq.~\ref{eqn:relative_difference}, for the image of a car(Fig.~\ref{fig:Car_triangulations}) and a boat(Fig.~\ref{fig:Boat1_triangulations}).
The graph diameter or the largest geodesic in the graph is calculated and displayed for both the rectilinear(Figs.~\ref{subfig:Car_rect_gdiag} \& \ref{subfig:Boat1_rect_gdiag}) and the curvilinear(Figs.~\ref{subfig:Car_curv_gdiag} \& \ref{subfig:Boat1_curv_gdiag}) triangulations.
Moreover, the connected components used to calculate the geometrical features of the original object are shown for both the car(Fig.~\ref{subfig:Car_cropped_cmp}) and the boat(Fig.~\ref{subfig:Boat1_cropped_cmp}).
Using the Algs.~\ref{alg:geodesic_metrics_triang} \& \ref{alg:geodesic_metrics_orig} and the Eq.~\ref{eqn:relative_difference} we calculate the values of the measures given in Tab.~\ref{tab:rd_metrics}.

We can see from the values of $\textbf{rd}$ that the curvilinear triangulation($\mathscr{T}_{rect}$) is better at approximating the object than the rectilinear triangulation($\mathscr{T}_{curv}$).
This is obvious from the fact that values of $\textbf{rd}$ vector are smaller for the $\mathscr{T}_{curv}$ than those for $\mathscr{T}_{rect}$, except for $rd_{ar}$.
This is due to the fact that sites are inside the object as shown in Fig.~\ref{subfig:Car_rect_triang}. 
It is evident from Thm.~\ref{thm:curv_subset_rect} that the curvilinear object space is a subset of rectilinear object space.
Thus, the area covered by the $\mathscr{T}_{curv}$ is smaller than the area covered by $\mathscr{T}_{rect}$, leading to a bigger difference. 
It can be seen that both the $rd_{\tilde{d}}$ and the $rd_{ar}$ are significantly large. 
This indicates that the even though the $rd_{dmax}$ and the $rd_{gdia}$ are small the shape is approximated by either the triangulations is not accurate.
It is evident from the approximation is very close to the original in at least one orientation. 
This is also visible in Fig.~\ref{fig:Car_triangulations} as the approximations are very close in the length but smaller than the original object in height .

Let us move on to the image of the car.
The original image for this object contains two connected components as shown in Fig.~\ref{subfig:Boat1_cropped_cmp}.
All the geometrical features for this object are calculated as the summation of the features for the individual components.
From the values of the $\textbf{rd}$(Tab.~\ref{tab:rd_metrics}), it can be observed that that the curvilinear triangulation($\mathscr{T}_{curv}$) is better than the rectilinear triangulation($\mathscr{T}_{rect}$) at approximating the original object.
The values of all the components of the  $\textbf{rd}$ vector are smaller for the curvilinear case.
Here it can be observed that the values are very good for the $rd_{ar}$. 
The values of $rd_{dmax}$ and $rd_{\tilde{d}}$ are very large for both $\mathscr{T}_{rect}$ and $\mathscr{T}_{curv}$.
This is due to uneven distribution of the boundary vertices($BV$) in the triangulation, shown in Fig.~\ref{fig:Boat1_triangulations}.
Due to this majority of the diameters in the triangulation are very small with a few big values.
This skews the mean.
Moreover, due to three points on the boundary being co-linear one of the end vertex of the geodesic near the mast is not included in $BV$.
This leads to the $dia_{max}$ being significantly different from the $gdia$.
Thus, the $rd_{dmax}$ is very large as the $dia_{max}(orig)$ is close to the the $gdia$ and very different from $dia_{max}(approx)$.
The fact that $rd_{ar}$ is small can be observed from the Fig.~\ref{fig:Boat1_triangulations}.

Let us observe that we can construct a mechanism for comparing the $\textbf{rd}$ for different approximation techniques.
We can calculate the difference as a vector.
Let us calculate the difference between the relative difference vectors for the approximation of car by $\mathscr{T}_{rect}$ and $\mathscr{T}_{curv}$.
Using Eq.\ref{eqn:vector_metric} we can see that $\textbf{rd}_{rect}-\text{rd}_{curv}=[-0.120, -0.120,-0.064,0.010]$.
Moreover, we can calculate the difference as a $p$-norm.
As an example we can calculate the $2$-norm as an example.
Using Eq.\ref{eqn:norm_metric} we can see that $\|\mathscr{T}_{rect}$-$\mathscr{T}_{curv}\|_p=0.182$.
In this section, we have defined several novel measures(Eq.~\ref{eqn:relative_difference}) to quantify the performance of the object approximation algorithm.
Moreover, we have a defined a metric on the measure space(Eq.~\ref{eqn:norm_metric}) to compare the performance of different approximation algorithms.  
 
\section{Concluding Remarks}
This paper introduces the geodesics of triangulated image object spaces as a means of measuring the correspondence between the shape on an image object and its approximation. This approach to measuring image object
shapes is simplified by an extension of Alexandroff nerves called spoke complexes.  We show that a nerve complex in a rectilinear triangulation of an object space is homotopically equivalent to a corresponding nerve in a curvilinear triangulation of the same space (see Theorem~\ref{thm:homotopical_equiv_rect_curv}).
Rectilinear and curvililnear triangulations provides us with useful tools in extracting objects from digital
images.  Finally, the geodesics of triangulated object spaces lead to four measures of the quality of the object shapes represented by the triangulations.



\bibliographystyle{amsplain}
\bibliography{NSrefs}

\end{document}